\def\bC {\mathbf{C}}
\def\bR {\mathbf{R}}
\def\fH {\mathfrak{H}}
\def\fS {\mathfrak{S}}
\def\cC {\mathcal{C}}
\def\cF {\mathcal{F}}
\def\cH {\mathcal{H}}
\def\cL {\mathcal{L}}
\def\cM {\mathcal{M}}
\def\cP {\mathcal{P}}
\def\cR {\mathcal{R}}
\def\cT {\mathcal{T}}
\def\cU {\mathcal{U}}
\def\a {{\alpha}}
\def\de {{\delta}}
\def\l {{\lambda}}
\def\L {{\Lambda}}
\def\si {{\sigma}}
\def\om {{\omega}}
\def\d {{\partial}}
\def\grad {{\nabla}}
\def\Dlt {{\Delta}}
\def\rstr {{\big |}}
\def\la {\langle}
\def\ra {\rangle}
\def \La {\bigg\langle}
\def \Ra {\bigg\rangle}
\def\bu {{\bullet}}
\newcommand{\Tr}{\operatorname{trace}}
\newcommand{\Ker}{\operatorname{Ker}}
\newcommand{\IM}{\operatorname{Im}}
\newcommand{\Esssup}{\operatorname{ess sup}}
\newcommand{\ad}{\operatorname{\mathbf{ad}}}
\def\hb {{\hbar}}
\newcommand{\ba}{\begin{aligned}}
\newcommand{\ea}{\end{aligned}}
\newcommand{\be}{\begin{equation}}
\newcommand{\ee}{\end{equation}}
\newcommand{\lb}{\label}
\newtheorem{Thm}{Theorem}[section]
\newtheorem{Cor}[Thm]{Corollary}
\newtheorem{Lem}[Thm]{Lemma}
\begin{document}

\title[Pickl's Proof and Quantum Klimontovich Solutions]{Pickl's Proof of the Quantum Mean-Field Limit and Quantum Klimontovich Solutions}

\author[I. Ben Porat]{Immanuel Ben Porat}
\address[I.B.P.]{CMLS, \'Ecole polytechnique, CNRS, IP Paris, 91128 Palaiseau Cedex, France}
\email{immanuel.ben-porath@polytechnique.edu@polytechnique.edu}

\author[F. Golse]{Fran\c cois Golse}
\address[F.G.]{CMLS, \'Ecole polytechnique, CNRS, IP Paris, 91128 Palaiseau Cedex, France}
\email{francois.golse@polytechnique.edu}

\begin{abstract}
This paper discusses the mean-field limit for the quantum dynamics of $N$ identical bosons in $\bR^3$ interacting via a binary potential with Coulomb type singularity. Our approach is based on the theory of quantum Klimontovich solutions
defined in [F. Golse, T. Paul, Commun. Math. Phys. \textbf{369} (2019), 1021--1053]. Our first main result is a definition of the interaction nonlinearity in the equation governing the dynamics of quantum Klimontovich solutions for a class of
interaction potentials slightly less general than those considered in [T. Kato, Trans. Amer. Math. Soc. \textbf{70} (1951), 195--211]. Our second main result is a new operator inequality satisfied by the quantum Klimontovich solution in the case
of an interaction potential with Coulomb type singularity. When evaluated on an initial bosonic pure state, this operator inequality reduces to a Gronwall inequality for a functional introduced in [P. Pickl, Lett. Math. Phys. \textbf{97} (2011), 
151--164], resulting in a convergence rate estimate for the quantum mean-field limit leading to the time-dependent Hartree equation.
\end{abstract}

\keywords{Schr\"odinger equation; Hartree equation; Mean-field limit; Klimontovich solution}

\subjclass{81V70, 35Q55 (35Q40, 81P16)}

\date{\today}

\maketitle


\section{Introduction and Notation}\lb{S-Intro}

In classical mechanics, the motion equations for a system of $N$ identical point particles of mass $m$ with positions $q_j(t)\in\bR^3$ and momenta $p_j(t)\in\bR^3$ for all $j=1,\ldots,N$ is
\be\lb{NBodyClass}
\left\{
\ba
\,\dot q_j(t)=&\tfrac1m p_j(t)=\grad_{p_j}H_N(p_1(t),\ldots,q_N(t))\,,
\\
\,\dot p_j(t)=&-\sum_{k=1\atop k\not=j}^N\grad V(q_j(t)-q_k(t))=-\grad_{p_j}H_N(p_1(t),\ldots,q_N(t))\,,
\ea
\right.
\ee
where the $N$-particle classical Hamiltonian is
$$
H_N(p_1,\ldots,q_N):=\sum_{j=1}^N\tfrac1{2m}|p_j|^2+\sum_{1\le j<k\le N}V(q_j-q_k)\,.
$$
Assuming that $V\in C^{1,1}(\bR^3)$, this differential system has a unique global solution for all initial data. If $V$ is even\footnote{In this case, the exclusion $j\not=k$ in the right-hand side of Newton's second law for $\dot p_j(t)$ is useless
since $V$ even $\implies\grad V(0)=0$.}, the phase space empirical measure
\be\lb{EmpirClass}
\mu_N(t,dxd\xi):=\tfrac1N\sum_{j=1}^N\de_{q_j(t/N),p_j(t/N)}(dxd\xi)\,,\qquad Nm=1\,,
\ee
is an \textit{exact}, weak solution of the Vlasov equation
\be\lb{KlimoClass}
\d_t\mu_N+\xi\cdot\grad_x\mu_N-\grad_x(V\star\mu_N(t))\cdot\grad_\xi\mu_N=0
\ee
with self-consistent, mean-field potential
$$
V\star_{x,\xi}\mu_N(t,x)=\tfrac1N\sum_{k=1}^NV(x-q_k(t))
$$
This remarkable observation is due to Klimontovich, and solutions of the Vlasov equation \eqref{KlimoClass} of the form \eqref{EmpirClass} are referred to as ``Klimontovich solutions''. Thus, if $\mu_N(0)\to f^{in}dxd\xi$ weakly in the sense 
of probability measures as $N\to\infty$, where $f^{in}$ is a probability density on $\bR^3_x\times\bR^3_\xi$, one has
$$
\mu_N(t,dxd\xi)\to f(t,x,\xi)dxd\xi\text{ weakly in the sense of probability measures}
$$
for all $t\ge 0$ as $N\to\infty$, where $f$ is the solution of the Vlasov equation
\be\lb{Vlasov}
\d_tf+\xi\cdot\grad_xf-\grad _x(V\star_{x,\xi}f(t,\cdot,\cdot))\cdot\nabla_\xi f=0\,,\qquad f\rstr_{t=0}=f^{in}\,.
\ee
Thus, the mean-field limit in classical mechanics is equivalent to the continuous dependence for the weak topology of probability measures of solutions of the Vlasov equation in terms of their initial data. See \cite{BraunHepp} for a proof of 
this result. For instance, the weak convergence of the initial data can be realized by a random choice of $(q_j(0),p_j(0))$, independent and identically distributed with distribution $f^{in}$.

\smallskip
The mean-field limit for bosonic systems in quantum mechanics has been formulated in different settings, by using the so-called BBGKY hierarchy \cite{Spohn80,BGM,BEGMY,EY}, or in the second quantization setting \cite{RodSchlein}. 
Interestingly, these techniques allow considering singular potentials such as the Coulomb potential, instead of $C^{1,1}$ potentials as in the classical case. (The mean-field limit with Coulomb potentials in classical mechanics is still an 
open problem at the time of this writing; see however \cite{Serfaty} in the special case of monokinetic particle distributions. See also \cite{HaurayJabin1,HaurayJabin2} for potentials less singular than the Coulomb potential). 

The quantum mean-field equation analogous to the Vlasov equation \eqref{Vlasov} is the (time-dependent) Hartree equation
\be\lb{TDHWf}
i\hb\d_t\psi(t,x)=-\tfrac12\hbar^2\Dlt_x\psi(t,x)+(V\star|\psi(t,\cdot)|^2)(x)\psi(t,x)=0\,,\qquad\psi\rstr_{t=0}=\psi^{in}\,.
\ee

In \cite{Pickl1,KPickl}, an original method, close to the second quantization approach in \cite{RodSchlein}, but avoiding the rather heavy formalism of Fock spaces, was proposed and successfully applied to singular potentials including
the Coulomb potential.

\smallskip
All these approaches noticeably differ from the classical setting used in \cite{BraunHepp} for lack of a quantum notion of phase-space empirical measures. However, a quantum analogue of the notion of phase-space empirical measure
was recently proposed in \cite{FGTPaul}, along with an equation analogous to \eqref{KlimoClass} governing their evolution. This notion was used in \cite{FGTPaul} to prove the uniformity of the mean-field limit in the Planck constant 
$\hbar>0$. However, the discussion in \cite{FGTPaul} only considers regular potentials (specifically $\d^\a V\in\cF L^1(\bR^d)$ for $|\a|\le 3+[d/2]$). Even writing the equation analogous to \eqref{KlimoClass} satisfied by the quantum
analogue of the phase-space empirical measure requires $V\in\cF L^1(\bR^d)$ in the setting of \cite{FGTPaul}.

\smallskip
The purpose of the present paper is twofold:

\noindent
(a) to extend the formalism of quantum empirical measures considered in \cite{FGTPaul} to treat the case of singular potentials including the Coulomb potential, which is of particular interest for applications to atomic physics (see 
Theorem \ref{T-DefC} in section \ref{S-ExtDef}), and

\noindent
(b) to explain how the ideas in \cite{Pickl1,KPickl} can be couched in terms of the formalism of quantum empirical measures defined in \cite{FGTPaul} (see Theorem \ref{T-OpIneq} and Corollary \ref{C-MF} in section \ref{S-Main}).

Specifically, we prove an inequality \textit{between operators} on the $N$-particle Hilbert space, of which the key estimates in \cite{Pickl1,KPickl} leading to the quantum mean-field limit are straightforward consequences.

\smallskip
The next section briefly recalls only the essential part of \cite{FGTPaul} used in the sequel. The main results obtained in the present paper are Theorems \ref{T-DefC} and \ref{T-OpIneq} from sections \ref{S-ExtDef} and \ref{S-Main}
respectively. The proofs of these results are given in the subsequent sections.


\section{Quantum Klimontovich Solutions}\lb{S-QKlim}


Consider the quantum $N$-body Hamiltonian
\be\lb{N-Hamilt}
\cH_N:=\sum_{j=1}^N-\tfrac12\hb^2\Dlt_{x_j}+\tfrac1N\sum_{1\le j<k\le N}V(x_j-x_k)
\ee
on $\fH_N:=\fH^{\otimes N}\simeq L^2(\bR^{3N})$, where $\fH:=L^2(\bR^3)$. Henceforth it is assumed that $V$ is a real-valued function such that $\cH_N$ has a (unique) self-adjoint extension to $\fH_N$, still denoted $\cH_N$. 
A well-known sufficient condition for this to be true has been found by Kato (see condition (5) in \cite{Kato51}): there exists $R>0$ such that
\be\lb{KatoCond0V}
\int_{|z|\le R}V(z)^2dz+\Esssup_{|z|>R}|V(z)|<\infty\,.
\ee 
In particular, these conditions include the (repulsive) Coulomb potential in $\mathbf R^3$. In fact, $\cH_N$ has a self-adjoint extension to $\fH_N$ under a condition slightly more general than Kato's original assumption 
recalled above:
\be\lb{KatoCondV}
V\in L^2(\bR^3)+L^\infty(\bR^3)
\ee
(see Theorem X.16 and Example 2 in \cite{RS2}, and Theorem V.9 with $m=1$ in \cite{Penz}).

In the sequel, we adopt the notation in \cite{FGTPaul}. In particular, we set
\be\lb{DefJk}
J_kA:=I_\fH^{\otimes (k-1)}\otimes A\otimes I_\fH^{\otimes(N-k)}\,,\quad 1\le n\le N\,,
\ee
and
\be\lb{DefMNin}
\cM_N^{in}:=\tfrac1N\sum_{k=1}^NJ_k\in\cL(\cL(\fH),\cL(\fH_N))\,.
\ee
The dynamics of the morphism $\cM_N^{in}$ is defined by conjugation with the $N$-particle dynamics as follows: for each $A\in\cL(\fH)$, 
\be\lb{DefMNt}
\cM_N(t)A:=\cU_N(t)^*(\cM_N^{in}A)\cU_N(t)\,,\quad\text{ with }\cU_N(t):=\exp(-it\cH_N/\hb)\,.
\ee
Since $\cH_N$ is self-adjoint, $t\mapsto \cU_N(t)$ is a unitary group by Stone's theorem. The time-dependent morphism $t\mapsto\cM_N(t)\in\cL(\cL(\fH),\cL(\fH_N))$ is henceforth referred to as \textit{the quantum Klimontovich solution}.

Assume henceforth that $V$ is even:
\be\lb{Veven}
V(x)=V(-x)\,,\qquad x\in\bR^d\,.
\ee

The first main result in \cite{FGTPaul} (Theorem 3.3) is that, if $\hat V\in L^1(\bR^d)$, the quantum Klimontovich solution $\cM_N(t)$ satisfies
\be\lb{KlimEq}
i\hb\d_t\cM_N(t)=\ad^*(K)\cM_N(t)-\cC(V,\cM_N(t),\cM_N(t))\,,
\ee
where $K=-\tfrac12\hb^2\Dlt$ is the quantum kinetic energy, and where
\be\lb{Def-ad*}
(\ad^*(T)\Lambda)A:=-\Lambda([T,A])
\ee
for each unbounded self-adjoint operator $T$ on $\fH$, each $A\in\cL(\fH)$ satisfying the condition $[T,A]\in\cL(\fH)$, and each $\Lambda\in\cL(\cL(\fH),\cL(\fH_N))$. Moreover
\be\lb{DefC}
\cC(V,\Lambda_1,\Lambda_2)(A):=\tfrac1{(2\pi)^d}\int_{\bR^d}\hat V(\om)((\Lambda_1E_\om^*)\Lambda_2(E_\om A)-\Lambda_2(AE_\om)(\Lambda_1E_\om^*))d\om
\ee
for each $A\in\cL(\fH)$ and each $\Lambda_1,\Lambda_2\in\cL(\cL(\fH),\cL(\fH_N))$, where $E_\om\in\cL(\fH)$ is the operator defined by 
\be\lb{DefE}
(E_\om\phi)(x):=e^{i\om\cdot x}\phi(x)\quad\text{ for each }\phi\in\fH\text{ and }\om\in\bR^d\,.
\ee
Since the integrand of the right-hand side of \eqref{DefC} takes its values in the non separable space $\cL(\fH_N)$, it is worth mentioning that this integral is a weak integral for the ultraweak topology in $\cL(\fH_N)$ (see footnote 3
on p. 1032 in \cite{FGTPaul}).

At variance with the classical case recalled in \eqref{KlimoClass}, the differential equation \eqref{KlimEq} satisfied by the quantum Klimontovich solution $t\mapsto\cM_N(t)$ is not formally identical to the mean-field, time-dependent Hartree 
equation \eqref{TDHWf}. The relation between \eqref{TDHWf} and \eqref{KlimEq} is explained in Theorem 3.5, the second main result in \cite{FGTPaul}, recalled below.

If $\psi$ is a solution of the the time-dependent Hartree equation \eqref{TDHWf} satisfying the normalization condition
$$
\|\psi(t,\cdot)\|_\fH=1\qquad\text{for all }t\in\bR\,,
$$
the time-dependent morphism $t\mapsto\cR(t)\in\cL(\cL(\fH),\cL(\fH_N))$ defined by the formula\footnote{Throughout this paper, we adopt the Dirac bra-ket notation. Thus a wave function $\psi\in\fH$ viewed as a vector of the linear space 
$\fH$ is denoted $|\psi\ra$, whereas $\la\psi|$ designates the linear functional
$$
\la\psi|:\,\,\fH\ni\phi\mapsto\int_{\bR^d}\overline{\psi(x)}\phi(x)dx\in\bC\,.
$$
If $A\in\cL(\fH)$, we denote
$$
\la\psi|A|\phi\ra:=\int_{\bR^d}\overline{\psi(x)}(A\phi)(x)dx
$$
and $\la\psi|\phi\ra:=\la\psi| I_\fH|\phi\ra$ is the inner product on $\fH$.}
$$
\cR(t)A:=\la\psi(t,\cdot)|A|\psi(t,\cdot)\ra I_{\fH_N}
$$
is a solution of \eqref{KlimEq}.


\section{Extending the Definition of $\cC(V,\cM_N(t),\cM_N(t))$ when $V\notin\cF L^1(\bR^3)$}\lb{S-ExtDef}


Our first task is to extend the definition \eqref{DefC} of the term $\cC(V,\cM_N(t),\cM_N(t))$ to a more general class of potentials $V$, including the Coulomb potential in $\bR^3$.

Since
$$
\ba
\cM_N(t)(E^*_\om)\cM_N(t)(E_\om|\phi\ra\la\phi|)-\cM_N(t)(|\phi\ra\la\phi|E_\om)\cM_N(t)(E^*_\om)
\\
=\cU_N(t)^*(\cM_N^{in}(E^*_\om)\cM_N^{in}(E_\om|\phi\ra\la\phi|)-\cM_N^{in}(|\phi\ra\la\phi|E_\om)\cM_N^{in}(E^*_\om))\cU_N(t)&\,,
\ea
$$
the idea is to define
$$
\ba
\la\Phi_N^{in}|\cC(V,\cM_N(t),\cM_N(t))(|\phi\ra\la\phi|)|\Psi_N^{in}\ra
\\
:=\tfrac1{(2\pi)^3}\int_{\bR^3}\hat V(\om)\la\cU_N(t)\Phi_N^{in}|S_N[\phi](\om)|\cU_N(t)\Psi_N^{in}\ra d\om
\ea
$$
for all $\Phi_N,\Psi_N^{in}\in\fH_N$, where
$$
S_N[\phi](\om):=\cM_N^{in}(E^*_\om)\cM_N^{in}(E_\om|\phi\ra\la\phi|)-\cM_N^{in}(|\phi\ra\la\phi|E_\om)\cM_N^{in}(E^*_\om)
$$
and to take advantage of the decay of $S_N[\phi]$ in $\om$, assuming that $\phi$ is regular enough. Our argument does not use any regularity on $\Phi_N^{in}$ or $\Psi_N^{in}$. This is quite natural, since anyway Kato's condition 
\eqref{KatoCondV} on the interaction potential $V$ does not entail higher than (Sobolev) $H^2$ regularity for $\cU_N(t)\Phi_N^{in}$ or $\cU_N(t)\Psi_N^{in}$, as observed in Note V.10 of \cite{Penz}.

Our first main result in this paper is the following result, leading to a definition of $\cC(V,\cM_N(t),\cM_N(t))(|\phi\ra\la\phi|)$ in the case of singular, Coulomb-like potentials $V$, and for bounded wave functions $\phi$. This  theorem
can be regarded as an extension to the case of singular, Coulomb like potentials $V$ of the formalism of quantum Klimontovich solutions in \cite{FGTPaul}.

\begin{Thm}\lb{T-DefC}
Assume that $V$ is a real-valued measurable function on $\bR^3$ satisfying the parity condition \eqref{Veven}, and
\be\lb{CondV1}
V\in L^2(\bR^3)+\cF L^1(\bR^3)\,.
\ee
For each $\phi\in L^2\cap L^\infty(\bR^3)$ and each $\Psi_N\in\fH_N$, the function
$$
\om\mapsto\la\Psi_N|S_N[\phi](\om)|\Psi_N\ra\text{ belongs to }L^2\cap L^\infty(\bR^3)\,.
$$
The interaction operator $\cC(V,\cM_N(t),\cM_N(t))(|\phi\ra\la\phi|)$ is defined by the formula
$$
\cC(V,\cM_N(t),\cM_N(t))(|\phi\ra\la\phi|):=\tfrac1{(2\pi)^3}\int_{\bR^3}\hat V(\om)\cU_N(t)^*S_N[\phi](\om)\cU_N(t)d\om
$$
The integral on the right-hand side of the equality above is to be understood as a weak integral and defines 
$$
t\mapsto\cC(V,\cM_N(t),\cM_N(t))(|\phi\ra\la\phi|)
$$ 
as a continuous map from $\bR$ to $\cL(\fH_N)$ endowed with the ultraweak topology, which is moreover bounded on $\bR$ for the operator norm on $\cL(\fH_N)$.
\end{Thm}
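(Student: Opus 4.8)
The plan is to decompose $V = V_2 + V_1$ with $V_2 \in L^2(\bR^3)$ and $V_1 \in \cF L^1(\bR^3)$, handle the $V_1$ part by the existing theory of \cite{FGTPaul}, and concentrate on the new estimates needed for the $V_2$ part. The first step is to establish the quantitative regularity claim: for $\phi \in L^2 \cap L^\infty$ and $\Psi_N \in \fH_N$, the function $\om \mapsto \la\Psi_N|S_N[\phi](\om)|\Psi_N\ra$ lies in $L^2 \cap L^\infty(\bR^3)$. The $L^\infty$ bound is immediate: each factor $\cM_N^{in}(E_\om^*)$, $\cM_N^{in}(E_\om|\phi\ra\la\phi|)$, etc., is a bounded operator with norm controlled by $1$, $\|\phi\|_\infty^2$ (using $\|E_\om\|_{\cL(\fH)} = 1$ and $\cM_N^{in}$ an averaging of the $J_k$'s), so $\|S_N[\phi](\om)\|_{\cL(\fH_N)} \le 2\|\phi\|_\infty^2$ uniformly in $\om$. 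The $L^2$-in-$\om$ bound is where the decay is extracted: I would write $S_N[\phi](\om)$ explicitly as $\tfrac1{N^2}\sum_{j,k} (J_j E_\om^*)(J_k E_\om|\phi\ra\la\phi|) - (\text{adjoint-type term})$, and observe that $\la\Psi_N|(J_jE_\om^*)(J_kE_\om|\phi\ra\la\phi|)|\Psi_N\ra$ is, up to bounded factors, an inner product involving $E_\om$ acting on one variable and $E_{-\om}$ on another (or the same) variable against fixed $L^2$ vectors built from $\Psi_N$ and $\phi$; by Plancherel in $\om$ this is square-integrable, with $L^2(\dd\om)$ norm controlled by $\|\Psi_N\|_{\fH_N}^2\|\phi\|_\infty\|\phi\|_{\fH}$ or a similar combination. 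The diagonal terms $j=k$ need slightly more care since $E_\om^* E_\om = I$ kills the oscillation, but those terms contribute $\cM_N^{in}(|\phi\ra\la\phi|) \cdot \tfrac1N$ times something $\om$-independent times a factor that still carries an $E_\om$ or $E_{-\om}$ from the second morphism; tracking this confirms $L^2(\dd\om)$ membership.

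With the regularity of $\om \mapsto \la\Psi_N|S_N[\phi](\om)|\Psi_N\ra$ in hand, the weak integral $\int \hat V(\om)\, \cU_N(t)^* S_N[\phi](\om)\, \cU_N(t)\, \dd\om$ is defined by duality: for $\Phi_N, \Psi_N \in \fH_N$, one polarizes the quadratic-form version and sets $\la\Phi_N|\cC(\cdots)|\Psi_N\ra := \tfrac1{(2\pi)^3}\int \hat V(\om) \la\cU_N(t)\Phi_N|S_N[\phi](\om)|\cU_N(t)\Psi_N\ra\,\dd\om$. Finiteness follows from Cauchy–Schwarz splitting $\hat V = \hat V_1 + \hat V_2$ with $\hat V_1 \in L^1$ (pair against the $L^\infty$ bound on $S_N$) and $\hat V_2 = \cF V_2 \in L^2$ (pair against the $L^2(\dd\om)$ bound on $S_N[\phi]$, which is where the new case genuinely requires $\phi \in L^\infty$). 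Boundedness of the resulting sesquilinear form in $(\Phi_N,\Psi_N)$, uniformly in $t$ (since $\cU_N(t)$ is unitary and the bounds above do not see $t$), gives via Riesz representation a well-defined operator $\cC(V,\cM_N(t),\cM_N(t))(|\phi\ra\la\phi|) \in \cL(\fH_N)$ with $\|\cdot\|_{\cL(\fH_N)} \le C(\|V_1\|_{\cF L^1} + \|V_2\|_{L^2})\max(\|\phi\|_\infty^2, \|\phi\|_\infty\|\phi\|_\fH)$, uniform in $t \in \bR$.

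For continuity in $t$ for the ultraweak topology, it suffices to check weak-$*$ continuity on rank-one operators, i.e. that $t \mapsto \la\cU_N(t)\Phi_N|S_N[\phi](\om)|\cU_N(t)\Psi_N\ra$ is continuous for a.e.\ $\om$ and dominated; strong continuity of $t\mapsto\cU_N(t)$ (Stone) gives pointwise continuity, and the domination is $|\hat V_1(\om)|\cdot 2\|\phi\|_\infty^2\|\Phi_N\|\|\Psi_N\| + |\hat V_2(\om)|\cdot\|S_N[\phi](\om)\|$-type, integrable by the same Cauchy–Schwarz split; dominated convergence then yields continuity of each matrix entry, hence ultraweak continuity. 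The main obstacle I anticipate is the careful bookkeeping in the $L^2(\dd\om)$ estimate for $S_N[\phi](\om)$: one must correctly identify, for each pair $(j,k)$ of particle indices, the two $L^2(\bR^3)$ "profiles" whose Fourier transforms in $\om$ appear, separate the oscillatory off-diagonal terms from the non-oscillatory diagonal contributions, and verify that the $L^\infty$ hypothesis on $\phi$ (rather than mere $L^2$) is exactly what is needed to close the bound — this is the place where the weaker regularity on $\phi$ (compared to the $\cF L^1$ regularity demanded on $V$ in \cite{FGTPaul}) is exploited, and getting the norms to land in $L^2\cap L^\infty(\bR^3)$ as claimed, rather than some weaker space, is the delicate point.
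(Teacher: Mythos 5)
Your high-level strategy matches the paper's: split $\hat V$ into an $L^1$ part plus an $L^2$ part, establish that $\om\mapsto\la\Psi_N|S_N[\phi](\om)|\Psi_N\ra$ lies in $L^2\cap L^\infty(\bR^3)$, pair against $\hat V$, and conclude by duality/polarization and dominated convergence for the continuity in $t$. But there is a concrete error at the central step, precisely the place you flag as "the delicate point."

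Your treatment of the diagonal terms $j=k$ is wrong. In $S_N[\phi](\om)=\cM_N^{in}(E_\om^*)\cM_N^{in}(E_\om|\phi\ra\la\phi|)-\cM_N^{in}(|\phi\ra\la\phi|E_\om)\cM_N^{in}(E_\om^*)$, the diagonal contribution to the first product is
$$
\tfrac1{N^2}\sum_{j=1}^N J_j(E_\om^*E_\om|\phi\ra\la\phi|)=\tfrac1N\cM_N^{in}(|\phi\ra\la\phi|)\,,
$$
which is entirely $\om$-independent: there is no ``leftover $E_\om$ carried by the second morphism.'' In particular this contribution is a nonzero constant in $\om$, hence not in $L^2(\dd\om)$, and by itself it would ruin the $L^2$-in-$\om$ estimate you need. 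The argument is saved only by the exact cancellation of the diagonal contributions between the two products,
$$
J_k(E_\om^*)J_k(E_\om|\phi\ra\la\phi|)-J_k(|\phi\ra\la\phi|E_\om)J_k(E_\om^*)=J_k(|\phi\ra\la\phi|)-J_k(|\phi\ra\la\phi|)=0\,,
$$
which the paper records at the very start of the proof, allowing $S_N[\phi](\om)$ to be rewritten as a sum over $k\neq l$ only. The $L^2(\dd\om)$ estimate is then carried out on the off-diagonal terms alone, which are genuine Fourier transforms $\hat F(-\om)$ of an explicit profile $F\in L^1\cap L^2(\bR^3)$ (the $L^2$ bound on $F$ is where $\phi\in L^\infty$ enters, as you correctly anticipated). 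Your phrase ``tracking this confirms $L^2(\dd\om)$ membership'' glosses over a term that is simply not in $L^2$; the conclusion follows from a cancellation, not from a residual oscillation. As a small secondary point, the uniform-in-$\om$ operator norm bound on $\cM_N^{in}(E_\om|\phi\ra\la\phi|)$ is $\||\phi\ra\la\phi|\|_{\cL(\fH)}=\|\phi\|_{L^2}^2$, not $\|\phi\|_\infty^2$; this does not affect the qualitative conclusion of the $L^\infty$-in-$\om$ bound, but the constant you quote is wrong.
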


Obviously, condition \eqref{CondV1} is stronger than Kato's condition \eqref{KatoCondV}. However, the repulsive Coulomb potential $z\mapsto1/|z|$ in $\bR^3$ obviously satisfies \eqref{CondV1}, since its Fourier transform $\zeta\mapsto{C}/|\zeta|^2$
belongs to $L^1(\bR^3)+L^2(\bR^3)$. In particular, $\cH_N$ has a self-adjoint extension to $\fH_N$ under condition \eqref{CondV1}

\begin{proof}
Assuming that $\Psi_N^{in}\in\fH_N$, one has
$$
\cU_N(t)\Psi_N^{in}\in\fH_N\text{ with }\|\cU_N(t)\Psi_N^{in}\|_{\fH_N}=\|\Psi_N^{in}\|_{\fH_N}\,.
$$

Therefore, we henceforth forget the time dependence in $\Psi_N(t,\cdot)=\cU_N(t)\Psi_N^{in}$, which will be henceforth denoted $\Psi_N\equiv\Psi_N(x_1,\ldots,x_N)$.

Observe first that
$$
S_N[\phi](\om)=\tfrac1{N^2}\sum_{1\le k\not=l\le N}(J_k(E^*_\om)J_l(E_\om|\phi\ra\la\phi|)-J_l(|\phi\ra\la\phi|E_\om)J_k(E^*_\om))
$$
since
$$
\ba
J_k(E^*_\om)J_k(E_\om|\phi\ra\la\phi|)-J_k(|\phi\ra\la\phi|E_\om)J_k(E^*_\om)
\\
=J_k(E^*_\om E_\om|\phi\ra\la\phi|)-J_k(|\phi\ra\la\phi|E_\om E^*_\om)
\\
=J_k(|\phi\ra\la\phi|)-J_k(|\phi\ra\la\phi|)=0&\,.
\ea
$$
Without loss of generality, consider the term
$$
\ba
\la\Psi_N|J_1(E_\om^*)J_2(E_\om|\phi\ra\la\phi|)|\Psi_N\ra=\int_{\bR^6}\!e^{-i\om\cdot(x_1\!-x_2)}\phi(x_2)
\\
\times\left(\int_{\bR^{3N-6}}\!\!\!\overline{\Psi_N(x_1,x_2,Z)}\left(\int_{\bR^3}\!\Psi_N(x_1,y_2,Z)\overline{\phi(y_2)}dy_2\!\right)dZ\!\right)dx_1dx_2
\\
=\hat F(-\om)
\ea
$$
where
$$
F(X):=\int_{\bR^3}\!\phi(X\!+\!x_1)f(x_1,x_1+X,x_1)dx_1\,,
$$
with the notation
$$
f(x_1,x_2,y_1):=\int_{\bR^{3N-6}}\!\!\overline{\Psi_N(x_1,x_2,Z)}\left(\int_{\bR^3}\!\Psi_N(y_1,y_2,Z)\overline{\phi(y_2)}dy_2\!\right)dZ\,.
$$
We shall prove that $F\in L^1(\bR^3)\cap L^2(\bR^3)$, so that $\hat F\in L^2(\bR^3)\cap C_0(\bR^3)$.

First
$$
\ba
\int_{\bR^3}|F(X)|dX\le&\int_{\bR^6}|\phi(X+x_1)||f(x_1,x_1+X,x_1)|dx_1dX
\\
=&\int_{\bR^3}\left(\int_{\bR^3}|\phi(x_2)||f(x_1,x_2,x_1)|dx_2\right)dx_2
\\
\le&\int_{\bR^{3N-3}}\left(\int_{\bR^3}|\phi(x_2)||\Psi_N(x_1,x_2,Z)|dx_2\right)
\\
&\qquad\quad\times\left(\int_{\bR^3}|\Psi_N(x_1,y_2,Z)||\phi(y_2)|dy_2\!\right)dZdx_1
\\
=&\int_{\bR^{3N-3}}\left(\int_{\bR^3}|\phi(x_2)||\Psi_N(x_1,x_2,Z)|dx_2\right)^2dZdx_1
\\
\le&\|\phi\|_{L^2(\bR^3)}^2\int_{\bR^{3N-3}}\int_{\bR^3}|\Psi_N(x_1,x_2,Z)|^2dx_2dZdx_1
\\
=&\|\phi\|_{L^2(\bR^3)}^2\|\Psi_N\|^2_{L^2(\bR^{3N})}<\infty\,.
\ea
$$
where the last inequality is the Cauchy-Schwarz inequality for the inner integral.

On the other hand
$$
\int_{\bR^3}|F(X)|^2dX\le\|\phi\|^2_{L^\infty(\bR^3)}\int_{\bR^3}\left(\int_{\bR^3}|f(x_1,x_1+X,x_1)|dx_1\right)^2dX\,,
$$
and
$$
\int_{\bR^3}|f(x_1,x_1+X,x_1)|dx_1\le\int_{\bR^{3N-3}}|\Psi_N(x_1,x_1+X,Z)|\Phi_N(x_1,Z)dZdx_1\,,
$$
with
$$
\Phi(x_1,Z):=\int_{\bR^3}|\Psi_N(x_1,y_2,Z)||\phi(y_2)|dy_2\,,
$$
so that
$$
\Phi_N(x_1,Z_N)^2\le\|\phi\|^2_{L^2(\bR^3)}\int_{\bR^3}|\Psi_N(x_1,y_2,Z)|^2dy_2\,,
$$
and
$$
\ba
\int_{\bR^{3N-3}}\Phi_N(x_1,Z)^2dZdx_1\le&\|\phi\|^2_{L^2(\bR^3)}\int_{\bR^3}|\Psi_N(x_1,y_2,Z)|^2dx_1dy_2dZ
\\
=&\|\phi\|^2_{L^2(\bR^3)}\|\Psi_N\|^2_{L^2(\bR^{3N})}\,.
\ea
$$
Hence
$$
\ba
\left(\int_{\bR^3}|f(x_1,x_1+X,x_1)|dx_1\right)^2\le\left(\int_{\bR^{3N-3}}|\Psi_N(x_1,x_1+X,Z)|\Phi_N(x_1,Z)dZdx_1\right)^2
\\
\le\int_{\bR^{3N-3}}|\Psi_N(x_1,x_1+X,Z)|^2dZdx_1\int_{\bR^{3N-3}}\Phi_N(x_1,Z)^2dZdx_1
\\
\le\int_{\bR^{3N-3}}|\Psi_N(x_1,x_1+X,Z)|^2dZdx_1\|\phi\|^2_{L^2(\bR^3)}\|\Psi_N\|^2_{L^2(\bR^{3N})}\,,
\ea
$$
so that
$$
\ba
\int_{\bR^3}\left(\int_{\bR^3}|f(x_1,x_1+X,x_1)|dx_1\right)^2dX
\\
\le\|\phi\|^2_{L^2(\bR^3)}\|\Psi_N\|^2_{L^2(\bR^{3N})}\int_{\bR^3}\int_{\bR^{3N-3}}|\Psi_N(x_1,x_1+X,Z)|^2dZdx_1dX
\\
=\|\phi\|^2_{L^2(\bR^3)}\|\Psi_N\|^2_{L^2(\bR^{3N})}\int_{\bR^3}\int_{\bR^{3N-3}}|\Psi_N(x_1,x_2,Z)|^2dZdx_1dx_2
\\
=\|\phi\|^2_{L^2(\bR^3)}\|\Psi_N\|^4_{L^2(\bR^{3N})}&\,.
\ea
$$
Therefore
$$
\int_{\bR^3}|F(X)|^2dX\le\|\phi\|^2_{L^\infty(\bR^3)}\|\phi\|^2_{L^2(\bR^3)}\|\Psi_N\|^4_{L^2(\bR^{3N})}<\infty
$$
so that $\om\mapsto\hat F(-\om)$ belongs to $L^2(\bR^d)$ by Plancherel's theorem. Hence, for each $k\not=l\in\{1,\ldots,N\}$, one has
$$
\hat V\in L^1(\bR^d)+L^2(\bR^d)\implies\left\{\ba{}&\int_{\bR^3}|\hat V(\om)||\la\Psi_N|J_k(E_\om^*)J_l(E_\om|\phi\ra\la\phi|)|\Psi_N\ra| d\om<\infty\,,
\\&\int_{\bR^3}|\hat V(\om)||\la\Psi_N|J_l(|\phi\ra\la\phi|E_\om)J_k(E_\om^*)|\Psi_N\ra| d\om<\infty\,.\ea\right.
$$
Hence
$$
(t,\om)\mapsto\hat V(\om)\la\cU_N(t)\Psi_N^{in}|S_N[\phi](\om)|\cU_N(t)\Psi_N^{in}\ra\text{ belongs to }C_b(\bR_t,L^1(\bR^3_\om))\,.
$$
Since $S_N[\phi](\om)^*=-S_N[\phi](\om)\in\cL(\fH_N)$ for each $\om\in\bR^3$ and $\hat V$ is even because of \eqref{Veven}, the formula
$$
\ba
\la\Psi_N^{in}|\cC(V,\cM_N(t),\cM_N(t))(|\phi\ra\la\phi|)|\Psi_N^{in}\ra
\\
:=\tfrac1{(2\pi)^3}\int_{\bR^3}\hat V(\om)\la\cU_N(t)^*\Psi_N^{in}|S_N[\phi](\om)|\cU_N(t)^*\Psi_N^{in}\ra d\om
\ea
$$
defines 
$$
\cC(V,\cM_N(t),\cM_N(t))(|\phi\ra\la\phi|)=-\cC(V,\cM_N(t),\cM_N(t))(|\phi\ra\la\phi|)^*\in\cL(\fH_N)
$$
for each $t\in\bR$ by polarization, and the function
$$
t\mapsto\cC(V,\cM_N(t),\cM_N(t))
$$
is bounded on $\bR$ with values in $\cL(\fH_N)$ for the norm topology, and continuous on $\bR$ with values in $\cL(\fH_N)$ endowed with the weak operator topology, and therefore for the ultraweak topology (since the weak operator 
and the ultraweak topologies coincide on norm bounded subsets of $\cL(\fH_N)$).
\end{proof}

\bigskip
\noindent
\textbf{Remark.} In the sequel, we shall also need to consider terms of the form
$$
\ba
(I):=&\tfrac1{(2\pi)^3}\int_{\bR^3}\hat V(\om)\la\Psi_N|J_1(E_\om^*|\phi\ra\la\phi|)J_2(E_\om|\phi\ra\la\phi|)|\Psi_N\ra d\om
\\
(II):=&\tfrac1{(2\pi)^3}\int_{\bR^3}\hat V(\om)\la\Psi_N|J_1(|\phi\ra\la\phi|E_\om^*)J_2(E_\om|\phi\ra\la\phi|)|\Psi_N\ra d\om
\\
(III):=&\tfrac1{(2\pi)^3}\int_{\bR^3}\hat V(\om)\la\Psi_N|J_1(|\phi\ra\la\phi|E_\om^*|\phi\ra\la\phi|)J_2(AE_\om B)|\Psi_N\ra d\om
\ea
$$
where $A,B\in\cL(\fH)$. 

The term (III) is the easiest of all. Indeed, 
$$
\ba
(III)=&\tfrac1{(2\pi)^3}\int_{\bR^3}\hat V(\om)\widehat{|\phi|^2}(\om)\la\Psi_N|J_1(|\phi\ra\la\phi|)J_2(AE_\om B)|\Psi_N\ra d\om
\\
=&\la\Psi_N|J_1(|\phi\ra\la\phi|)J_2(A(V\star|\phi|^2)B)|\Psi_N\ra 
\ea
$$
and since $V\in L^2(\bR^3)+C_b(\bR^3)$ while $\phi\in L^1\cap L^\infty(\bR^3)$, one has $V\star|\phi|^2\in C_b(\bR^3)$, so that $A(V\star|\phi|^2)B\in\cL(\fH)$.

The terms (I) and (II) are slightly more delicate, but can be treated by the same method already used in the proof of the theorem above. First, 
$$
\la\Psi_N|J_1(E_\om^*|\phi\ra\la\phi|)J_2(E_\om|\phi\ra\la\phi|)|\Psi_N\ra=\hat F_1(\om)\,,
$$
with
$$
\ba
F_1(Y):=&\int_{\bR^{3N-6}}A_1(Y,Z)A_2(Z)dZ\,,
\\
A_1(Y,Z):=&\int_{\bR^3}\phi(X+\tfrac{Y}2)\phi(X-\tfrac{Y}2)\overline{\Psi_N(X+\tfrac{Y}2,X-\tfrac{Y}2,Z)}dX\,,
\\
A_2(Z):=&\int_{\bR^6}\!\Psi_N(y_1,y_2,Z)\overline{\phi(y_1)\phi(y_2)}dy_1dy_2\,,
\ea
$$
so that
$$
(I)=\tfrac1{(2\pi)^3}\int_{\bR^3}\hat V(\om)\hat F_1(\om)d\om\,.
$$
Then
$$
\ba
\left(\int_{\bR^3}|F_1(Y)|dY\right)^2
\\
\le\|A_2\|^2_{L^2(\bR^{3N-6})}\int_{\bR^{3N-6}}\left(\int_{\bR^3}|A_1(Y,Z)|dY\right)^2dZ\le\|A_2\|^2_{L^2(\bR^{3N-6})}
\\
\times\int_{\bR^{3N-6}}\left(\int_{\bR^6}|\phi(X+\tfrac{Y}2)||\phi(X-\tfrac{Y}2)||\Psi_N(X+\tfrac{Y}2,X-\tfrac{Y}2,Z)|dXdY\right)^2dZ
\\
\le\|A_2\|^2_{L^2(\bR^{3N-6})}\int_{\bR^6}|\phi(X+\tfrac{Y}2)|^2|\phi(X-\tfrac{Y}2)|^2dXdY
\\
\times\int_{\bR^{3N}}|\Psi_N(X+\tfrac{Y}2,X-\tfrac{Y}2,Z)|^2dXdYdZ
\\
=\|A_2\|^2_{L^2(\bR^{3N-6})}\int_{\bR^6}|\phi(x_1)|^2|\phi(x_2)|^2dx_1dx_2
\\
\times\int_{\bR^{3N}}|\Psi_N(x_1,x_2,Z)|^2dx_1dx_2dZ
\\
=\|A_2\|^2_{L^2(\bR^{3N-6})}\|\phi\|_{L^2(\bR^3)}^4\|\Psi_N\|_{\fH_N}^2&\,.
\ea
$$
Besides
$$
\ba
\|A_2\|^2_{L^2(\bR^{3N-6})}\le&\int_{\bR^6}|\phi(y_1)|^2|\phi(y_2)|^2dy_1dy_2\int_{\bR^{3N-6}}|\Psi_N(y_1,y_2,Z)|^2dy_1dy_2dZ
\\
=&\|\phi\|_{L^2(\bR^3)}^4\|\Psi_N\|_{\fH_N}^2\,,
\ea
$$
so that
$$
\|F_1\|_{L^1(\bR^3)}\le\|\phi\|_{L^2(\bR^3)}^4\|\Psi_N\|_{\fH_N}^2<\infty\,.
$$
On the other hand
$$
\int_{\bR^3}|F_1(Y)|^2dY\le\|A_2\|^2_{L^2(\bR^{3N-6})}\|A_1\|^2_{L^2(\bR^{3N-3})}\,,
$$
where
$$
\ba
\|A_1\|^2_{L^2(\bR^{3N-3})}\le&\sup_{Y\in\bR^3}\int_{\bR^3}|\phi(X+\tfrac{Y}2)|^2|\phi(X-\tfrac{Y}2)|^2dX
\\
&\times\int_{\bR^{3N}}|\Psi_N(X+\tfrac{Y}2,X-\tfrac{Y}2,Z)|^2dXdYdZ
\\
\le&\|\phi\|_{L^4(\bR^3)}^4\|\Psi_N\|_{\fH_N}^2\,,
\ea
$$
so that
$$
\int_{\bR^3}|F_1(Y)|^2dY\le\|\phi\|_{L^4(\bR^3)}^4\|\phi\|_{L^2(\bR^3)}^4\|\Psi_N\|_{\fH_N}^4<\infty\,.
$$
Thus, we have proved that $F_1\in L^1\cap L^2(\bR^d)$, and since $\hat V\in L^2(\bR^d)+L^1(\bR^d)$, the product $\hat V\hat F\in L^1(\bR^d)$, which leads to a definition of (I).

The case of (II) is essentially similar. Observe that
$$
\la\Psi_N|J_1(|\phi\ra\la\phi|E_\om^*)J_2(E_\om|\phi\ra\la\phi|)|\Psi_N\ra=\int_{\bR^{3N-6}}\hat F_2(\om,Z)\hat F_3(\om,Z)dZ\,,
$$
where
$$
\ba
F_2(y_1,Z):=&\overline{\phi(y_1)}\int_{\bR^3}\overline{\phi(y_2)}\Psi_N(y_1,y_2,Z)dy_2
\\
F_3(x_2,Z):=&\phi(x_2)\int_{\bR^3}\phi(x_1)\overline{\Psi_N(x_1,x_2,Z)}dx_1\,.
\ea
$$
And
$$
(II):=\tfrac1{(2\pi)^3}\int_{\bR^3}\hat V(\om)\left(\int_{\bR^{3N-6}}\hat F_2(\om,Z)\hat F_3(\om,Z)dZ\right)d\om\,.
$$

Observe that
$$
\ba
\left(\int|F_2(y_1,Z)|dy_1\right)^2\le&\left(\int_{\bR^6}|\phi(y_1)||\phi(y_2)||\Psi_N(y_1,y_2,Z)|dy_1dy_2\right)^2
\\
\le&\|\phi\|_{L^2(\bR^3)}^4\int_{\bR^6}|\Psi_N(y_1,y_2,Z)|^2dy_1dy_2\,,
\ea
$$
so that
$$
\ba
\sup_{\om\in\bR^3}\left|\int_{\bR^{3N-6}}\hat F_2(\om,Z)\hat F_3(\om,Z)dZ\right|^2
\\
\le\left(\int_{\bR^{3N-6}}\sup_{\om\in\bR^3}|\hat F_2(\om,Z)|\sup_{\om\in\bR^3}|\hat F_3(\om,Z)|dZ\right)^2
\\
\le\int_{\bR^{3N-6}}\sup_{\om\in\bR^3}|\hat F_2(\om,Z)|^2dZ\int_{\bR^{3N-6}}\sup_{\om\in\bR^3}|\hat F_3(\om,Z)|^2dZ
\\
\le\int_{\bR^{3N-6}}\left(\int|F_2(y_1,Z)|dy_1\right)^2dZ\int_{\bR^{3N-6}}\left(\int|F_3(x_2,Z)|dx_2\right)^2dZ
\\
\le\|\phi\|_{L^2(\bR^3)}^8\|\Psi_N\|_{\fH_N}^4&\,,
\ea
$$
while
$$
\int_{\bR^{3N-6}}\left(\int|F_2(y_1,Z)|dy_1\right)^2dZ\le\|\phi\|_{L^2(\bR^3)}^4\|\Psi_N\|_{\fH_N}^2\,,
$$
with a similar conclusion for $F_3$. On the other hand
$$
\ba
\int_{\bR^3}|F_2(y_1,Z)|^2dy_1\le\int_{\bR^3}|\phi(y_1)|^2\left(\int_{\bR^3}\overline{\phi(y_2)}\Psi_N(y_1,y_2,Z)dy_2\right)^2dy_1
\\
\le\|\phi\|_{L^2(\bR^3)}^2\int_{\bR^3}|\phi(y_1)|^2\left(\int_{\bR^3}|\Psi_N(y_1,y_2,Z)|^2dy_2\right)dy_1
\\
\le\|\phi\|^2_{L^2(\bR^3)}\|\phi\|^2_{L^\infty(\bR^3)}\int_{\bR^6}|\Psi_N(y_1,y_2,Z)|^2dy_1dy_2&\,,
\ea
$$
so that
$$
\ba
\int_{\bR^3}\left|\int_{\bR^{3N-6}}\hat F_2(\om,Z)\hat F_3(\om,Z)dZ\right|^2d\om
\\
\le\int_{\bR^3}\int_{\bR^{3N-6}}|\hat F_2(\om,Z)|^2\left(\int_{\bR^{3N-6}}|\hat F_3(\om,Z)|^2dZ\right)dZd\om
\\
\le\sup_{\om\in\bR^3}\int_{\bR^{3N-6}}|\hat F_3(\om,Z)|^2dZ\int_{\bR^{3N-6}}\int_{\bR^3}|\hat F_2(\om,Z)|^2dZd\om
\\
\le\int_{\bR^{3N-6}}\sup_{\om\in\bR^3}|\hat F_3(\om,Z)|^2dZ\int_{\bR^{3N-6}}(2\pi)^3\left(\int_{\bR^3}|F_2(y_1,Z)|^2dy_1\right)dZ
\\
\le(2\pi)^3\int_{\bR^{3N-6}}\left(\int_{\bR^3}|\hat F_3(x_2,Z)|dx_2\right)^2dZ\int_{\bR^{3N-6}}\int_{\bR^3}|F_2(y_1,Z)|^2dy_1dZ
\\
\le(2\pi)^3\|\phi\|_{L^2(\bR^3)}^6\|\phi\|_{L^\infty(\bR^3)}^2\|\Psi_N\|_{\fH_N}^4&\,.
\ea
$$
Therefore the map
$$
\om\mapsto\int_{\bR^{3N-6}}\hat F_2(\om,Z)\hat F_3(\om,Z)dZ
$$
belongs to $L^2\cap L^\infty(\bR^3)$. Since $\hat V\in L^2(\bR^3)+L^1(\bR^3)$, this implies that
$$
\om\mapsto\hat V\int_{\bR^{3N-6}}\hat F_2(\om,Z)\hat F_3(\om,Z)dZ
$$
belongs to $L^1(\bR^3)$, thereby leading to a definition of (II).


\section{An Operator Inequality. Application to the Mean-Field Limit}\lb{S-Main}


First consider the Cauchy problem for the time dependent Hartree equation \eqref{TDHWf}. Assuming that the potential $V$ satisfies \eqref{KatoCondV} and \eqref{Veven}, for each $\phi^{in}\in H^2(\bR^3)$, there exists a unique solution 
$\phi\in C(\bR,H^2(\bR^3))$ of \eqref{TDHWf} by Theorems 1.4 and 1.3 of \cite{Hirata}.

Pickl's key idea in his proof of the mean-field limit in quantum mechanics is to consider the following functional (see Definition 2.2 and formula (6) in \cite{Pickl1}, with the choice $n(k):=k/N$, in the notation of \cite{Pickl1}):
$$
\a_N(\Psi_N,\psi):=\La\Psi_N\Big|\frac1N\sum_{k=1}^NJ_k(I_\fH-|\psi\ra\la\psi|)\Big|\Psi_N\Ra=\la\Psi_N|\cM_N^{in}(I_\fH-|\psi\ra\la\psi|)|\Psi_N\ra
$$
for all $\Psi_N\in\fH_N$ and $\psi\in\fH$. 

Assuming that $\psi\equiv\psi(t,x)$ is a solution of \eqref{TDHWf} while $\Psi_N(t,\cdot):=\cU_N(t)\Psi_N^{in}$, Pickl studies in section 2.1 of \cite{Pickl1} the time-dependent function $t\mapsto\a_N(\Psi_N(t,\cdot),\psi(t,\cdot))$, and proves
that it satisfies some Gronwall inequality.

Observe first that Pickl's functional $\a_N(\Psi_N(t,\cdot),\psi(t,\cdot))$ can be recast in terms of the quantum Klimontovich solution $\cM_N(t)$ as follows
\be\lb{PicklQKlim}
\ba
\a_N(\cU_N(t)\Psi^{in}_N,\psi(t,\cdot))=&\la\cU_N(t)\Psi^{in}_N|\cM_N^{in}(I_\fH-|\psi(t,\cdot)\ra\la\psi(t,\cdot)|)\cU_N(t)|\Psi^{in}_N\ra
\\
=&\la\Psi^{in}_N|\cU_N(t)\left(\cM_N^{in}(I_\fH-|\psi(t,\cdot)\ra\la\psi(t,\cdot)|)\right)\cU_N(t)|\Psi^{in}_N\ra
\\
=&\la\Psi_N^{in}|\cM_N(t)(I_\fH-|\psi(t,\cdot)\ra\la\psi(t,\cdot)|)|\Psi^{in}_N\ra\,.
\ea
\ee
This identity suggests therefore to deduce from \eqref{KlimEq} and \eqref{TDHWf} the expression of 
$$
\tfrac{d}{dt}\cM_N(t)(I_\fH-|\psi(t,\cdot)\ra\la\psi(t,\cdot)|)
$$
in terms of the interaction operator $\cC$ defined in \eqref{DefC}.

This is done in the first part of the next theorem, which is our second main result in this paper. 

\begin{Thm}\lb{T-OpIneq}
Assume that the (real-valued) interaction potential $V$, viewed as an (unbounded) multiplication operator acting on $\fH:=L^2(\bR^3)$, satisfies the parity condition \eqref{Veven} and  \eqref{CondV1}.

Let $\psi^{in}\in H^2(\bR^3)$ satisfy $\|\psi^{in}\|_{\fH}=1$, let $\psi$ be the solution of the Cauchy problem \eqref{TDHWf} for the time-dependent Hartree equation, and set
\be\lb{R=}
R(t):=|\psi(t,\cdot)\ra\la\psi(t,\cdot)|\,,\quad\text{ and }\quad P(t):=I_\fH-R(t)\,.
\ee
Then

\smallskip
\noindent
(1) the $N$-body quantum Klimontovich solution $t\mapsto\cM_N(t)$ satisfies 
$$
i\hb\d_t(\cM_N(t)(P(t)))=\cC(V,\cM_N(t)-\cR(t),\cM_N(t))(R(t))\,,
$$
where
$$
\cR(t)A:=\la\psi(t,\cdot)|A|\psi(t,\cdot)\ra I_{\fH_N}=\Tr_{\fH}(R(t)A)I_{\fH_N}\,;
$$
(2) the operator $\cC(V,\cM_N(t)-\cR(t),\cM_N(t))(P(t))$ is skew-adjoint on $\fH_N$ and satisfies the operator inequality
$$
\pm i\cC(V,\cM_N(t)-\cR(t),\cM_N(t))(R(t))\le 6L(t)\left(\cM_N(t)(P(t))+\tfrac{2}NI_{\fH_N}\right)\,,
$$
where\footnote{We recall that, if $E,F$ are Banach spaces
$$
\|v\|_{E\cap F}:=\max(\|v\|_E,\|v\|_F)\,,
$$
and
$$
\|f\|_{L^p(\bR^d)+L^q(\bR^d)}=\inf\{\|f_1\|_{L^p(\bR^d)}\!+\!\|f_2\|_{L^q(\bR^d)}\text{ s.t. }f\!=\!f_1\!+\!f_2\text{ with }f_1\!\in\! L^p(\bR^d),\,f_2\!\in\! L^q(\bR^d)\}\,.
$$}
\be\lb{DefL}
L(t):=2\max(1,C_S)\|V\|_{L^2(\bR^3)+L^\infty(\bR^3)}\|\psi(t,\cdot)\|_{H^2(\bR^3)}\,,
\ee
and where $C_S$ is the norm of the Sobolev embedding $H^2(\bR^3)\subset L^\infty(\bR^3)$.
\end{Thm}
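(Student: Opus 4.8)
For part (1), the plan is to differentiate $t\mapsto\cM_N(t)(P(t))$ directly. Write $\bar V:=V\star|\psi(t,\cdot)|^2$ for the mean-field potential; since $\psi$ solves \eqref{TDHWf} with $\|\psi(t,\cdot)\|_{\fH}\equiv1$, one has $i\hb\d_tR(t)=[K+\bar V,R(t)]$, hence $i\hb\d_tP(t)=[K+\bar V,P(t)]$. Inserting this and \eqref{KlimEq} into the product rule, the two terms containing $K$ cancel (they both equal $\pm\cM_N(t)([K,P(t)])$, one coming from $\ad^*(K)$ and one from $\d_tP$), leaving $i\hb\d_t(\cM_N(t)(P(t)))=\cM_N(t)([\bar V,P(t)])-\cC(V,\cM_N(t),\cM_N(t))(P(t))$. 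Unwinding \eqref{DefC}: since $\cR(t)E_\om^*=\la\psi|E_\om^*|\psi\ra I_{\fH_N}$ is a scalar multiple of the identity and $\tfrac1{(2\pi)^3}\int\hat V(\om)\la\psi|E_\om^*|\psi\ra E_\om\,d\om=\bar V$ as a multiplication operator, one gets $\cC(V,\cR(t),\cM_N(t))(R(t))=\cM_N(t)([\bar V,R(t)])=-\cM_N(t)([\bar V,P(t)])$; moreover $\cM_N(t)(E_\om^*)$, $\cM_N(t)(E_\om)$ commute and $\cR(t)E_\om^*$ is scalar, so $\cC(V,\cM_N(t)-\cR(t),\cM_N(t))(I_{\fH})=0$ and $\cC(V,\cM_N(t)-\cR(t),\cM_N(t))(P(t))=-\cC(V,\cM_N(t)-\cR(t),\cM_N(t))(R(t))$. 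Combining these gives $i\hb\d_t(\cM_N(t)(P(t)))=\cC(V,\cM_N(t)-\cR(t),\cM_N(t))(R(t))$. The only care needed is that \eqref{KlimEq} was stated for $\hat V\in L^1$: for $V$ as in \eqref{CondV1} I would split off the $\cF L^1$ part (for which the argument applies verbatim) and recover the $L^2$ part by approximating it in $L^2$ by potentials with integrable Fourier transform, passing to the limit via the $L^2$-continuity of all terms established in the proof of Theorem \ref{T-DefC}. Skew-adjointness is then immediate: $\cC(V,\cM_N(t),\cM_N(t))(R(t))$ is skew-adjoint by Theorem \ref{T-DefC}, and $\cC(V,\cR(t),\cM_N(t))(R(t))=\cM_N(t)([\bar V,R(t)])$ is the image under the $*$-preserving map $\cM_N(t)$ of a commutator of self-adjoint operators, hence skew-adjoint; so is the difference, and hence $\cC(V,\cM_N(t)-\cR(t),\cM_N(t))(P(t))$.

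For the operator inequality, conjugating by $\cU_N(t)$ reduces everything to the same inequality with $\cM_N^{in}$ in place of $\cM_N(t)$, for a fixed $\psi=\psi(t,\cdot)\in H^2(\bR^3)$ with $\|\psi\|_{\fH}=1$. Put $p_k:=J_k(R)$, $q_k:=I_{\fH_N}-p_k$, so $a:=\cM_N^{in}(P)=\tfrac1N\sum_kq_k\ge0$; write $V_{jk}$ for multiplication by $V(x_j-x_k)$ and $\bar V_k:=J_k(\bar V)$. Unwinding \eqref{DefC} exactly as in Theorem \ref{T-DefC} (the $j=k$ contributions cancel pointwise in $\om$, so no divergence occurs) yields the explicit formula $\cC(V,\cM_N^{in}-\cR,\cM_N^{in})(R)=\tfrac1{N^2}\sum_{j\ne k}[V_{jk},p_k]-\tfrac1N\sum_k[\bar V_k,p_k]$. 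The analytic input is $\|V_{jk}p_k\|\le L/2$ and $\|\bar V_kp_k\|\le L/2$, both of which follow from $V\in L^2(\bR^3)+L^\infty(\bR^3)$ together with the Sobolev embedding $H^2\subset L^\infty$ — this is precisely what the constant $L$ in \eqref{DefL} encodes — so that $\|U_{jk}p_k\|\le L$ for $U_{jk}:=V_{jk}-\bar V_k$, and the last sum is bounded in operator norm by $L/N$. The key algebraic fact is the mean-field cancellation $p_jU_{jk}p_j=0$, valid because $\la\psi|V_{jk}|\psi\ra_j=\bar V(x_k)=\la\psi|\bar V_k|\psi\ra_j$ ($\la\psi|\cdot|\psi\ra_j$ denoting partial expectation in the $j$-th variable). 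Rewriting the first sum as $\tfrac1{N^2}\sum_{j\ne k}[U_{jk},p_k]$ plus a remainder of operator norm $\le L/N$, and expanding $[U_{jk},p_k]=(p_j+q_j)[U_{jk},p_k](p_j+q_j)$ with $[U_{jk},p_k]=q_kU_{jk}p_k-p_kU_{jk}q_k$, the $p_j(\cdots)p_j$ block vanishes and the rest regroups into a \emph{tame} part $\tfrac i{N^2}\sum_{j\ne k}(q_kU_{jk}p_kq_j-q_jp_kU_{jk}q_k)$ and a \emph{dangerous} part $\tfrac i{N^2}\sum_{j\ne k}(q_jq_kU_{jk}p_kp_j-p_jp_kU_{jk}q_kq_j)$. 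The tame part is controlled on each $\Psi_N$ using $\|U_{jk}p_k\|\le L$, Cauchy--Schwarz over the pairs $(j,k)$, and $\sum_k\la\Psi_N|q_k|\Psi_N\ra=N\la\Psi_N|a|\Psi_N\ra$, giving quadratic form $\le2L\la\Psi_N|a|\Psi_N\ra$.

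The genuine obstacle is the dangerous part. It is self-adjoint, with quadratic form equal to $-\tfrac2{N^2}\IM\big(\sum_j\la q_j\Psi_N|T_jp_j\Psi_N\ra\big)$, where $T_j:=\sum_{k\ne j}q_kU_{jk}p_k$; a term-by-term bound is useless, as it would contribute an $O(1)$ rather than $O(1/N)$ term to the $I_{\fH_N}$-part, so a variance estimate on $\|T_jp_j\Psi_N\|^2$ is needed. The plan is: contracting the $j$-th factor against $\psi$ (write $p_j\Psi_N=|\psi\ra_j\otimes\Phi$ with $\Phi$ the partial contraction), the diagonal $k=k'$ terms contribute $\le NL^2\|\Psi_N\|^2$, while the off-diagonal $k\ne k'$ terms reduce to $\la\Phi|q_{k'}p_kG_{kk'}p_{k'}q_k|\Phi\ra$ with $G_{kk'}:=\la\psi|U_{jk}U_{jk'}|\psi\ra_j$ — here the would-be dominant pieces $\la\psi|U_{jk}p_jU_{jk'}|\psi\ra_j$ drop out precisely because $\la\psi|U_{jk}|\psi\ra_j=0$ — and since $\|G_{kk'}\|_{L^\infty}\le L^2$ and $\sum_k\la\Psi_N|q_k|\Psi_N\ra=N\la\Psi_N|a|\Psi_N\ra$, these sum to $\le N^2L^2\la\Psi_N|a|\Psi_N\ra$. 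Hence $\|T_jp_j\Psi_N\|^2\le NL^2\|\Psi_N\|^2+N^2L^2\la\Psi_N|a|\Psi_N\ra$, and two further Cauchy--Schwarz inequalities together with Young's inequality bound the dangerous part by $3L\la\Psi_N|a|\Psi_N\ra+\tfrac LN\|\Psi_N\|^2$. Adding the remainder, the tame part, and the dangerous part gives $|\la\Psi_N|i\,\cC(V,\cM_N^{in}-\cR,\cM_N^{in})(R)|\Psi_N\ra|\le 5L\la\Psi_N|a|\Psi_N\ra+\tfrac{2L}N\|\Psi_N\|^2\le6L\big(\la\Psi_N|a|\Psi_N\ra+\tfrac2N\|\Psi_N\|^2\big)$; since the operator $i\,\cC(V,\cM_N^{in}-\cR,\cM_N^{in})(R)$ is self-adjoint, bounding its quadratic form in absolute value by the quadratic form of a positive operator yields the two-sided operator inequality, and conjugating back by $\cU_N(t)$ completes (2). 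I expect the variance estimate — specifically, tracking which rearrangements of the $p$'s and $q$'s are legitimate so that the mean-field cancellation $\la\psi|U_{jk}|\psi\ra_j=0$ can be brought to bear on the off-diagonal pair terms — to be the delicate step.
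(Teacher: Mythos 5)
Your proposal takes a genuinely different route from the paper's in both parts, and both routes are broadly sound; a brief comparison and one caveat are worth recording.

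\textbf{Part (1).} You differentiate $\cM_N(t)(P(t))$ directly via the product rule, invoke \eqref{KlimEq}, convert $P(t)$ to $R(t)$ via $\cC(\ldots)(I_\fH)=0$, and then bootstrap the $L^2$ part of $V$ by approximating it by potentials with integrable Fourier transform. The paper instead never uses \eqref{KlimEq} as such: it starts from the weak Heisenberg relation (formula (25) of [FGTPaul]) tested against symmetric densities $F_N$, shows that the resulting operator $S_N$ annihilates all such $F_N$, and upgrades to $S_N=0$ by a symmetrization argument. The authors explicitly remark that the $\cF L^1$ argument of [FGTPaul] ``cannot be used here since $V\notin\cF L^1$'', which is why they set up the trace formulation; the point is that the commutator $[V_{12},J_1R(t)]$ is absolutely controlled in the Fourier variable only because $R(t)$ projects onto an $L^\infty$ state, and the paper exploits this directly rather than by approximation. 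Your approximation program is plausible (norm-resolvent convergence of $\cH_N^{(n)}\to\cH_N$ gives strong convergence of the unitary groups, and the $L^2$-estimates of Theorem \ref{T-DefC} control the $\cC$-term), but it involves passing to the limit in an evolution equation (in integrated form) and is left as a sketch; this is the place where more detail would be required if you carried the argument out. Your algebra for reducing $\cC(\ldots)(P(t))$ to $-\cC(\ldots)(R(t))$ is correct, as is the identification $\cM_N(t)([\bar V,R(t)])=\cC(V,\cR(t),\cM_N(t))(R(t))$.

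\textbf{Part (2).} You unwind $\cC(V,\cM_N^{in}-\cR,\cM_N^{in})(R)=\tfrac1{N^2}\sum_{j\ne k}[V_{jk},p_k]-\tfrac1N\sum_k[\bar V_k,p_k]$, pass to $U_{jk}:=V_{jk}-\bar V_k$ (exploiting the mean-field cancellation $p_jU_{jk}p_j=0$), and split $(p_j+q_j)[U_{jk},p_k](p_j+q_j)$ into a tame part, which is controlled by Cauchy--Schwarz over pairs, and a dangerous part, which you control by a second-moment/variance estimate on $\|T_jp_j\Psi_N\|^2$ exploiting the vanishing of the $j$-contraction $\la\psi|U_{jk}|\psi\ra_j=0$ to kill the off-diagonal pair terms. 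The paper instead decomposes $\cC$ into $T_1,T_2,T_3,T_4$ (Lemma \ref{L-Decomp-cC}), bounds $T_1,T_2,T_4$ by $\cM_N(t)(P(t))$ up to $O(1/N)$ more directly, and handles the analogue of your dangerous part ($T_3$) by the Pickl pseudo-inverse $\Pi_N^{-1}$ of $\cM_N^{in}(P)$ (Lemma \ref{L-LemPi}), using $\Pi_N^{1/2}\Pi_N^{-1/2}=I-R^{\otimes N}$ to redistribute a factor of $\Pi_N$ in the Hilbert--Schmidt pairing. The two mechanisms are different in flavor but equivalent in strength: the pseudo-inverse trick commutes $\Pi_N^{-1/2}$ past the projections and recovers a factor $\Tr(\Pi_N F_N)$, while your variance estimate achieves the same $O(1/N)$ gain via the off-diagonal cancellation $\la\psi|U_{jk}|\psi\ra_j=0$ in $\|T_jp_j\Psi_N\|^2$. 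Your constants land at $5L\la a\ra+\tfrac{2L}N\|\Psi_N\|^2\le 6L(\la a\ra+\tfrac2N\|\Psi_N\|^2)$, consistent with the theorem, and your verification that $\|V_{jk}p_k\|,\|\bar V_kp_k\|\le L/2$ is the same $L^2+L^\infty$ plus Sobolev estimate the paper uses to bound $\ell(t)$ by $L(t)$.

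In short: both parts are re-derived by different means; part (2) is a legitimate alternative to the paper's $\Pi_N^{-1}$ machinery, while part (1) has an honest but unexecuted gap (the $L^2$-approximation of the Klimontovich equation) which the paper's weak/trace argument is specifically designed to bypass.
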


\smallskip
The operator inequality for quantum Klimontovich solutions in the case of potentials with Coulomb type singularity obtained in part (2) of Theorem \ref{T-OpIneq} can be thought of as the reformulation of Pickl's argument in terms of 
the quantum Klimontovich solution $\cM_N(t)$. 

Indeed, we deduce from parts (1) and (2) in Theorem \ref{T-OpIneq} the operator inequality
\be\lb{GronwOp}
\tfrac{d}{dt}\cM_N(t)(P(t))\le\tfrac{6L(t)}\hbar\left(\cM_N(t)(P(t))+\tfrac{2}NI_{\fH_N}\right)\,.
\ee
Then, evaluating both sides of this inequality on the initial $N$-particle state $\Psi_N^{in}$, and taking into account the identity \eqref{PicklQKlim} leads to the Gronwall inequality
$$
\tfrac{d}{dt}\a_N(\cU_N(t)\Psi^{in}_N,\psi(t,\cdot))\le\tfrac{6L(t)}\hbar\left(\a_N(\cU_N(t)\Psi^{in}_N,\psi(t,\cdot))+\tfrac{2}N\right)
$$
satisfied by Pickl's functional $\a_N(\cU_N(t)\Psi^{in}_N,\psi(t,\cdot))$. This last inequality corresponds to inequality (11) and Lemma 3.2 in \cite{Pickl1}.

\smallskip
In the sequel, we shall denote by $\cL^p(\fH)$ for $p\ge 1$ the Schatten two-sided ideal of $\cL(\fH)$ consisting of operators $T$ such that
$$
\|T\|_p:=\left(\Tr_\fH((T^*T)^{p/2})\right)^{1/p}<\infty\,.
$$
In particular $\cL^1(\fH)$ is set of trace-class operators on $\fH$ and $\|\cdot\|_1$ the trace norm, while $\cL^2(\fH)$ is set of Hilbert-Schmidt operators on $\fH$ and $\|\cdot\|_2$ the Hilbert-Schmidt norm.

\begin{Cor}\lb{C-MF}
Under the same assumptions and with the same notations as in Theorem \ref{T-OpIneq}, consider the $N$-body wave function $\Psi_N(t,\cdot):=\cU_N(t)(\psi^{in})^{\otimes N}$, and the $N$-body density operator 
$F_N(t):=|\Psi_N(t,\cdot)\ra\la\Psi_N(t,\cdot)|$. For each $m=1,\ldots,N$, the $m$-particle reduced density operator $F_{N:m}(t)$, defined by the identity
$$
\Tr_{\fH_m}(F_{N:m}(t)A_1\otimes\ldots\otimes A_m)=\la\Psi_N(t,\cdot)|A_1\otimes\ldots\otimes A_m\otimes I_{\fH_{N-m}}|\Psi_N(t,\cdot)\ra
$$
for all $A_1,\ldots,A_m\in\cL(\fH)$, satisfies
$$
\|F_{N:m}(t)-R(t)^{\otimes m}\|_1\le 4\sqrt{\frac{m}N}\exp\left(\tfrac{3}\hb\int_0^tL(s)ds\right)\,,
$$
with $L$ given by \eqref{DefL}.
\end{Cor}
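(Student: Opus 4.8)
The plan is to run the last, soft step of Pickl's scheme: propagate the operator inequality \eqref{GronwOp}, evaluate it on a factorized initial state, and turn the resulting bound on Pickl's functional into trace-norm convergence of the reduced density operators. First I would specialize \eqref{GronwOp}, established right after Theorem \ref{T-OpIneq}, to the initial state $\Psi_N^{in}=(\psi^{in})^{\otimes N}$. Abbreviating $\a_N(t):=\a_N(\cU_N(t)\Psi_N^{in},\psi(t,\cdot))$, which by \eqref{PicklQKlim} equals $\la\Psi_N^{in}|\cM_N(t)(P(t))|\Psi_N^{in}\ra$, one gets
$$
\tfrac{d}{dt}\a_N(t)\le\tfrac{6L(t)}\hb\left(\a_N(t)+\tfrac2N\right)\,,\qquad t\in\bR\,.
$$
Since $\|\psi^{in}\|_\fH=1$ forces $(I_\fH-|\psi^{in}\ra\la\psi^{in}|)\psi^{in}=0$, each operator $J_k(P(0))$ annihilates $(\psi^{in})^{\otimes N}$, hence $\a_N(0)=0$. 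Applying Gronwall's lemma to $t\mapsto e^{-G(t)}(\a_N(t)+\tfrac2N)$, with $G(t):=\tfrac6\hb\int_0^tL(s)ds$, yields
$$
\a_N(t)\le\tfrac2N\big(e^{G(t)}-1\big)\le\tfrac2N\exp\left(\tfrac6\hb\int_0^tL(s)ds\right)\,;
$$
this is inequality (11) and Lemma 3.2 of \cite{Pickl1} for factorized data.

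Next I would relate $\a_N(t)$ to the reduced density operators $F_{N:m}(t)$. Because $\cH_N$ commutes with the permutations of the factors of $\fH_N=\fH^{\otimes N}$, the wave function $\Psi_N(t,\cdot)=\cU_N(t)(\psi^{in})^{\otimes N}$ stays symmetric for all $t$; combining this with the definitions \eqref{DefJk}--\eqref{DefMNt} gives
$$
\a_N(t)=\la\Psi_N(t,\cdot)|J_1(P(t))|\Psi_N(t,\cdot)\ra=\Tr_\fH\big(F_{N:1}(t)(I_\fH-R(t))\big)\,.
$$
Writing $R_j(t):=J_j(R(t))$, which are mutually commuting orthogonal projections on $\fH_N$ with $R_1(t)\cdots R_m(t)=R(t)^{\otimes m}\otimes I_{\fH_{N-m}}$, the telescoping identity
$$
I_{\fH_N}-R_1(t)\cdots R_m(t)=\sum_{j=1}^m R_1(t)\cdots R_{j-1}(t)\big(I_{\fH_N}-R_j(t)\big)
$$
exhibits the left-hand side as a sum of $m$ orthogonal projections, the $j$-th dominated by $I_{\fH_N}-R_j(t)=J_j(P(t))$. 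Taking the expectation in $\Psi_N(t,\cdot)$ and using symmetry (which makes $\la\Psi_N(t,\cdot)|J_j(P(t))|\Psi_N(t,\cdot)\ra=\a_N(t)$ independent of $j$) then gives
$$
\Tr_{\fH_m}\big(F_{N:m}(t)(I_{\fH_m}-R(t)^{\otimes m})\big)=1-\la\psi(t,\cdot)^{\otimes m}|F_{N:m}(t)|\psi(t,\cdot)^{\otimes m}\ra\le m\,\a_N(t)\,.
$$

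Finally I would use the elementary bound
$$
\|F_{N:m}(t)-R(t)^{\otimes m}\|_1\le\sqrt{8\,\Tr_{\fH_m}\big(F_{N:m}(t)(I_{\fH_m}-R(t)^{\otimes m})\big)}\,,
$$
valid because $R(t)^{\otimes m}=|\psi(t,\cdot)^{\otimes m}\ra\la\psi(t,\cdot)^{\otimes m}|$ is a rank-one orthogonal projection: splitting $F_{N:m}(t)-R(t)^{\otimes m}$ into its four blocks relative to $R(t)^{\otimes m}$ and $I_{\fH_m}-R(t)^{\otimes m}$, each diagonal block has trace norm $\Tr_{\fH_m}(F_{N:m}(t)(I_{\fH_m}-R(t)^{\otimes m}))$, each off-diagonal block is bounded by Cauchy--Schwarz for the Hilbert--Schmidt norm, and $\sqrt s+\sqrt{1-s}\le\sqrt2$ for $s\in[0,1]$. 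Combining the last two displays with the Gronwall bound on $\a_N(t)$,
$$
\|F_{N:m}(t)-R(t)^{\otimes m}\|_1\le\sqrt{8m\,\a_N(t)}\le\sqrt{\tfrac{16m}{N}\exp\left(\tfrac6\hb\int_0^tL(s)ds\right)}=4\sqrt{\tfrac mN}\exp\left(\tfrac3\hb\int_0^tL(s)ds\right)\,,
$$
which is the asserted inequality.

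There is no real analytic obstacle here, as Theorems \ref{T-DefC} and \ref{T-OpIneq} carry all the weight. The points requiring care are that factorized data make Pickl's functional vanish at $t=0$; that the telescoping produces the factor $m$ (and not $m^2$) via the domination $R_1(t)\cdots R_{j-1}(t)(I_{\fH_N}-R_j(t))\le I_{\fH_N}-R_j(t)$ together with bosonic symmetry; and that the sharp constant $\sqrt8$ must be kept in the trace-norm/overlap bound for the stated constant $4$ to emerge.
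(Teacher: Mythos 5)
Your argument is correct and reproduces the stated bound with the right constants, and the overall scaffolding (integrate the operator inequality \eqref{GronwOp} into a Gronwall bound for Pickl's functional with $\a_N(0)=0$ for factorized data, then convert the bound on $\a_N(t)$ into a trace-norm estimate on $F_{N:m}(t)-R(t)^{\otimes m}$) is the same as the paper's. The two intermediate lemmas are, however, proved by slightly different elementary routes, and the comparison is worth noting. For the inequality $\Tr_{\fH_m}\big(F_{N:m}(t)(I-R(t)^{\otimes m})\big)\le m\,\a_N(t)$ the paper proceeds by an induction on $m$ using the operator inequality $R\otimes I^{\otimes(m-1)}-R^{\otimes m}\le I^{\otimes m}-I\otimes R^{\otimes(m-1)}$, whereas you write out the telescoping identity $I-R_1\cdots R_m=\sum_{j=1}^m R_1\cdots R_{j-1}(I-R_j)$ explicitly, dominate each summand by $I-R_j$, and invoke bosonic symmetry; these are essentially the same computation, your telescoping being the unrolled version of the paper's inductive step and arguably more transparent. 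For the bound $\|F_{N:m}(t)-R(t)^{\otimes m}\|_1\le\sqrt{8\,\Tr_{\fH_m}(F_{N:m}(t)(I-R(t)^{\otimes m}))}$ the paper uses the rank-one trick (attributed to Seiringer): since $R(t)^{\otimes m}$ is a rank-one projection, $F_{N:m}(t)-R(t)^{\otimes m}$ has at most one negative eigenvalue, and traclessness gives $\|\cdot\|_1=2\|\cdot\|\le2\|\cdot\|_2$, followed by $\|\cdot\|_2^2\le 2\Tr(F_{N:m}(I-R^{\otimes m}))$. You instead decompose $F_{N:m}(t)-R(t)^{\otimes m}$ into its four blocks relative to $Q:=R(t)^{\otimes m}$ and $Q^\perp$, compute $\|QDQ\|_1=\|Q^\perp DQ^\perp\|_1=1-s$ with $s:=\Tr(F_{N:m}Q)$, bound each off-diagonal block by $\sqrt{s(1-s)}$ via the Schatten--H\"older/Cauchy--Schwarz inequality $\|AB\|_1\le\|A\|_2\|B\|_2$, and finish with $\sqrt{s}+\sqrt{1-s}\le\sqrt2$. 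Both give the sharp constant $\sqrt8$; your block argument avoids any appeal to the spectral decomposition and is perhaps more self-contained, while the rank-one trick is shorter once one believes the negative-eigenvalue observation. Either way the end result and its dependence on $N$, $m$, $\hb$ and $L$ are identical to the corollary.
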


\smallskip
Let us briefly indicate how one arrives at the operator inequality in part (2) of Theorem \ref{T-OpIneq}. Let $\Lambda_1,\Lambda_2\in\cL(\cL(\fH),\cL(\fH_N))$ be such that
\be\lb{CondLa1La2}
\om\mapsto\la\Psi_N|\Lambda_1(E_\om^*)\Lambda_2(E_\om)|\Psi_N\ra\text{belongs to }L^1\cap L^2(\bR^3)
\ee
for all $\Psi_N\in\fH_N$. For all $V$ satisfying \eqref{Veven} and \eqref{CondV1}, define $\cT(V,\L_1,\L_2)\in\cL(\fH_N)$ by polarization of the formula
$$
\la\Psi_N|\cT(V,\L_1,\L_2)|\Psi_N\ra:=\tfrac1{(2\pi)^d}\int_{\bR^d}\hat V(\om)\la\Psi_N|\Lambda_1(E_\om^*)\Lambda_2(E_\om)|\Psi_N\ra d\om\,.
$$
In other words,
\be\lb{Def-cT}
\cT(V,\Lambda_1,\Lambda_2):=\tfrac1{(2\pi)^d}\int_{\bR^d}\hat V(\om)\Lambda_1(E_\om^*)\Lambda_2(E_\om)d\om
\ee
where the integral on the right hand is to be understood in the ultraweak sense (see footnote 3 on p. 1032 in \cite{FGTPaul}).

For each $A\in\cL(\fH)$, denote by $\Lambda_j(\bu A)$ and $\Lambda_j(A\bu)$ the linear maps
$$
\ba
\Lambda_j(\bu A):\,\cL(\fH)\ni B\mapsto\Lambda_j(BA)\in\cL(\fH_N)
\\
\Lambda_j(A\bu):\,\cL(\fH)\ni B\mapsto\Lambda_j(AB)\in\cL(\fH_N)
\ea
$$
respectively. If $A\in\cL(\fH)$ is such that $\Lambda_1,\Lambda_2(\bu A)$ and $\Lambda_2(A\bu),\Lambda_1$ satisfy \eqref{CondLa1La2}, then one has
\be\lb{C=T-T}
\cC(V,\Lambda_1,\Lambda_2)A=\cT(V,\Lambda_1,\Lambda_2(\bu A))-\cT(V,\Lambda_2(A\bu),\Lambda_1)\,.
\ee

\begin{Lem}\lb{L-SelfAd}
Let $\Lambda_1,\Lambda_2\in\cL(\cL(\fH),\cL(\fH_N))$ be $*$-homomorphisms, in other words
$$
\Lambda_j(A^*)=\Lambda_j(A)^*\,,\qquad j=1,2
$$
for all $A\in\cL(\fH)$. Assume that $\Lambda_1,\Lambda_2$ satisfy \eqref{CondLa1La2}. Then
$$
\cT(V,\Lambda_2,\Lambda_1)=\cT(V,\Lambda_1,\Lambda_2)^*\,.
$$
\end{Lem}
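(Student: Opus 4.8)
The plan is to reduce the adjoint identity to a pointwise (in $\om$) statement and then integrate. First I would recall from \eqref{Def-cT} that, in the ultraweak sense,
$$
\cT(V,\Lambda_1,\Lambda_2)=\tfrac1{(2\pi)^d}\int_{\bR^d}\hat V(\om)\Lambda_1(E_\om^*)\Lambda_2(E_\om)\dd\om\,,
$$
and observe that it suffices, by polarization and by the continuity/boundedness properties of $\cT$ already established (the integrands lie in $L^1\cap L^2(\bR^3_\om)$ by \eqref{CondLa1La2}), to check the identity on diagonal matrix elements: $\la\Psi_N|\cT(V,\Lambda_2,\Lambda_1)|\Psi_N\ra=\overline{\la\Psi_N|\cT(V,\Lambda_1,\Lambda_2)|\Psi_N\ra}$ for every $\Psi_N\in\fH_N$. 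Since a weak integral commutes with a bounded linear (respectively antilinear) functional, this reduces to showing
$$
\int_{\bR^d}\hat V(\om)\la\Psi_N|\Lambda_2(E_\om^*)\Lambda_1(E_\om)|\Psi_N\ra\dd\om=\overline{\int_{\bR^d}\hat V(\om)\la\Psi_N|\Lambda_1(E_\om^*)\Lambda_2(E_\om)|\Psi_N\ra\dd\om}\,.
$$

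Next I would compute the complex conjugate of the integrand on the right. Using that $\hat V$ is even and real-valued (the first because $V$ is even by \eqref{Veven}, the second because $V$ is real-valued), together with the elementary fact $E_\om^*=E_{-\om}$, the change of variables $\om\mapsto-\om$ gives $\overline{\hat V(\om)}=\hat V(-\om)=\hat V(\om)$ and moves the conjugation onto the operator matrix element only. So it remains to show
$$
\overline{\la\Psi_N|\Lambda_1(E_\om^*)\Lambda_2(E_\om)|\Psi_N\ra}=\la\Psi_N|\Lambda_2(E_\om^*)\Lambda_1(E_\om)|\Psi_N\ra
$$
for (almost) every $\om$. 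For this I would take adjoints: $\overline{\la\Psi_N|T|\Psi_N\ra}=\la\Psi_N|T^*|\Psi_N\ra$, so the left side equals $\la\Psi_N|(\Lambda_1(E_\om^*)\Lambda_2(E_\om))^*|\Psi_N\ra=\la\Psi_N|\Lambda_2(E_\om)^*\Lambda_1(E_\om^*)^*|\Psi_N\ra$. Now invoke the hypothesis that $\Lambda_1,\Lambda_2$ are $*$-homomorphisms: $\Lambda_2(E_\om)^*=\Lambda_2(E_\om^*)$ and $\Lambda_1(E_\om^*)^*=\Lambda_1((E_\om^*)^*)=\Lambda_1(E_\om)$. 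Hence the left side becomes $\la\Psi_N|\Lambda_2(E_\om^*)\Lambda_1(E_\om)|\Psi_N\ra$, which is exactly the right side. Integrating against $\hat V(\om)$ and dividing by $(2\pi)^d$ yields the claimed identity on diagonal elements, and polarization finishes the proof.

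I do not expect a genuine obstacle here; the only point requiring a little care is the bookkeeping of where the complex conjugation lands — specifically, keeping straight that conjugating a scalar integral conjugates $\hat V(\om)$ (harmless, since it is real) and simultaneously conjugates the matrix element (which is where the $*$-homomorphism hypothesis does the work), and that one needs no change of variables at all once one notes $E_\om^*=E_{-\om}$ is not even required — the adjoint of the operator product alone suffices. The justification that the weak integral commutes with $\la\Psi_N|\cdot|\Psi_N\ra$ and with complex conjugation is immediate from the definition of the ultraweak integral (footnote 3 on p.\ 1032 in \cite{FGTPaul}), and the integrability needed to make all manipulations legitimate is precisely condition \eqref{CondLa1La2}.
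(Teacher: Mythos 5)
Your proposal is correct and uses essentially the same two ingredients as the paper's proof, namely the $*$-homomorphism hypothesis to rewrite $\Lambda_j(E_\om^\pm)$ as adjoints, and the fact that $\hat V$ is real-valued (since $V$ is real and even) to pull the conjugation through the integral. The only difference is stylistic: the paper manipulates the ultraweak integral directly at the operator level in three lines, whereas you carry out the same computation via diagonal matrix elements and polarization, which is a legitimate (and slightly more explicit) way to make sense of the weak integral; your aside about the change of variables $\om\mapsto-\om$ being unnecessary is also accurate.
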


\begin{proof}
Indeed
$$
\ba
\cT(V,\Lambda_2,\Lambda_1)=&\tfrac1{(2\pi)d}\int_{\bR^d}\hat V(\om)\Lambda_2(E_\om^*)\Lambda_1(E_\om)d\om
\\
=&\tfrac1{(2\pi)d}\int_{\bR^d}\hat V(\om)\Lambda_2(E_\om)^*\Lambda_1(E^*_\om)^*d\om=\cT(V,\Lambda_1,\Lambda_2)^*
\ea
$$
where the first equality follows from the fact that $\Lambda_1$ and $\Lambda_2$ are *-homomorphisms, while the second equality uses the fact that $\hat V$ is real-valued, since $V$ is real-valued and even. 
\end{proof}

An easy consequence of \eqref{C=T-T} and of this lemma is that, for each $A=A^*\in\cL(\fH)$ such that $\Lambda_1,\Lambda_2\in\cL(\cL(\fH),\cL(\fH_N))$ are $*$-homomorphisms such that $\Lambda_1,\Lambda_2(\bu A)$ satisfy 
\eqref{CondLa1La2}, then 
\be\lb{cC=-cC*}
(\cC(V,\Lambda_1,\Lambda_2)A)^*=-\cC(V,\Lambda_1,\Lambda_2)A\,.
\ee

\smallskip
The key observations leading to Theorem \ref{T-OpIneq} are summarized in the two following lemmas. In the first of these two lemmas, the interaction operator is decomposed into a sum of four terms.

\begin{Lem}\lb{L-Decomp-cC}
Under the same assumptions and with the same notations as in Theorem \ref{T-OpIneq}, the interaction operator satisfies the identity
$$
\cC(V,\cM_N(t)-\cR(t),\cM_N(t))(R(t))=T_1+T_2+T_3+T_4\,,
$$
with
$$
\ba
T_1:=&\cT(V,\cM_N(t)(P(t)\bu P(t)),\cM_N(t)(P(t)\bu R(t))
\\
&-\cT(V,\cM_N(t)(R(t)\bu P(t)),\cM_N(t)(P(t)\bu P(t))\,,
\\
T_2:=&\cM_N(t)(R(t)V_{R(t)}P(t))\cM_N(t)(P(t))
\\
&-\cM_N(t)(P(t))\cM_N(t)(P(t)V_{R(t)}R(t))\,,
\\
T_3:=&\cT(V,\cM_N(t)(P(t)\bu R(t)),\cM_N(t)(P(t)\bu R(t))
\\
&-\cT(V,\cM_N(t)(R(t)\bu P(t)),\cM_N(t)(R(t)\bu P(t))\,,
\\
T_4:=&\tfrac1N\cM_N(t)[V_{R(t)},R(t)]\,.
\ea
$$
\end{Lem}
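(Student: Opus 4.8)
\textit{Plan.} I suppress the time variable, writing $\cM_N$, $\cR$, $R$, $P=I_\fH-R$ and $V_R:=V\star|\psi|^2$; here $V_R$ is a bounded multiplication operator on $\fH$ because $V\in L^2+L^\infty(\bR^3)$ and $|\psi|^2\in L^1\cap L^\infty(\bR^3)$ (the latter since $\psi\in H^2(\bR^3)\subset L^\infty(\bR^3)$). Put $c(\om):=\la\psi|E_\om|\psi\ra$, so that $c\in C_0\cap L^2(\bR^3)$ because $|\psi|^2\in L^1\cap L^2(\bR^3)$. Three elementary facts will be used repeatedly: (i) the rank-one identities $RE_\om R=c(\om)R$, $RE_\om^*R=\overline{c(\om)}R$, together with $\cR(E_\om^*)=\overline{c(\om)}I_{\fH_N}$; (ii) since $V$ is real and even, $\hat V$ is real and even, so $\tfrac1{(2\pi)^3}\int\hat V(\om)c(\om)E_\om^*d\om=V_R=\tfrac1{(2\pi)^3}\int\hat V(\om)\overline{c(\om)}E_\om d\om$ and $\tfrac1{(2\pi)^3}\int\hat V(\om)|c(\om)|^2d\om=\Tr_\fH(RV_R)$, the integrals converging in operator norm (as $\hat Vc,\hat V|c|^2\in L^1(\bR^3)$) and hence commuting with the bounded linear map $\cM_N$; (iii) the product decomposition $\cM_N(A)\cM_N(B)=\tfrac1N\cM_N(AB)+\cD_N(A,B)$ with $\cD_N(A,B):=\tfrac1{N^2}\sum_{k\ne l}J_k(A)J_l(B)$ bilinear and symmetric ($\cD_N(A,B)=\cD_N(B,A)$, since $J_k(A)$ and $J_l(B)$ commute for $k\ne l$), which also yields $[\cM_N(A),\cM_N(B)]=\tfrac1N\cM_N([A,B])$ and, taking $B=I_\fH$, $\cM_N(P)=I_{\fH_N}-\cM_N(R)$.

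I do not invoke \eqref{C=T-T} directly: $\cM_N-\cR$ is not a $*$-homomorphism, and worse, the pair $(\cM_N-\cR,\cM_N(\bu R))$ violates \eqref{CondLa1La2}, because the diagonal part $\tfrac1N\cM_N(E_\om^*E_\om R)=\tfrac1N\cM_N(R)$ of $\cM_N(E_\om^*)\cM_N(E_\om R)$ is a nonzero constant in $\om$, so the right-hand side of \eqref{C=T-T} is not even defined here. I therefore expand directly from \eqref{DefC}, substituting $(\cM_N-\cR)(E_\om^*)=\cM_N(E_\om^*)-\overline{c(\om)}I_{\fH_N}$. The terms carrying $\overline{c(\om)}$ combine, by (ii), to $\tfrac1{(2\pi)^3}\int\hat V(\om)\overline{c(\om)}[\cM_N(RE_\om)-\cM_N(E_\om R)]d\om=\cM_N(RV_R-V_RR)=-\cM_N([V_R,R])$. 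In the terms carrying $\cM_N(E_\om^*)$, the two diagonal contributions $\tfrac1N\cM_N(E_\om^*E_\om R)$ and $\tfrac1N\cM_N(RE_\om E_\om^*)$ both equal $\tfrac1N\cM_N(R)$ and cancel in the difference — this is exactly the obstruction to \eqref{C=T-T} — leaving $\tfrac1{(2\pi)^3}\int\hat V(\om)[\cD_N(E_\om^*,E_\om R)-\cD_N(RE_\om,E_\om^*)]d\om$. Into this I insert $I_\fH=P+R$ around each exponential and apply the rank-one identities, using also that $\hat V$ even lets me swap $E_\om\leftrightarrow E_\om^*$ under the integral where needed; bilinearity of $\cD_N$ then splits the integrand into terms carrying no factor $c(\om)$, terms carrying one factor $c(\om)$ or $\overline{c(\om)}$, and terms carrying $|c(\om)|^2$. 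For the first kind I write $\cD_N(A,B)=\cM_N(A)\cM_N(B)-\tfrac1N\cM_N(AB)$; the resulting products of two $\cM_N$'s are precisely the six $\cT$-terms appearing in $T_1$ and $T_3$, among which the pair $\pm\cT(V,\cM_N(R\bu P),\cM_N(P\bu R))$ cancels and the rest form $T_1+T_3$, while the $-\tfrac1N\cM_N(AB)$ corrections either vanish (e.g. $PE_\om^*R\cdot PE_\om R=0$ since $RP=0$) or generate a scalar $c(\om)$ and pass into the second kind. By (ii) every term of the second and third kinds, once its scalar is pulled out and integrated against $\hat V$, becomes a norm-convergent expression built from $\cM_N$ and $V_R$; in particular the two copies of $\Tr_\fH(RV_R)\cM_N(R)$ cancel. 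After this bookkeeping one is left with
$$
\cC(V,\cM_N-\cR,\cM_N)(R)=T_1+T_3+\cM_N(R)\cM_N(PV_RR)-\cM_N(PV_RR)+\cM_N(RV_RP)-\cM_N(RV_RP)\cM_N(R)+\cM_N(V_R)\cM_N(R)-\cM_N(R)\cM_N(V_R)\,.
$$

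It remains to identify the six $V_R$-terms with $T_2+T_4$. Using $\cM_N(P)=I_{\fH_N}-\cM_N(R)$ from (iii), one has $\cM_N(RV_RP)-\cM_N(RV_RP)\cM_N(R)=\cM_N(RV_RP)\cM_N(P)$ and $\cM_N(R)\cM_N(PV_RR)-\cM_N(PV_RR)=-\cM_N(P)\cM_N(PV_RR)$, so the sum of these four terms is $T_2$; and $\cM_N(V_R)\cM_N(R)-\cM_N(R)\cM_N(V_R)=[\cM_N(V_R),\cM_N(R)]=\tfrac1N\cM_N([V_R,R])=T_4$ by (iii). Hence $\cC(V,\cM_N-\cR,\cM_N)(R)=T_1+T_2+T_3+T_4$.

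The main obstacle is purely organizational: one must expand both halves of \eqref{DefC} into all of their constituent terms, keep all signs straight, and verify that each term not landing in $T_1,\dots,T_4$ is cancelled by a partner — in particular the $\cT(V,\cM_N(R\bu P),\cM_N(P\bu R))$ pair and the $\Tr_\fH(RV_R)\cM_N(R)$ pair — while keeping clear which integrals survive only as weak integrals (the four $\cT$'s making up $T_1$ and $T_3$) and which converge in operator norm (all terms built from $V_R$). The $L^1\cap L^2(\bR^3)$ integrability of the weakly-convergent integrands needed for the former is exactly what is established in the proof of Theorem \ref{T-DefC} and in the Remark that follows it.
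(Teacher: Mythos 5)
Your proof is correct, and I have checked that the bookkeeping closes: the six single-$c$-power terms $c(\om)\cD_N(\,\cdot\,,R)$ pair off and cancel by the symmetry $\cD_N(A,B)=\cD_N(B,A)$; the two $|c(\om)|^2\cD_N(R,R)$ terms cancel; the two $(1-|c(\om)|^2)\tfrac1N\cM_N(R)$ corrections from the products $RE_\om^*PE_\om R$ and $RE_\om PE_\om^*R$ cancel; and what remains reassembles into $T_1+\dots+T_4$. That said, your route differs organizationally from the paper's. The paper proceeds in two structured steps: it first splits only $E_\om R=PE_\om R+RE_\om R$, uses $RE_\om R=\cF(|\psi|^2)(-\om)R$ together with $[\cM_N A,\cM_N B]=\tfrac1N\cM_N([A,B])$ to produce $T_4$ at the outset, and then rewrites $(\cM_N-\cR)(E_\om^*)$ as $\cM_N\bigl(E_\om^*-\cF(|\psi|^2)(\om)I_\fH\bigr)$ by exploiting $\cM_N I_\fH=I_{\fH_N}$, whereupon $R\bigl(E_\om^*-\cF(|\psi|^2)(\om)I\bigr)R=0$ kills the diagonal block in the subsequent full decomposition of $E_\om^*$. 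You instead keep $\overline{c(\om)}I_{\fH_N}$ explicit, obtain $-\cM_N([V_R,R])$ (which is $-N T_4$ up to $O(1/N)$ corrections, not $T_4$) from the $\cR$-contribution, and only recover $T_4$ at the very end via $[\cM_N(V_R),\cM_N(R)]=\tfrac1N\cM_N([V_R,R])$, after the surviving $O(1/N)$ pieces recombine. The paper's absorption of the scalar into the argument of $\cM_N$ buys a shorter computation; your direct expansion is more laborious but makes all cancellations visible, and your preliminary observation that \eqref{C=T-T} cannot be invoked because $\cM_N(E_\om^*)\cM_N(E_\om R)$ has an $\om$-constant diagonal part $\tfrac1N\cM_N(R)$ that violates \eqref{CondLa1La2} is a worthwhile point that the paper does not spell out.
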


All the terms involved in this decomposition can be defined by the same method already used in the proof of Theorem \ref{T-DefC}. Indeed, one can check that all these terms involve only expressions of the type (I), (II) or (III) in the 
Remark following Theorem \ref{T-DefC}. This easy verification is left to the reader, and we shall henceforth consider this matter as settled by the detailed explanations concerning (I), (II) and (III) given in the previous section.

Each term in this decomposition satisfies an operator inequality involving only the operator norm of the ``mean-field squared potential'' $(V^2)_{R(t)}$, instead of the ``bare'' interaction potential $V$ itself.

\begin{Lem}\lb{L-IneqT1234}
Under the same assumptions and with the same notations as in Theorem \ref{T-OpIneq}, set
\be\lb{Defell}
\ell(t):=\|V^2\star|\psi(t,\cdot)|^2\|^\frac12\,.
\ee
Then
$$
\ba
\pm iT_1\le&2\ell(t)\left((1-\tfrac1N)\cM_N(t)(P(t))+\tfrac4NI_{\fH_N}\right)\,,
\\
\pm iT_2\le&2\ell(t)(\cM_N(t)(P(t))+\tfrac1NI_{\fH_N})\,,
\\
\pm iT_3\le&2\ell(t)((1-\tfrac1N)\cM_N(t)(P(t))+\tfrac1NI_{\fH_N})\,,
\\
\pm iT_4\le&\tfrac2N\ell(t)I_{\fH_N}\,.
\ea
$$
\end{Lem}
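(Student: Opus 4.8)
The plan is to fix $\Psi_N\in\fH_N$ with $\|\Psi_N\|_{\fH_N}=1$ (all four forms are quadratic in $\Psi_N$, so this is harmless) and, since each $T_j$ is skew-adjoint, to replace ``$\pm iT_j\le(\cdots)$'' by the equivalent scalar bound $|\la\Psi_N|iT_j|\Psi_N\ra|\le\la\Psi_N|(\cdots)|\Psi_N\ra$. Skew-adjointness of $T_1,\dots,T_4$ follows, exactly as in Lemma~\ref{L-SelfAd}, from the fact that $\cM_N(t)$ is $*$-preserving, that $\hat V$ is real and even (because $V$ is), and from the manifest symmetry of the formulas defining $T_1,\dots,T_4$ in Lemma~\ref{L-Decomp-cC}. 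As in the proof of Theorem~\ref{T-DefC} one may absorb the unitary groups $\cU_N(t)$ and work throughout with $\cM_N^{in}=\tfrac1N\sum_{k=1}^NJ_k$.

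The first step is to carry out, for each $T_j$, the Fourier integral hidden in $\cT(V,\cdot,\cdot)$. Using the (weak) identity $\tfrac1{(2\pi)^3}\int_{\bR^3}\hat V(\om)\,J_k(E_\om^*)J_l(E_\om)\,\dd\om$ equal to multiplication by $V(x_k-x_l)$ on $\fH_N$ — legitimate thanks to exactly the integrability estimates proved for Theorem~\ref{T-DefC} and for the terms (I), (II), (III) of the Remark — together with the companion identities $\tfrac1{(2\pi)^3}\int_{\bR^3}\hat V(\om)\,\widehat{|\psi|^2}(\pm\om)\,E_{\pm\om}\,\dd\om=V\star|\psi(t,\cdot)|^2$ and $\tfrac1{(2\pi)^3}\int_{\bR^3}\hat V(\om)\,|\widehat{|\psi|^2}(\om)|^2\,\dd\om=\la\psi(t,\cdot)|V\star|\psi(t,\cdot)|^2|\psi(t,\cdot)\ra$, one rewrites $T_1$ and $T_3$ as $\tfrac1{N^2}\sum_{k\neq l}$ of explicit words in the projections $J_a(R(t)),J_a(P(t))$ and in the mean-field-subtracted pair potential
$$
\widetilde V_{kl}:=V(x_k-x_l)-(V\star|\psi|^2)(x_k)-(V\star|\psi|^2)(x_l)+\la\psi|V\star|\psi|^2|\psi\ra ,
$$
regarded as a multiplication operator on $\fH_N$; the diagonal $k=l$ contributions either vanish or are bounded in operator norm by $O(\tfrac1N)\ell(t)$, which will supply the $\tfrac4N,\tfrac1N,\tfrac2N$ corrections. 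Since $T_2$ and $T_4$ already involve only $V_{R(t)}=V\star|\psi|^2$, and $\|V\star|\psi|^2\|_{L^\infty}\le\ell(t)$, $\|V^2\star|\psi|^2\|_{L^\infty}=\ell(t)^2$ by Cauchy--Schwarz (which also shows all operators in sight are bounded), $T_4$ and the $k=l$ part of $T_2$ are disposed of immediately.

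The heart of the matter is then the following chain, which is nothing but Pickl's estimate (Lemma~3.2 of \cite{Pickl1}) phrased operator-theoretically. First, the mean-field subtraction is designed so that $J_a(R(t))\,\widetilde V_{kl}\,J_a(R(t))=0$ for $a\in\{k,l\}$: for $a=k$ one has $J_k(R)V(x_k-x_l)J_k(R)=(V\star|\psi|^2)(x_l)J_k(R)$ and $J_k(R)(V\star|\psi|^2)(x_k)J_k(R)=\la\psi|V\star|\psi|^2|\psi\ra J_k(R)$, and the four contributions cancel. Consequently, inserting $I_\fH=R(t)+P(t)$ on the relevant slot and side, every occurrence of $\widetilde V_{kl}$ in $T_1,T_3$ is forced to sit between two factors each carrying at least one $J_\bullet(P(t))$. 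Second, on a factor still carrying a $J_a(R(t))$ adjacent to $\widetilde V_{kl}$ one absorbs the singular potential through
$$
J_a(R(t))\,\widetilde V_{kl}^{\,2}\,J_a(R(t))\le 16\,\ell(t)^2\,J_a(R(t))\qquad(a\in\{k,l\}),
$$
since the left side equals a multiplication operator of supremum norm $\le16\ell(t)^2$ (by Cauchy--Schwarz and $\|V^2\star|\psi|^2\|_\infty=\ell(t)^2$) times $J_a(R(t))$; hence $\|\widetilde V_{kl}J_a(R(t))\xi\|\le4\ell(t)\|J_a(R(t))\xi\|$ and one is left with bounded words in the projections. Third, Cauchy--Schwarz in $\Psi_N$ and then in the pair index, together with $\sum_k\la\Psi_N|J_k(P(t))|\Psi_N\ra=N\la\Psi_N|\cM_N(t)(P(t))|\Psi_N\ra$ and $J_k(P(t))J_l(P(t))\le J_k(P(t))\le N\,\cM_N(t)(P(t))$, turn each such form into $\la\Psi_N|\cM_N(t)(P(t))|\Psi_N\ra$ to the first power, with the constants $2\ell(t)$ and the $(1-\tfrac1N)$ prefactors (the latter from $\#\{(k,l):k\neq l\}=N(N-1)$); the same mechanism, with $V\star|\psi|^2$ replacing $\widetilde V_{kl}$, settles the off-diagonal part of $T_2$.

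The step I expect to be the real obstacle is the third one for the contributions of ``$q_kq_l\,\widetilde V\,p_kp_l$'' type occurring in $T_3$, i.e. the forms built on $J_k(P)J_l(P)\,\widetilde V_{kl}\,J_k(R)J_l(R)$: once $\widetilde V_{kl}$ has been absorbed into the $R$-side, both surviving projections sit on the same factor, and a crude Cauchy--Schwarz in the pair index yields only $\la\Psi_N|\cM_N(t)(P(t))|\Psi_N\ra^{1/2}$. Recovering the first power requires keeping the full operator $\tfrac1{N^2}\sum_{k\neq l}J_k(P)J_l(P)\widetilde V_{kl}J_k(R)J_l(R)$ intact and exploiting the cancellation in the double sum — equivalently, applying $J_k(R)\widetilde V_{kl}J_k(R)=0$ once more on the left factor so as to expose an additional $J_k(P)$, and then $J_k(P)\le N\,\cM_N(t)(P(t))$ — which is precisely the combinatorial content of Pickl's counting argument. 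All remaining terms are of the more favourable ``one condensate factor next to $V$'' type and are handled by the routine version of the third step; collecting the four estimates and summing the geometric $O(1/N)$ remainders into the $\tfrac4N,\tfrac1N,\tfrac2N$ terms completes the proof.
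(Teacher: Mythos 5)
Your plan follows the same broad strategy as the paper — represent each $\pm iT_j$ as a form, absorb the Fourier integral against $\hat V$, use $J_a(R)\widetilde V_{kl}J_a(R)=0$ to push the potential into a factor carrying a condensate projection $R$, replace $\|V\star|\psi|^2\|_\infty$ and $\|V_{kl}J_a(R)\|$ by $\ell(t)$, and then reduce to traces of $\cM_N(t)(P(t))$ via Cauchy--Schwarz and Lemma~\ref{L->0}. Your sketches of $T_4$, $T_2$ and $T_1$ match the paper in substance (diagonal/off-diagonal split, $\|\cM_N(t)(A)\|\le\|A\|$, Jensen, and the $O(1/N)$ commutator remainders).

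The gap is exactly where you flagged it: the $T_3$ estimate, i.e.\ the terms built from $J_k(P)J_l(P)\,V_{kl}\,J_k(R)J_l(R)$. You correctly note that a naive Cauchy--Schwarz here only yields $\la\Psi_N|\cM_N(t)(P(t))|\Psi_N\ra^{1/2}$. But the fix you propose — ``applying $J_k(R)\widetilde V_{kl}J_k(R)=0$ once more on the left factor so as to expose an additional $J_k(P)$'' — does not work as stated: the left factor is already $J_k(P)J_l(P)$, so there is no $J_k(R)$ left to cancel against $\widetilde V_{kl}$, and the identity gives nothing new. The paper closes this gap with a specific device (Lemma~\ref{L-LemPi}): the operator $\Pi_N(t):=\cM_N^{in}(P(t))=\tfrac1N\sum_kJ_kP(t)$ satisfies $\Pi_N(t)\ge\tfrac1N(I-R(t)^{\otimes N})$ and so admits a pseudo-inverse $\Pi_N(t)^{-1}$ with $\Pi_N(t)^{\pm1/2}\Pi_N(t)^{\mp1/2}=I-R(t)^{\otimes N}$. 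One inserts this resolution of identity between the $P$-block and the $V_{kl}R$-block, writing each summand as
$$
\bigl(\Pi_N(t)^{-1/2}J_kP(t)J_lP(t)F_N(t)^{1/2}\bigr)^{*}\,\bigl(\Pi_N(t)^{1/2}V_{kl}J_kR(t)J_lR(t)F_N(t)^{1/2}\bigr)\,,
$$
then applies Cauchy--Schwarz for the Hilbert--Schmidt norm and a further Cauchy--Schwarz in the pair index $(k,l)$. The left factor sums to $N^2\Tr(\Pi_N(t)F_N(t))$; on the right, expanding $\Pi_N(t)=\tfrac1N\sum_mJ_mP(t)$, the $m\notin\{k,l\}$ terms commute through $V_{kl}J_kR(t)J_lR(t)$ and produce another $\Tr(\Pi_N(t)F_N(t))$, while the $m\in\{k,l\}$ terms are $O(1/N)$. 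This is the genuine mechanism by which the missing half-power of $\la\Psi_N|\cM_N(t)(P(t))|\Psi_N\ra$ is recovered; it is not a consequence of re-using the cancellation $J_a(R)\widetilde V J_a(R)=0$, and your proposal does not reach it. (As a minor aside, the paper's own derivation of the $T_1$ bound actually yields $\tfrac2N$ rather than the $\tfrac4N$ stated in the lemma; the stated bound is simply not tight, so your $\tfrac4N$ is compatible but you need not work to hit it.)
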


\bigskip
\noindent
\textbf{Remarks on $\ell(t)$ in \eqref{Defell} and $L(t)$ in \eqref{DefL}.}

\smallskip
\noindent
(1) If $V$ satisfies condition \eqref{CondV1} in Theorem \ref{T-DefC}, then $V\!\in L^2(\bR^3)\!+\!L^\infty(\bR^3)$, so that $V^2\!\in L^1(\bR^d)+L^\infty(\bR^d)$. Thus $(V^2)_{R(t)}$, which is the multiplication operator
by the function $V^2\star|\psi(t,\cdot)|^2$, satisfies
$$
\ba
\ell(t)^2:=\|V^2\star|\psi(t,\cdot)|^2\|_{L^\infty(\bR^3}\le&\|V^2\|_{L^1(\bR^3)+L^\infty(\bR^3)}\|\psi(t,\cdot)\|^2_{L^1\cap L^\infty(\bR^3)}
\\
\le&2\|V\|^2_{L^1(\bR^3)+L^\infty(\bR^3)}\max(1,\|\psi(t,\cdot)\|_{L^\infty(\bR^3)})^2
\\
\le&2C^2_S\|V\|^2_{L^1(\bR^3)+L^\infty(\bR^3)}\|\psi(t,\cdot)\|_{H^2(\bR^3)}^2
\ea
$$
where we recall that $C_S$ is the norm of the Sobolev embedding $H^2(\bR^3)\subset L^\infty(\bR^3)$.

\smallskip
\noindent
(2) If $V$ satisfies \eqref{KatoCondV}, then $\|V(I-\Dlt)^{-1}\|\le M$ for some positive constant $M$ (see the discussion in \S 5.3 of chapter V in \cite{Kato}, so that
$$
V^2\le M^2(I-\Dlt)^2\,.
$$
In this remark, we shall make a slightly more restrictive assumption, namely that $V^2$ satisfies
\be\lb{IneqV}
V^2\le C(I-\Dlt)\,.
\ee
In space dimension $d=3$, the Hardy inequality, which can be put in the form\footnote{To see that $4$ is optimal, minimize in $\a>0$ the expression
$$
\int_{\bR^3}\left|\grad u+\a\frac{x}{|x|^2}u\right|^2dx\,.
$$}
$$
\frac1{|x|^2}\le 4(-\Dlt)
$$
implies that the Coulomb potential satisfies the assumption above on $V$. If the potential $V$ satisfies the (operator) inequality \eqref{IneqV}, then
$$
\ba
0\le(V^2)_{R(t)}(x)=\int_{\bR^d}V^2(y)|\psi(t,x-y)|^2dy=\la\psi(t,x-\cdot)|V^2|\psi(t,x-\cdot)\ra
\\
\le C\la\psi(t,x-\cdot)|(I-\Dlt)|\psi(t,x-\cdot)\ra=C\|\psi(t,x-\cdot)\|_{L^2}^2+C\|\grad\psi(t,x-\cdot)\|_{L^2}^2
\\
=C\|\psi(t,\cdot)\|_{L^2}^2+C\|\grad\psi(t,\cdot)\|_{L^2}^2&\,.
\ea
$$
Thus, if $\psi\in C(\bR;H^1(\bR^d))$ is a solution of the Hartree equation, 
$$
\ell(t)\le\sqrt{C}\|\psi(t,\cdot)\|_{H^1(\bR^3)}\,.
$$
(3) A bound on $\ell(t)$ in terms of $\|\psi(t,\cdot)\|_{H^1(\bR^3)}$ instead of $\|\psi(t,\cdot)\|_{H^2(\bR^3)}$ is advantageous since the former quantity can be controlled rather explicitly by means of the conservation of energy 
for the Hartree equation \eqref{TDHWf}. This explicit control is useful in particular to assess the dependence in $\hbar$ of the convergence rate for the mean-field limit obtained in Corollary \eqref{C-MF}.

Clearly, the convergence rate for the quantum mean-field limit in Corollary \ref{C-MF} is not uniform in the semiclassical regime, in the first place because of the factor $3/\hb$ on the right hand side of the upper bound for 
$\|F_{N:m}(t)-R(t)^{\otimes m}\|_1$, which comes from the $i\hb\d_t$ part of the quantum dynamical equation. 

However, one should expect that the function $\ell(t)$, or at least the upper bound for $\ell(t)$ obtained in (2), grows at least as $1/\hb$, since it involves $\|\grad_x\psi(t,\cdot)\|_{L^2}$, expected to be of order $1/\hb$ for 
semiclassical wave functions $\psi$ (think for instance of a WKB wave function, or of a Schr\"odinger coherent state).

We shall discuss this issue by means of the conservation of energy satisfied by the Hartree solution $\psi$ (see formula (5.2) in \cite{Bove2}):
$$
\ba
\tfrac12\hb^2\|\grad\psi(t,\cdot)\|_{L^2}^2+\tfrac12\int_{\bR^d}|\psi(t,x)|^2(V\star|\psi(t,x)|^2)dx
\\
=\tfrac12\hb^2\|\grad\psi^{in}\|_{L^2}^2+\tfrac12\int_{\bR^d}|\psi^{in}(x)|^2(V\star|\psi^{in}(x)|^2)dx&\,.
\ea
$$
Observe that
\be\lb{BndMFpot}
|V\star|\psi(t,x)|^2|\le\|\psi(t,\cdot)\|_{L^2}(V^2\star|\psi(t,x)|^2)^{1/2}=\ell(t)\,,
\ee
so that
$$
\ba
\tfrac12\hb^2\|\grad\psi^{in}\|_{L^2}^2+\tfrac12\int_{\bR^d}|\psi^{in}(x)|^2(V\star|\psi^{in}(x)|^2)dx
\\
\le\tfrac12\hb^2\|\psi^{in}\|_{H^1}^2+\tfrac12\ell(t)\le\tfrac12\hb^2\|\psi^{in}\|_{H^1}^2+\tfrac12\sqrt{C}\|\psi^{in}\|_{H^1}&\,.
\ea
$$

Thus, if $V\ge 0$, or if $\hat V\ge 0$, one has
$$
\int_{\bR^d}|\psi(t,x)|^2(V\star|\psi(t,x)|^2)dx=\tfrac1{(2\pi)^d}\int_{\bR^d}\hat V(\om)|\cF(|\psi(t,\cdot)|^2)|^2(\om)d\om\ge 0
$$
(where $\cF$ designates the Fourier transform on $\bR^d$), so that the conservation of mass and energy for the Hartree solution implies that
$$
\hb^2\|\psi(t,\cdot)\|_{H^1}^2\le\hb^2\|\psi^{in}\|_{H^1}^2+\sqrt{C}\|\psi^{in}\|_{H^1}\,.
$$
In that case
$$
\ell(t)\le\tfrac1\hb\sqrt{C(\hb^2\|\psi^{in}\|_{H^1}^2+\sqrt{C}\|\psi^{in}\|_{H^1})}\,.
$$
Typical states used in the semiclassical regime (WKB or coherent states, for instance) satisfy $\hb\|\grad\psi^{in}\|_{L^2}=O(1)$. Thus, in that case
$$
\ell(t)\le\hb^{-3/2}\sqrt{C(\hb^3\|\psi^{in}\|_{H^1}^2+\sqrt{C}\hb\|\psi^{in}\|_{H^1})}=O(\hb^{-3/2})\,.
$$

Things become worse if the potential energy is a priori of indefinite sign. With \eqref{BndMFpot}, the energy conservation implies that
$$
\ba
\hb^2\|\psi(t,\cdot)\|_{H^1}^2\le&\hb^2\|\psi^{in}\|_{H^1}^2+\sqrt{C}\|\psi^{in}\|_{H^1}+\sqrt{C}\|\psi(t,\cdot)\|_{H^1}
\\
\le&\hb^2\|\psi^{in}\|_{H^1}^2+\sqrt{C}\|\psi^{in}\|_{H^1}+\tfrac{C}{2\hb^2}+\tfrac12\hb^2\|\psi(t,\cdot)\|^2_{H^1}\,,
\ea
$$
so that
$$
\hb^2\|\psi(t,\cdot)\|_{H^1}^2\le 2\left(\hb^2\|\psi^{in}\|_{H^1}^2+\sqrt{C}\|\psi^{in}\|_{H^1}+\tfrac{C}{2\hb^2}\right)\le3\hb^2\|\psi^{in}\|_{H^1}^2+2\tfrac{C}{\hb^2}\,,
$$
and thus
$$
\ell(t)\le\hb^{-2}\sqrt{C(3\hb^4\|\psi^{in}\|_{H^1}^2+2C)}=O(\hb^{-2})\,.
$$

Therefore, the exponential amplifying factor in Corollary \ref{C-MF} is $\exp(Kt/\hb^{5/2})$ in the first case, and $\exp(Kt/\hb^3)$ in the second. These elementary remarks suggest that Pickl's clever method for proving the quantum mean-field
limit with singular potentials including the Coulomb potential (see \cite{Pickl1,KPickl}) is not expected to give uniform convergence rates (as in \cite{FGMouPaul,FGTPaul} in the case of regular interaction potentials) for the mean field limit in 
the semiclassical regime.


\section{Proof of part (1) in Theorem \ref{T-OpIneq}}


For each $\si\in\fS_N$ and each $\Psi_N\in\fH_N$, set
$$
(U_\si\Psi_N)(x_1,\ldots,x_N)=\Psi_N(x_{\si^{-1}(1)},\ldots,x_{\si^{-1}(N)})\,.
$$

Since $\psi(t,\cdot)\in H^2(\bR^3)$, the commutator $[\Dlt,R(t)]$ is a bounded operator on $\fH$. According to formula (25) in \cite{FGTPaul}, denoting by $V_{kl}$ the multiplication operator
\be\lb{DefVkl}
(V_{kl}\Psi_N)(x_1,\ldots,x_N)=V(x_k-x_l)\Psi_N(x_1,\ldots,x_N)\,,
\ee
one has
\be\lb{Csq25}
\ba
\Tr_{\fH_N}&((i\hb\d_t\cM_N(t)-\ad^*(-\tfrac12\hb^2\Dlt)\cM_N(t))(P(t))F_N)
\\
&=-\Tr_{\fH_N}(\tfrac{N-1}N([V_{12},J_1P(t)])F_N)=\Tr_{\fH_N}(\tfrac{N-1}N([V_{12},J_1R(t)])F_N)
\ea
\ee
for all $F_N\in\cL(\fH_N)$ such that
\be\lb{SymDens}
F_N=F_N^*\ge 0\,,\quad\Tr_{\fH_N}(F_N)=1\,,\quad\text{ and }\quad U_\si F_NU_\si^*=F_N\text{ for all }\si\in\fS_N\,.
\ee

The core result in the proof of Theorem \ref{T-DefC} is that the function
$$
\om\mapsto\la\Psi_N|J_k([E_\om,R(t)])J_l(E_\om^*)|\Psi_N\ra\in L^2\cap L^\infty(\bR^3)
$$
for each $k\not=l\in\{1,\ldots,N\}$. Since $\hat V\in L^1(\bR^3)+L^2(\bR^3)$, this has led us to define
$$
\la\Psi_N|[V_{kl},J_kR(t)]|\Psi_N\ra:=\tfrac1{(2\pi)^3}\int_{\bR^3}\hat V(\om)\la\Psi_N|J_k([E_\om,R(t)])J_l(E_\om^*)|\Psi_N\ra d\om\,,
$$
and more generally, using a spectral decomposition of the trace-class operator $F_N$, 
$$
\Tr_{\fH_N}([V_{kl},J_kR(t)]F_N):=\tfrac1{(2\pi)^3}\int_{\bR^3}\hat V(\om)\Tr_{\fH_N}(J_k([E_\om,R(t)])J_l(E_\om^*)F_N)d\om
$$
with
$$
\om\mapsto\Tr_{\fH_N}(|J_k([E_\om,R(t)])J_l(E_\om^*)F_N)\in L^2\cap L^\infty(\bR^3)\,.
$$

Since $U_\si F_NU_\si^*=F_N$ for all $\si\in\fS_N$, for each $m\not=n\in\{1,\ldots,N\}$, one has
$$
\ba
\Tr_{\fH_N}(\tfrac{N-1}N([V_{12},J_1R(t)])F_N)=\Tr_{\fH_N}(\tfrac{N-1}N([V_{mn},J_mR(t)])F_N)
\\
=\tfrac1{N^2}\sum_{1\le k\not=l\le N}\tfrac1{(2\pi)^3}\int_{\bR^3}\hat V(\om)\Tr_{\fH_N}(J_k([E_\om,R(t)])J_l(E_\om^*)F_N)d\om
\\
=\tfrac1{(2\pi)^3}\int_{\bR^3}\hat V(\om)\Tr_{\fH_N}(S_N[\psi(t,\cdot)](\om)F_N)d\om&\,.
\ea
$$

With the definition of $\cC$ in Theorem \ref{T-DefC}, we conclude that the operator 
$$
S_N=(i\hb\d_t\cM_N(t)-\ad^*(-\tfrac12\hb^2\Dlt)\cM_N(t))(P(t))-\cC(V,\cM_N(t),\cM_N(t))(R(t))
$$
satisfies
$$
\Tr_{\fH_N}(S_NF_N)=0
$$
for each operator $F_N\in\cL(\fH_N)$ satisfying \eqref{SymDens}. One easily checks that
$$
U^*_\si S_NU_\si=S_N\quad\text{ for all }\si\in\fS_N\,.
$$
Let $D_N\in\cL(\fH_N)$ be a density operator on $\fH_N$, i.e.
\be\lb{Dens}
D_N=D_N^*\ge 0\quad\text{ and }\quad\Tr_{\fH_N}(D_N)=1\,.
\ee
Obviously
$$
F_N:=\tfrac1{N!}\sum_{\si\in\fS_N}U_\si D_NU_\si^*
$$
satisfies \eqref{SymDens}, so that
$$
0=\Tr_{\fH_N}(S_NF_N)=\tfrac1{N!}\sum_{\si\in\fS_N}\Tr_{\fH_N}(U_\si^*S_NU_\si D_N)=\Tr_{\fH_N}(S_ND_N)
$$
for all $D_N\in\cL(\fH_N)$ satisfying \eqref{Dens}. Since any trace-class operator on $\fH_N$ is a linear combination of $4$ density operators, we conclude that
$$
\Tr_{\fH_N}(S_NT_N)=0\qquad\text{ for all }T_N\in\cL^1(\fH_N)\,,
$$
so that
\be\lb{SN=0}
S_N=0\,.
\ee

On the other hand
$$
\ba
\cM_N(t)(i\hb\d_tP(t))=\cM_N(t)([-\tfrac12\hb^2\Dlt+V\star|\psi(t,\cdot)|^2,P(t)])
\\
=\cM_N(t)([-\tfrac12\hb^2\Dlt,P(t)])-\cM_N(t)([V\star|\psi(t,\cdot)|^2,R(t)])
\ea
$$
so that
\be\lb{dtMNtPt}
\ba
i\hb\d_t(\cM_N(t)(P(t)))=&\ad^*(-\tfrac12\hb^2\Dlt)\cM_N(t)(P(t))+\cC(V,\cM_N(t),\cM_N(t))(R(t))
\\
&+\cM_N(t)([-\tfrac12\hb^2\Dlt,P(t)])-\cM_N(t)([V\star|\psi(t,\cdot)|^2,R(t)])
\\
=&\cC(V,\cM_N(t),\cM_N(t))(R(t))-\cM_N(t)([V\star|\psi(t,\cdot)|^2,R(t)])\,.
\ea
\ee

Finally, by condition \eqref{CondV1} on $V$, one has
$$
\psi(t,\cdot)\in H^2(\bR^3)\subset L^2\cap L^4(\bR^3)\implies V\star|\psi(t,\cdot)|^2\in\cF L^1(\bR^3)
$$
so that
\be\lb{VR=}
\ba
V\star|\psi(t,\cdot)|^2=&\tfrac1{(2\pi)^3}\int_{\bR^3}\hat V(\om)\cF(|\psi(t,\cdot)|^2)(\om)E_\om d\om
\\
=&\tfrac1{(2\pi)^3}\int_{\bR^3}\hat V(\om)\la\psi(t,\cdot)|E_\om^*|\psi(t,\cdot)\ra E_\om d\om
\\
=&\tfrac1{(2\pi)^3}\int_{\bR^3}\hat V(\om)\cR(t)(E_\om^*)E_\om d\om\,.
\ea
\ee
Hence
\be\lb{MNVR=}
\ba
\cM_N(t)([V\star|\psi(t,\cdot)|^2,R(t)])=&\tfrac1{(2\pi)^3}\int_{\bR^3}\hat V(\om)\cR(t)(E_\om^*)\cM_N(t)[E_\om,R(t)] d\om
\\
=&\cC(V,\cR(t),\cM_N(t))(R(t))
\ea
\ee
so that, returning to \eqref{dtMNtPt}, one arrives at the equality
$$
i\hb\d_t(\cM_N(t)(P(t)))=\cC(V,\cM_N(t),\cM_N(t))(R(t))-\cC(V,\cR(t),\cM_N(t))(R(t))\,,
$$
which proves part (1) in Theorem \ref{T-OpIneq}.

\smallskip
\noindent
\textbf{Remark.} In \cite{FGTPaul}, the equality 
$$
i\hb\d_t\cM_N(t)(A)=\ad^*(-\tfrac12\hb^2\Dlt)\cM_N(t)(A)-\cC(V,\cM_N(t),\cM_N(t))(A)
$$
is proved for all $A\in\cL(\fH)$ such that $[\Dlt,A]\in\cL(\fH)$ assuming that $V\in\cF L^1(\bR^3)$. This argument cannot be used here since $V\notin\cF L^1(\bR^3)$. Besides, the definition of the operator $\cC(V,\cM_N(t),\cM_N(t))(R(t))$
in Theorem \ref{T-DefC} makes critical use of the fact that $R(t)=|\psi(t,\cdot)\ra\la\psi(t,\cdot)|$ with $\psi(t,\cdot)\in L^2\cap L^\infty(\bR^3)$. This is the reason for the rather lengthy justification of \eqref{SN=0} in this section.


\section{Proof of Lemma \ref{L-Decomp-cC}}


In the sequel, we seek to ``simplify'' the expression of the interaction operator 
$$
\cC(V,\cM_N(t)-\cR(t),\cM_N(t))(R(t))\,.
$$
This will lead to rather involved computations which do not seem much of a simplification. However, we shall see that the final result of these computations, reported in Lemma \ref{L-Decomp-cC}, although algebraically  more cumbersome, 
has better analytical properties. 

\subsection{A First Simplification}

First we decompose $E_\om R(t)$ and $R(t)E_\om$ in the terms $\cM_N(t)(E_\om R(t))$ and $\cM_N(t)(R(t)E_\om)$ as
$$
E_\om R(t)=P(t)E_\om R(t)+R(t)E_\om R(t)\,,
$$ 
and observe that
$$
\ba
\cC(V,\cM_N(t)-\cR(t),\cM_N(t))(R(t))
\\
=\tfrac1{(2\pi)^3}\int_{\bR^3}\hat V(\om)((\cM_N(t)-\cR(t))(E_\om^*)\cM_N(t)(P(t)E_\om R(t))
\\
-\cM_N(t)(R(t)E_\om P(t))(\cM_N(t)-\cR(t))(E_\om^*))d\om
\\
+\tfrac1{(2\pi)^3}\int_{\bR^3}\hat V(\om)[(\cM_N(t)-\cR(t))(E_\om^*),\cM_N(t)(R(t)E_\om R(t))]d\om&\,.
\ea
$$
All the terms in the right hand side of the equality above are either similar to the one considered in Theorem \ref{T-DefC}, or of the type denoted (III) in the Remark following Theorem \ref{T-DefC}.

An elementary computation shows that, for all $\om\in\bR^d$,
$$
\ba
{}[(\cM_N(t)-\cR(t))(E_\om^*),\cM_N(t)(R(t)E_\om R(t))]
\\
=[\cM_N(t)(E_\om^*),\cM_N(t)(R(t)E_\om R(t))]
\\
=\tfrac1N\cM_N(t)[E_\om^*,R(t)E_\om R(t)]&\,.
\ea
$$
Recall indeed that, for each $A,B\in\cL(\fH)$, one has
$$
[\cM_N(t)A,\cM_N(t)B]=\tfrac1N\cM_N(t)([A,B])
$$
--- see formula before (41) on p. 1041 in \cite{FGTPaul}. On the other hand
\be\lb{RComprEom}
R(t)E_\om R(t)=|\psi(t,\cdot)\ra\la\psi(t,\cdot)|E_\om|\psi(t,\cdot)\ra\la\psi(t,\cdot)|=\cF(|\psi(t,\cdot)|^2)(-\om)R(t)\,,
\ee
so that
$$
\ba
{}[(\cM_N(t)-\cR(t))(E_\om^*),\cM_N(t)(R(t)E_\om R(t))]
\\
=\tfrac1N\cF(|\psi(t,\cdot)|^2)(-\om)\cM_N(t)[E_\om^*,R(t)]&\,.
\ea
$$

Besides
$$
\ba
(\cM_N(t)-\cR(t))(E_\om^*)=&\cM_N(t)E_\om^*-\la\psi(t,\cdot)|E_\om^*\psi(t,\cdot)|\ra\,I_{\fH_N}
\\
=&\cM_N(t)E_\om^*-\cF(|\psi(t,\cdot)|^2)(\om)I_{\fH_N}
\\
=&\cM_N(t)(E_\om^*-\cF(|\psi(t,\cdot)|^2)(\om)I_{\fH})\,.
\ea
$$
Indeed
\be\lb{MN(I)}
\cM_N^{in}I_\fH =I_{\fH_N}\implies\cM_N(t)I_\fH=\cU_N(t)^*(\cM_N^{in}I_\fH)\cU_N(t)=I_{\fH_N}
\ee
where we recall that $\cU_N(t):=e^{-it\cH_N/\hb}$, while $\cH_N$ is the $N$-body Hamiltonian.

\smallskip
Therefore
$$
\ba
\cC(V,\cM_N(t)-\cR(t),\cM_N(t))(R(t))
\\
=\tfrac1{(2\pi)^3}\int_{\bR^3}\hat V(\om)((\cM_N(t)(E_\om^*-\cF(|\psi(t,\cdot)|^2)(\om)I_\fH)\cM_N(t)(P(t)E_\om R(t))
\\
-\cM_N(t)(R(t)E_\om P(t))(\cM_N(t)(E_\om^*-\cF(|\psi(t,\cdot)|^2)(\om)I_\fH))d\om
\\
+\tfrac1{(2\pi)^3}\int_{\bR^3}\hat V(\om)\cF(|\psi(t,\cdot)|^2)(\om)\tfrac1N\cM_N(t)[E_\om,R(t)]d\om&\,,
\ea
$$
in view of \eqref{Veven}. With the formula \eqref{MNVR=}, we conclude that
\be\lb{1stSimpli}
\ba
\cC(V,\cM_N(t)-\cR(t),\cM_N(t))R(t)
\\
=\tfrac1{(2\pi)^3}\int_{\bR^3}\hat V(\om)((\cM_N(t)(E_\om^*-\cF(|\psi(t,\cdot)|^2)(\om)I_\fH)\cM_N(t)(P(t)E_\om R(t))
\\
-\cM_N(t)(R(t)E_\om P(t))(\cM_N(t)(E_\om^*-\cF(|\psi(t,\cdot)|^2)(\om)I_\fH))d\om
\\
+\tfrac1N\cM_N(t)[V\star|\psi(t,\cdot)|^2,R(t)]&\,.
\ea
\ee

\subsection{A Second Simplification}

Next we decompose $E_\om^*$ in $\cM_N(t)(E_\om^*)$ as
$$
E_\om^*=P(t)E_\om^*P(t)+P(t)E_\om^*R(t)+R(t)E_\om^*P(t)+R(t)E_\om^*R(t)\,.
$$
The identity \eqref{RComprEom} shows that
$$
R(t)E_\om^*R(t)=\cF(|\psi(t,\cdot)|^2)(\om)R(t)\,,
$$
and hence
$$
R(t)(E_\om^*-\cF(|\psi(t,\cdot)|^2)(\om)I_\fH)R(t)=0\,.
$$
Therefore
$$
\ba
\cM_N(t)(E_\om^*-\cF(|\psi(t,\cdot)|^2)(\om)I_\fH)=&\cM_N(t)(P(t)E_\om^*P(t)-\cF(|\psi(t,\cdot)|^2)(\om)P(t))
\\
&+\cM_N(t)(P(t)E_\om^*R(t)+R(t)E_\om^*P(t))\,,
\ea
$$
since $R(t)P(t)=P(t)R(t)=0$. Thus
$$
\ba
\cC(V,\cM_N(t)-\cR(t),\cM_N(t))(R(t))
\\
=\int_{\bR^3}\hat V(\om)((\cM_N(t)(P(t)E_\om^*P(t)-\cF(|\psi(t,\cdot)|^2)(\om)P(t))\cM_N(t)(P(t)E_\om R(t))
\\
-\cM_N(t)(R(t)E_\om P(t))(\cM_N(t)(P(t)E_\om^*P(t)-\cF(|\psi(t,\cdot)|^2)(\om)P(t)))\tfrac{d\om}{(2\pi)^3}
\\
+\int_{\bR^3}\hat V(\om)(\cM_N(t)(P(t)E_\om^*R(t)+R(t)E_\om^*P(t))\cM_N(t)(P(t)E_\om R(t))
\\
-\cM_N(t)(R(t)E_\om P(t))\cM_N(t)(P(t)E_\om^*R(t)+R(t)E_\om^*P(t)))\tfrac{d\om}{(2\pi)^3}
\\
+\tfrac1N\cM_N(t)[(V\star|\psi(t,\cdot)|^2),R(t)]&\,.
\ea
$$

Using again \eqref{Veven} implies that
$$
\ba
\int_{\bR^3}\hat V(\om)(\cM_N(t)(R(t)E_\om^*P(t))\cM_N(t)(P(t)E_\om R(t))d\om
\\
=\int_{\bR^3}\hat V(\om)(\cM_N(t)(R(t)E_\om P(t))\cM_N(t)(P(t)E_\om^*R(t))d\om&\,,
\ea
$$
so that
$$
\ba
\cC(V,\cM_N(t)-\cR(t),\cM_N(t))(R(t))
\\
=\int_{\bR^3}\hat V(\om)((\cM_N(t)(P(t)E_\om^*P(t)-\cF(|\psi(t,\cdot)|^2)(\om)P(t))\cM_N(t)(P(t)E_\om R(t))
\\
-\cM_N(t)(R(t)E_\om P(t))(\cM_N(t)(P(t)E_\om^*P(t)-\cF(|\psi(t,\cdot)|^2)(\om)P(t)))\tfrac{d\om}{(2\pi)^3}
\\
+\int_{\bR^3}\hat V(\om)(\cM_N(t)(P(t)E_\om^*R(t))\cM_N(t)(P(t)E_\om R(t))
\\
-\cM_N(t)(R(t)E_\om P(t))\cM_N(t)(R(t)E_\om^*P(t)))\tfrac{d\om}{(2\pi)^3}
\\
+\tfrac1N\cM_N(t)[(V\star|\psi(t,\cdot)|^2),R(t)]&\,.
\ea
$$
By \eqref{VR=}, one can further simplify the term
$$
\ba
\int_{\bR^3}\hat V(\om)\cF(|\psi(t,\cdot)|^2)(\om)&\cM_N(t)(P(t))\cM_N(t)(P(t)E_\om R(t))
\\
&-\cM_N(t)(R(t)E_\om P(t))(\cM_N(t)(P(t)))\tfrac{d\om}{(2\pi)^3}
\\
=&\cM_N(t)(P(t))\cM_N(t)(P(t)(V\star|\psi(t,\cdot)|^2)R(t))
\\
&-\cM_N(t)(R(t)(V\star|\psi(t,\cdot)|^2)P(t))\cM_N(t)(P(t))\,.
\ea
$$

Finally
$$
\cC(V,\cM_N(t)-\cR(t),\cM_N(t))(R(t))=T_1+T_2+T_3+T_4
$$
with
$$
\ba
T_1:=&\int_{\bR^3}\hat V(\om)((\cM_N(t)(P(t)E_\om^*P(t))\cM_N(t)(P(t)E_\om R(t))
\\
&-\cM_N(t)(R(t)E_\om P(t))(\cM_N(t)(P(t)E_\om^*P(t)))\tfrac{d\om}{(2\pi)^3}
\\
T_2:=&\cM_N(t)(R(t)(V\star|\psi(t,\cdot)|^2)P(t))\cM_N(t)(P(t))
\\
&-\cM_N(t)(P(t))\cM_N(t)(P(t)(V\star|\psi(t,\cdot)|^2)R(t))
\\
T_3:=&\int_{\bR^3}\hat V(\om)(\cM_N(t)(P(t)E_\om^*R(t))\cM_N(t)(P(t)E_\om R(t))
\\
&-\cM_N(t)(R(t)E_\om P(t))\cM_N(t)(R(t)E_\om^*P(t)))\tfrac{d\om}{(2\pi)^3}
\\
T_4:=&\tfrac1N\cM_N(t)[(V\star|\psi(t,\cdot)|^2),R(t)]\,.
\ea
$$
Observe again that all the integrals in the right hand side of the equalities defining $T_1$ and $T_3$ are of the form defined in Theorem \ref{T-DefC}, or of the form (I), (II)  or (III), or their adjoint, in the Remark following Theorem \ref{T-DefC}.

That
$$
\ba
T_1=&\cT(V,\cM_N(t)(P(t)\bu P(t)),\cM_N(t)(P(t)\bu R(t)))
\\
&-\cT(V,\cM_N(t)(R(t)\bu P(t)),\cM_N(t)(P(t)\bu P(t)))
\\
T_3=&\cT(V,\cM_N(t)(P(t)\bu R(t)),\cM_N(t)(P(t)\bu R(t)))
\\
&-\cT(V,\cM_N(t)(R(t)\bu P(t)),\cM_N(t)(R(t)\bu P(t)))
\ea
$$
follows from \eqref{Veven} and the definition \eqref{Def-cT}. This concludes the proof of Lemma \ref{L-Decomp-cC}.


\section{Proof of Lemma \ref{L-IneqT1234}}


In the sequel, we shall estimate these four terms in increasing order of technical difficulty.

\subsection{Bound for $T_4$}

The easiest term to treat is obviously $T_4$. We first recall that
\be\lb{MNContract}
\|\cM_N(t)(A)\|\le\|A\|\qquad\text{ for each }A\in\cL(\fH)
\ee
--- see the formula following (41) on p. 1041 in \cite{FGTPaul}. Thus
$$
\ba
\|T_4\|\le&\tfrac1N\|[V\star|\psi(t,\cdot)|^2,R(t)]\|\le\tfrac1N(\|R(t)V\star|\psi(t,\cdot)|^2\|+\|(V\star|\psi(t,\cdot)|^2)R(t)\|)
\\
=&\tfrac2N\|(V\star|\psi(t,\cdot)|^2)R(t)\|\,,
\ea
$$
where the equality follows from the fact that $R(t)=R(t)^*$, which implies that 
\be\lb{VR*=}
((V\star|\psi(t,\cdot)|^2)R(t))^*=R(t)(V\star|\psi(t,\cdot)|^2)\,.
\ee

On the other hand, by Jensen's inequality
$$
(|V|\star|\psi(t,\cdot)|^2)^2\le V^2\star|\psi(t,\cdot)|^2\,,
$$
so that
\be\lb{VRR?}
\ba
\|(V\star|\psi(t,\cdot)|^2)R(t)\|^2\le&\|V\star|\psi(t,\cdot)|^2\|^2_{L^\infty}
\\
\le&\|\,|V|\star|\psi(t,\cdot)|^2\,\|^2_{L^\infty}\le\|(V^2)\star|\psi(t,\cdot)|^2\|_{L^\infty}=\ell(t)^2\,,
\ea
\ee
and therefore
\be\lb{BdT4}
\|T_4\|\le\tfrac2N\ell(t)\,.
\ee

Finally, we recall that
$$
(\cM_N^{in}A)^*=\tfrac1N\sum_{k=1}^N(J_kA)^*=\tfrac1N\sum_{k=1}^NJ_k(A^*)=\cM_N^{in}(A^*)
$$
for each $A\in\cL(\fH)$, so that
\be\lb{MN*hom}
\ba
(\cM_N(t)A)^*=(\cU_N(t)^*(\cM_N^{in}A)\cU_N(t))^*=&\cU_N(t)^*(\cM_N^{in}A)^*\cU_N(t)
\\
=&\cU_N(t)^*\cM_N^{in}(A^*)\cU_N(t)=\cM_N(t)(A^*)\,.
\ea
\ee
Then \eqref{VR*=} and \eqref{MN*hom} imply that
$$
\ba
(\cM_N(t)[V\star|\psi(t,\cdot)|^2,R(t)])^*=&\cM_N(t)([V\star|\psi(t,\cdot)|^2,R(t)]^*)
\\
=&\cM_N(t)(-[V\star|\psi(t,\cdot)|^2,R(t)])
\\
=&-\cM_N(t)([V\star|\psi(t,\cdot)|^2,R(t)])
\ea
$$
so that $T_4^*=-T_4$. Hence $\pm iT_4$ are self-adjoint operators on $\fH_N$, so that
\be\lb{T4?}
\|T_4\|\le\tfrac2N\ell(t)\implies\pm iT_4\le\tfrac2N\ell(t)I_{\fH_N}\,.
\ee

\subsection{Bound for $T_2$}

Set
\be\lb{S2=}
S_2:=\cM_N(t)(P(t))\cM_N(t)(P(t)(V\star|\psi(t,\cdot)|^2)R(t))\,.
\ee
One has
$$
\ba
S_2=&\cU_N(t)^*\cM_N^{in}(P(t))\cM_N^{in}(P(t)(V\star|\psi(t,\cdot)|^2)R(t))\cU_N(t)
\\
=&\tfrac1N\sum_{k=1}^N\cU_N(t)^*(J_kP(t))\cM_N^{in}(P(t)(V\star|\psi(t,\cdot)|^2)R(t))(J_kP(t))\cU_N(t)
\\
&+\tfrac1N\sum_{k=1}^N\cU_N(t)^*(J_kP(t))[J_kP(t),\cM_N^{in}(P(t)(V\star|\psi(t,\cdot)|^2)R(t))]\cU_N(t)\,.
\ea
$$
Then
$$
\ba
{}[J_kP(t),\cM_N^{in}(P(t)(V\star|\psi(t,\cdot)|^2)R(t))]
\\
=\tfrac1N[J_kP(t),J_k(P(t)(V\star|\psi(t,\cdot)|^2)R(t))]
\\
=\tfrac1NJ_k(P(t)(V\star|\psi(t,\cdot)|^2)R(t))&\,,
\ea
$$
so that
$$
\ba
S_2=&\tfrac1N\sum_{k=1}^N\cU_N(t)^*(J_kP(t))\cM_N^{in}(P(t)(V\star|\psi(t,\cdot)|^2)R(t))(J_kP(t))\cU_N(t)
\\
&+\tfrac1{N^2}\sum_{k=1}^N\cU_N(t)^*J_k(P(t)(V\star|\psi(t,\cdot)|^2)R(t))\cU_N(t)\,.
\ea
$$

By cyclicity of the trace, for each $F_N^{in}$ satisfying \eqref{SymDens}, denoting 
$$
F_N(t):=\cU_N(t)F_N^{in}\cU_N(t)^*\,,
$$
one has
$$
\ba
\Tr_{\fH_N}(S_2F_N^{in})
\\
=\tfrac1N\sum_{k=1}^N\Tr_{\fH_N}(\cM_N^{in}(P(t)(V\star|\psi(t,\cdot)|^2)R(t))(J_kP(t))F_N(t)(J_kP(t)))
\\
+\tfrac1{N^2}\sum_{k=1}^N\Tr_{\fH_N}(J_k(P(t)(V\star|\psi(t,\cdot)|^2)R(t))F_N(t))
\ea
$$
so that
\be\lb{BdS2}
\ba
|\Tr_{\fH_N}(S_2F_N^{in})|
\\
\le\tfrac1N\sum_{k=1}^N\|\cM_N^{in}(P(t)(V\star|\psi(t,\cdot)|^2)R(t))\|\|(J_kP(t))F_N(t)(J_kP(t)))\|_1
\\
+\tfrac1{N^2}\sum_{k=1}^N\|J_k(P(t)(V\star|\psi(t,\cdot)|^2)R(t))\|\|F_N(t)\|_1
\\
\le\|(V\star|\psi(t,\cdot)|^2)R(t)\|\tfrac1N\sum_{k=1}^N\Tr_{\fH_N}((J_kP(t))F_N(t)(J_kP(t))))
\\
+\|(V\star|\psi(t,\cdot)|^2)R(t)\|\tfrac1{N^2}\sum_{k=1}^N\|F_N(t)\|_1
\\
=\|(V\star|\psi(t,\cdot)|^2)R(t)\|(\Tr_{\fH_N}(\cM_N(t)(P(t))F_N^{in})+\tfrac1N\|F_N^{in}\|_1)&\,.
\ea
\ee

By \eqref{MN*hom}, 
$$
S_2^*=\cM_N(t)(R(t)(V\star|\psi(t,\cdot)|^2)P(t))\cM_N(t)(P(t))\,,
$$
so that
$$
T_2=S_2^*-S_2=-T_2^*\,.
$$
Thus
$$
\ba
|\Tr_{\fH_N}(T_2F_N^{in})|\le&|\Tr_{\fH_N}(S_2^*F_N^{in})|+|\Tr_{\fH_N}(S_2F_N^{in})|
\\
=&|\Tr_{\fH_N}(F_N^{in}S_2)|+|\Tr_{\fH_N}(S_2F_N^{in})|=2|\Tr_{\fH_N}(S_2F_N^{in})|\,,
\ea
$$
so that
\be\lb{BdT2}
\ba
|\Tr_{\fH_N}(T_2F_N^{in})|
\\
\le2\|(V\star|\psi(t,\cdot)|^2)R(t)\|(\Tr_{\fH_N}(\cM_N(t)(P(t))F_N^{in})+\tfrac1N\|F_N^{in}\|_1)
\\
\le2\ell(t)(\Tr_{\fH_N}(\cM_N(t)(P(t))F_N^{in})+\tfrac1N\Tr_{\fH_N}(F_N^{in}))
\ea
\ee
by \eqref{VRR?}.

\smallskip
Next we use the following elementary observation.

\begin{Lem}\lb{L->0}
Let $T=T^*\in\cL(\fH_N)$ satisfy
$$
U_\si TU_\si^*=T\text{ for all }\si\in\fS_N\,,\quad\text{ and }\Tr_{\fH_N}(TF)\ge 0
$$
for each $F\in\cL(\fH_N)$ satisfying \eqref{SymDens}. Then $T\ge 0$. 
\end{Lem}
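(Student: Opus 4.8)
The plan is to reduce the conclusion $T\ge 0$ to the inequality $\la\Psi_N|T|\Psi_N\ra\ge 0$ for arbitrary $\Psi_N\in\fH_N$, using the same symmetrization device that was used to justify \eqref{SN=0} in the proof of part (1) of Theorem \ref{T-OpIneq}. First I would record the elementary facts about the permutation operators $U_\si$ defined by $(U_\si\Psi_N)(x_1,\ldots,x_N)=\Psi_N(x_{\si^{-1}(1)},\ldots,x_{\si^{-1}(N)})$: each $U_\si$ is unitary on $\fH_N$ (it is induced by a measure-preserving change of variables on $\bR^{3N}$), with $U_\si^*=U_\si^{-1}=U_{\si^{-1}}$ and $U_\si U_\tau=U_{\si\tau}$.

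Next, given an arbitrary density operator $D_N\in\cL(\fH_N)$, i.e. $D_N=D_N^*\ge 0$ with $\Tr_{\fH_N}(D_N)=1$, I would set
$$
F_N:=\tfrac1{N!}\sum_{\si\in\fS_N}U_\si D_NU_\si^*\,,
$$
and check that $F_N$ satisfies \eqref{SymDens}: it is a convex combination of positive operators, hence $F_N=F_N^*\ge 0$; $\Tr_{\fH_N}(F_N)=1$ by unitarity of the $U_\si$ and cyclicity of the trace; and $U_\tau F_NU_\tau^*=F_N$ for every $\tau\in\fS_N$ after reindexing the sum by $\si\mapsto\tau\si$ and invoking the group law $U_\tau U_\si=U_{\tau\si}$. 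By hypothesis $\Tr_{\fH_N}(TF_N)\ge 0$, while cyclicity of the trace together with the invariance $U_\si TU_\si^*=T$ gives
$$
\Tr_{\fH_N}(TF_N)=\tfrac1{N!}\sum_{\si\in\fS_N}\Tr_{\fH_N}(U_\si^*TU_\si\,D_N)=\Tr_{\fH_N}(TD_N)\,,
$$
so that $\Tr_{\fH_N}(TD_N)\ge 0$ for \emph{every} density operator $D_N$ on $\fH_N$.

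Finally, for any $\Psi_N\in\fH_N$ with $\|\Psi_N\|_{\fH_N}=1$ the rank-one projection $D_N:=|\Psi_N\ra\la\Psi_N|$ is a density operator, whence $\la\Psi_N|T|\Psi_N\ra=\Tr_{\fH_N}(T|\Psi_N\ra\la\Psi_N|)\ge 0$; homogeneity then yields $\la\Psi_N|T|\Psi_N\ra\ge 0$ for all $\Psi_N\in\fH_N$, i.e. $T\ge 0$. I do not expect a genuine obstacle here; the only point deserving a moment of care is verifying that the symmetrized operator $F_N$ really belongs to the class \eqref{SymDens}, in particular its $\fS_N$-invariance, which is exactly where the identity $U_\si U_\tau=U_{\si\tau}$ is used.
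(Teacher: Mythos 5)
Your proposal is correct and takes essentially the same approach as the paper: symmetrize a density operator over $\fS_N$ to land in the class \eqref{SymDens}, then use cyclicity of the trace and the invariance $U_\si TU_\si^*=T$ to pass the positivity back to $\Tr_{\fH_N}(TD_N)$. The only cosmetic difference is that the paper works directly with the rank-one projection $|\Psi\ra\la\Psi|$ rather than first treating a general density operator $D_N$ and then specializing.
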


\begin{proof} Indeed, we seek to prove that 
$$
\la\Psi|T|\Psi\ra\ge 0\qquad\text{ for each }\Psi\in\fH_N\,.
$$
For each $\Psi\in\fH_N$ such that $\|\Psi_N\|_{\fH_N}=1$, set 
$$
F=\tfrac1{N!}\sum_{\si\in\fS_N}|U_\si\Psi\ra\la U_\si\Psi|\,.
$$
Then $F$ satisfies \eqref{SymDens}, so that
$$
0\le\Tr(TF)=\tfrac1{N!}\sum_{\si\in\fS_N}\la U_\si\Psi|T|U_\si\Psi\ra=\tfrac1{N!}\sum_{\si\in\fS_N}\la\Psi|U_\si^*TU_\si|\Psi\ra=\la\Psi|T|\Psi\ra
$$
since $U_\si^*TU_\si=T$ for each $\si\in\fS_N$. Thus $\la\Psi|T|\Psi\ra\ge 0$ for each $\Psi\in\fH_N$ such that $\|\Psi_N\|_{\fH_N}=1$, and thus for each $\Psi\in\fH_N\setminus\{0\}$ by normalization.
\end{proof}

\smallskip
The inequality \eqref{BdT2} implies that
$$
\ba
2\ell(t)\Tr_{\fH_N}((\cM_N(t)(P(t))+\tfrac1NI_{\fH_N})F_N^{in})\ge&|\Tr_{\fH_N}(T_2F_N^{in})|
\\
\ge&\Tr_{\fH_N}(\pm iT_2F_N^{in})\,,
\ea
$$
and we conclude from Lemma \ref{L->0} that
\be\lb{T_2?}
\pm iT_2\le 2\ell(t)(\cM_N(t)(P(t))+\tfrac1NI_{\fH_N})\,.
\ee

\subsection{Bound for $T_1$}

Next we estimate 
$$
\ba
S_1:=&\cT(V,\cM_N(t)(P(t)\bu P(t)),\cM_N(t)(P(t)\bu R(t)))
\\
=&\cU_N(t)^*\cT(V,\cM_N^{in}(P(t)\bu P(t)),\cM_N^{in}(P(t)\bu R(t)))\cU_N(t)
\\
=&\tfrac1N\sum_{k=1}^N\cU_N(t)^*\cT(V,J_k(P(t)\bu P(t)),\cM_N^{in}(P(t)\bu R(t)))\cU_N(t)\,.
\ea
$$
Observe that
$$
\ba
\cT(V,J_k(P(t)\bu P(t)),\cM_N^{in}(P(t)\bu R(t)))
\\
=\int_{\bR^3}\hat V(\om)(J_kP(t))J_k(P(t)E_\om^*P(t))\cM_N^{in}(P(t)E_\om R(t))(J_kP(t))\tfrac{d\om}{(2\pi)^3}
\\
+\tfrac1N\int_{\bR^3}\hat V(\om)(J_kP(t))J_k(P(t)E_\om^*P(t))[J_kP(t),J_k(P(t)E_\om R(t))]\tfrac{d\om}{(2\pi)^3}&\,,
\ea
$$
since $P(t)=P(t)^2$, so that $J_kP(t)=(J_kP(t))^2$. Then
$$
[J_kP(t),J_k(P(t)E_\om R(t))]=J_k(P(t)E_\om R(t))\,,
$$
so that
$$
\ba
J_k(P(t)E_\om^*P(t))[J_kP(t),J_k(P(t)E_\om R(t))]=J_k(P(t)E_\om^*P(t)E_\om R(t))
\\
=J_k(P(t)E_\om^*(I-R(t))E_\om R(t))=-\cF(|\psi(t,\cdot)|^2)(-\om)J_k(P(t)E_\om^*R(t))&\,.
\ea
$$
Hence \eqref{Veven} implies that
$$
\ba
\tfrac1N\int_{\bR^3}\hat V(\om)(J_kP(t))J_k(P(t)E_\om^*P(t))[J_kP(t),J_k(P(t)E_\om R(t))]\tfrac{d\om}{(2\pi)^3}
\\
=-\tfrac1N(J_kP(t))J_k(P(t)(V\star|\psi(t,\cdot)|^2)R(t))&\,.
\ea
$$

On the other hand
$$
\ba
\int_{\bR^3}\hat V(\om)(J_kP(t))J_k(P(t)E_\om^*P(t))\cM_N^{in}(P(t)E_\om R(t))(J_kP(t))\tfrac{d\om}{(2\pi)^3}
\\
=\tfrac1N\int_{\bR^3}\hat V(\om)(J_kP(t))J_k(P(t)E_\om^*P(t))\sum_{l=1\atop l\not=k}^NJ_l(P(t)E_\om R(t))J_kP(t)\tfrac{d\om}{(2\pi)^3}
\\
=\tfrac1N\int_{\bR^3}\hat V(\om)(J_kP(t))J_k(E_\om^*)\sum_{l=1\atop l\not=k}^NJ_l(P(t)E_\om R(t))J_kP(t)\tfrac{d\om}{(2\pi)^3}
\\
=(J_kP(t))\left(\tfrac1N\sum_{l=1\atop l\not=k}^N(J_kP(t))(J_lP(t))V_{kl}(J_lR(t))(J_kP(t))\right)(J_kP(t))&\,,
\ea
$$
since $J_k(P(t)E_\om R(t))J_k(P(t))=0$, with $V_{kl}$ defined as in \eqref{DefVkl}.

Hence
$$
\ba
S_1=&\tfrac1{N^2}\sum_{1\le k\not=l\le N}\cU_N(t)^*(J_kP(t))^2(J_lP(t))V_{kl}(J_lR(t))(J_kP(t))^2\cU_N(t)
\\
&-\tfrac1{N^2}\sum_{k=1}^N\cU_N(t)^*(J_kP(t))J_k(P(t)(V\star|\psi(t,\cdot)|^2)R(t))\cU_N(t)\,.
\ea
$$
Therefore, by cyclicity of the trace, for each $F_N^{in}\in\cL(\fH_N)$ satisfying \eqref{SymDens}, denoting $F_N(t):=\cU_N(t)F_N^{in}\cU_N(t)^*$, one has
$$
\ba
\Tr_{\fH_N}(S_1F_N^{in})
\\
=\tfrac1{N^2}\sum_{1\le k\not=l\le N}\Tr_{\fH_N}((J_kP(t))(J_lP(t))V_{kl}(J_lR(t))(J_kP(t))^2F_N(t)(J_kP(t)))
\\
-\tfrac1{N^2}\sum_{k=1}^N\Tr_{\fH_N}(J_k(P(t)(V\star|\psi(t,\cdot)|^2)R(t))F_N(t)(J_kP(t)))&\,,
\ea
$$
so that
\be\lb{BdS1}
\ba
|\Tr_{\fH_N}(S_1F_N^{in})|
\\
\le\!\tfrac1{N^2}\!\sum_{1\le k\not=l\le N}\!\|\!(J_kP(t))\!(J_lP(t))\!V_{kl}\!(J_lR(t))\!(J_kP(t))\!\|\|\!(J_kP(t))\!F_N(t)\!(J_kP(t)))\!\|_1
\\
+\tfrac1{N^2}\sum_{k=1}^N\|J_k(P(t)(V\star|\psi(t,\cdot)|^2)R(t))\|\|F_N(t)\|_1\|J_kP(t))\|
\\
\le(1-\tfrac1N)\|V_{12}J_2R(t)\|\Tr_{\fH_N}(F_N^{in}\cM_N(t)(P(t)))
\\
+\tfrac2N\|(V\star|\psi(t,\cdot)|^2)R(t)\|\|F_N(t)\|_1&\,.
\ea
\ee

Finally
$$
\ba
T_1=&\cT(V,\cM_N(t)(P(t)\bu P(t)),\cM_N(t)(P(t)\bu R(t)))
\\
&-\cT(V,\cM_N(t)(R(t)\bu P(t)),\cM_N(t)(P(t)\bu P(t)))=S_1-S_1^*=-T_1^*
\ea
$$
because of Lemma \ref{L-SelfAd}, so that
$$
\ba
|\Tr_{\fH_N}(T_1F_N^{in})|\le&2(1-\tfrac1N)\|V_{12}J_2R(t)\|\Tr_{\fH_N}(F_N^{in}\cM_N(t)(P(t)))
\\
&+\tfrac4N\|(V\star|\psi(t,\cdot)|^2)R(t)\|\|F_N(t)\|_1\,.
\ea
$$
Since $R(s)$ is a rank-one orthogonal projection
\be\lb{V12J2R?}
\ba
\|V_{12}(J_2R(t))\|^2=&\|(J_2R(t))V^2_{12}(J_2R(t))\|
\\
=&\|(V^2\star|\psi(t,\cdot)|^2)\otimes R(s)\|\le\|(V^2\star|\psi(t,\cdot)|^2\|_{L^\infty}=\ell(t)^2\,.
\ea
\ee
Thus
\be\lb{BdT1}
\ba
|\Tr_{\fH_N}(T_1F_N^{in})|\le&2\ell(t)\left((1\!-\!\tfrac1N)\Tr_{\fH_N}(F_N^{in}\cM_N(t)(P(t)))\!+\!\tfrac2N\|F_N^{in}\|_1\right)
\\
=&2\ell(t)\left((1\!-\!\tfrac1N)\Tr_{\fH_N}(F_N^{in}\cM_N(t)(P(t)))\!+\!\tfrac2N\Tr_{\fH_N}(F_N^{in})\right).
\ea
\ee

In particular
$$
2\ell(t)\Tr_{\fH_N}\left(F_N^{in}\left((1\!-\!\tfrac1N)\cM_N(t)(P(t)))+\tfrac2NI_{\fH_N}\right)\right)\ge\Tr_{\fH_N}(\pm iT_1F_N^{in})
$$
and since this inequality holds for each $F_N^{in}\in\cL_s(\fH_N)$ such that $F_N^{in}=(F_N^{in})^*\ge 0$, we conclude from Lemma \ref{L->0} that
\be\lb{T1?}
\pm iT_1\le 2\ell(t)\left((1\!-\!\tfrac1N)\cM_N(t)(P(t)))+\tfrac2NI_{\fH_N}\right)
\ee

\subsection{The Operator $\Pi_N$}

In order to treat the last term $T_3$, we need the following auxiliary lemma --- see the formula preceding (13) in \cite{Pickl1}.

\begin{Lem}\lb{L-LemPi}
Let $R=R^*$ be a rank-one projection on $\fH$ and let $P:=I-R$. Set $\Pi_N:=\cM_N^{in}P$. For each $N>1$, 
$$
\Pi^*_N=\Pi_N\,,\quad\Pi_N^2\ge\tfrac1N\Pi_N\,,\qquad\text{ and }\quad\Ker\Pi_N=\Ker(I-R^{\otimes N})\,,
$$
so that
$$
\Pi_N\ge\tfrac1N(1-R^{\otimes N})\,.
$$
In particular, there exists a pseudo-inverse $\Pi_N^{-1}:\,(\Ker\Pi_N)^\perp\to(\Ker\Pi_N)^\perp$, with extension by $0$ on $\Ker\Pi_N$ also (abusively) denoted $\Pi_N$, such that
\be\lb{PiPi-1}
\Pi_N^{-1}\Pi_N=\Pi_N\Pi_N^{-1}=I-R^{\otimes N}\,.
\ee
\end{Lem}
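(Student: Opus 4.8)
The plan is to exploit that $\Pi_N$ is, up to the factor $1/N$, an average of $N$ \emph{mutually commuting orthogonal projections}. Write $Q_k:=J_kP$ for $k=1,\dots,N$. Since $P=P^*=P^2$ one has $Q_k^*=Q_k=Q_k^2$, so each $Q_k$ is an orthogonal projection on $\fH_N$; and for $k\ne l$ the operators $Q_k$ and $Q_l$ act nontrivially on disjoint tensor slots, hence commute, so that $Q_kQ_l$ is again an orthogonal projection (in particular $Q_kQ_l\ge 0$). Consequently $\Pi_N=\cM_N^{in}P=\tfrac1N\sum_{k=1}^NQ_k$ is self-adjoint, and, being an average of the positive operators $Q_k$ with $Q_k\le I_{\fH_N}$, it satisfies $0\le\Pi_N\le I_{\fH_N}$. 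This already gives $\Pi_N^*=\Pi_N$.

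Next I would prove $\Pi_N^2\ge\tfrac1N\Pi_N$ by the direct computation
\[
\Pi_N^2-\tfrac1N\Pi_N=\tfrac1{N^2}\Big(\sum_{k,l}Q_kQ_l-\sum_kQ_k\Big)=\tfrac1{N^2}\sum_{k\ne l}Q_kQ_l\ge 0,
\]
the middle equality using $Q_k^2=Q_k$, and the final inequality the positivity of each $Q_kQ_l$. For the kernel, since $\Pi_N\ge 0$ one has $\Pi_N\Psi=0$ iff $\la\Psi|\Pi_N|\Psi\ra=0$ iff $\sum_k\|Q_k\Psi\|^2=0$ iff $Q_k\Psi=0$ for every $k$; and $Q_k\Psi=0$ is equivalent to $\Psi=J_kR\,\Psi$, i.e. to $\Psi\in\Img(J_kR)=\fH^{\otimes(k-1)}\otimes\Img R\otimes\fH^{\otimes(N-k)}$ (here one uses that $J_kR$ is a projection). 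Intersecting over $k=1,\dots,N$ forces $\Img R$ in every slot, whence $\Ker\Pi_N=\bigcap_{k=1}^N\Img(J_kR)=(\Img R)^{\otimes N}=\Img(R^{\otimes N})$, which coincides with $\Ker(I_{\fH_N}-R^{\otimes N})$ because $R^{\otimes N}$ is an orthogonal projection.

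Finally I would combine these facts spectrally. Since $\Pi_N=\Pi_N^*$ with spectrum contained in $[0,1]$, the operator inequality $\Pi_N^2-\tfrac1N\Pi_N\ge 0$ forces $\lambda^2-\tfrac1N\lambda\ge 0$ for every spectral value $\lambda$, hence the spectrum of $\Pi_N$ is contained in $\{0\}\cup[\tfrac1N,1]$. In particular $0$ is \emph{isolated} in the spectrum, so its spectral projection is well defined: it is the orthogonal projection onto $\Ker\Pi_N=\Img(R^{\otimes N})$, namely $R^{\otimes N}$, and on its orthogonal complement $\Img(I_{\fH_N}-R^{\otimes N})$ one has $\Pi_N\ge\tfrac1N$; this is exactly $\Pi_N\ge\tfrac1N(I_{\fH_N}-R^{\otimes N})$. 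The same isolation of $0$ makes the restriction of $\Pi_N$ to $(\Ker\Pi_N)^\perp$ boundedly invertible there; extending the inverse by $0$ on $\Ker\Pi_N$ defines $\Pi_N^{-1}$ with $\Pi_N^{-1}\Pi_N=\Pi_N\Pi_N^{-1}$ equal to the orthogonal projection onto $(\Ker\Pi_N)^\perp$, i.e. to $I_{\fH_N}-R^{\otimes N}$, which is \eqref{PiPi-1}. None of the steps is delicate; the only point that should not be skipped is the verification that $0$ is isolated in the spectrum of $\Pi_N$ (what legitimizes both the spectral projection onto $\{0\}$ and the bounded pseudo-inverse), and this is precisely what the conjunction of $\Pi_N^2\ge\tfrac1N\Pi_N$ and $0\le\Pi_N\le I_{\fH_N}$ provides. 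An equivalent and perhaps more transparent route is to decompose $\fH_N=\bigoplus_{A\subseteq\{1,\dots,N\}}\fH_A$, where $\fH_A$ carries $\Img P$ in the slots indexed by $A$ and $\Img R$ in the others; on $\fH_A$ one has $\Pi_N=\tfrac{|A|}{N}\,\mathrm{Id}$, which makes all the assertions immediate with $\fH_\emptyset=\Img(R^{\otimes N})$.
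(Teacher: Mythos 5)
Your argument is correct and its first three steps (self-adjointness, $\Pi_N^2\ge\tfrac1N\Pi_N$ via positivity of the commuting products $Q_kQ_l=J_kP\,J_lP$, and the kernel identification) coincide with the paper's proof. Where you part ways is the last step, $\Pi_N\ge\tfrac1N(I-R^{\otimes N})$ and the construction of the pseudo-inverse: you invoke the spectral theorem, observing that $\Pi_N(\Pi_N-\tfrac1N I)\ge0$ together with $0\le\Pi_N\le I$ forces $\sigma(\Pi_N)\subseteq\{0\}\cup[\tfrac1N,1]$, so $0$ is isolated, its spectral projection is the projection onto $\Ker\Pi_N=\Img(R^{\otimes N})$, and $\Pi_N$ is boundedly invertible on the complement. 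The paper instead stays at the level of quadratic forms: from $\Pi_N^3\ge\tfrac1N\Pi_N^2$ it gets $\la Y|\Pi_N|Y\ra\ge\tfrac1N\|Y\|^2$ for $Y\in\IM\Pi_N$, then extends this to $(\Ker\Pi_N)^\perp=\overline{\IM\Pi_N}$ by density. Both routes are valid; yours is shorter if one is willing to use the spectral calculus, while the paper's is more hands-on. Your closing alternative, the decomposition $\fH_N=\bigoplus_{A\subseteq\{1,\dots,N\}}\fH_A$ on which $\Pi_N=\tfrac{|A|}{N}\,\mathrm{Id}$, is in fact exactly the spectral decomposition behind Pickl's formula (6), which the paper explicitly states it wants to avoid as ``not entirely obvious unless one already knows the result''; it is worth being aware that this ``more transparent route'' is precisely what the authors chose to circumvent.
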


In \cite{Pickl1}, the definition of the pseudo-inverse of $\Pi_N$ immediately follows from formula (6), which can be viewed as the spectral decomposition of $\Pi_N$. The proof below is quite straightforward and avoids using the clever argument 
leading to formula (6) in \cite{Pickl1}, which is not entirely obvious unless one already knows the result.

\begin{proof}
That $\Pi_N$ is self-adjoint is obvious by definition of $\cM_N^{in}$. Then
$$
\Pi_N^2=\frac1{N^2}\left(\sum_{k=1}^NJ_kP+2\sum_{1\le k<l\le N}J_kPJ_lP\right)\ge\frac1{N^2}\sum_{k=1}^NJ_kP=\frac1N\Pi_N\,.
$$
If $X\in\Ker\Pi_N$, one has, for each $k=1,\ldots,N$,
$$
0=\sum_{k=1}^N\la X|J_kP|X\ra\implies\la X|J_kP|X\ra=0\implies J_kPX=0\,.
$$
Hence
$$
X=J_NRX=J_{N-1}RX=\ldots=J_2RX=J_1RX
$$
so that
$$
X=J_NRX=J_NRJ_{N-1}RX=\ldots=J_NRJ_{N-1}R\ldots J_2RJ_1RX=R^{\otimes N}X\,.
$$
Thus $\Ker\Pi_N=\Ker(I-R^{\otimes N})$. Finally
$$
\Pi_N^3=\Pi_N^{1/2}\Pi_N^2\Pi_N^{1/2}\ge\frac1N\Pi_N^{1/2}\Pi_N\Pi_N^{1/2}=\frac1N\Pi_N^2\,.
$$
Therefore, for each $X\in\fH_N$, one has
$$
\la\Pi_NX|\Pi_N|\Pi_NX\ra\ge\frac1N\|\Pi_NX\|^2\,.
$$
Since $\Pi_N=\Pi_N^*$, one has
$$
\overline{\IM\Pi_N}=(\Ker\Pi_N)^\perp
$$
(see for instance Corollary 2.18 (iv) in chapter 2 of \cite{Brezis}). Since
$$
\la Y|\Pi_N|Y\ra\ge\frac1N\|Y\|^2\,,\qquad Y\in\IM\Pi_N\,,
$$
and since one has obviously $\|\Pi_N\|\le 1$, a straightforward density argument shows that
$$
\la Y|\Pi_N|Y\ra\ge\frac1N\|Y\|^2\,,\qquad Y\in(\Ker\Pi_N)^\perp\,.
$$
Hence
$$
\Pi_N\ge\tfrac1N(1-R^{\otimes N})\,.
$$
The existence of the pseudo-inverse $\Pi_N^{-1}$ follows from this inequality.
\end{proof}

\subsection{Bound for $T_3$}

Finally, we treat the term $T_3$. Set
$$
\ba
S_3=&\cT(V,\cM_N(t)(P(t)\bu R(t)),\cM_N(t)(P(t)\bu R(t)))
\\
=&\cU_N(t)^*\cT(V,\cM_N^{in}(P(t)\bu R(t)),\cM_N^{in}(P(t)\bu R(t)))\cU_N(t)\,.
\ea
$$
One easily checks that
$$
\ba
\cT(V,\cM_N^{in}(P(t)\bu R(t)),\cM_N^{in}(P(t)\bu R(t)))
\\
=\int_{\bR^3}\hat V(\om)\cM_N^{in}(P(t)E_\om^*R(t))\cM_N^{in}(P(t)E_\om R(t))\tfrac{d\om}{(2\pi)^3}
\\
=\tfrac1{N^2}\sum_{1\le k\not=l\le N}(J_lP(t))(J_kP(t))V_{kl}(J_kR(t))(J_lR(t))&\,.
\ea
$$

At this point, we set $\Pi_N(t):=\cM_N^{in}P(t)$ and use Lemma \ref{L-LemPi} to define the pseudo-inverse $\Pi_N(t)^{-1}$. One has $\Pi_N(t)=\Pi_N(t)^*\ge 0$, so that $\Pi_N(t)^{-1}=(\Pi_N(t)^{-1})^*\ge 0$ on $\Ker(I-R(t)^{\otimes N})$.
Abusing the notation $\Pi_N(t)^{-1/2}$ to designate the linear map $(\Pi_N(t)^{-1})^{1/2}$, we deduce from \eqref{PiPi-1} that
$$
\Pi_N(t)^{1/2}\Pi_N(t)^{-1/2}=I-R(t)^{\otimes N}\,,
$$
so that 
$$
(J_kP(t))\Pi_N(t)^{1/2}\Pi_N(t)^{-1/2}=\Pi_N(t)^{1/2}\Pi_N(t)^{-1/2}(J_kP(t))=J_lP(t)\,.
$$

Hence
$$
\ba
\cT(V,\cM_N^{in}(P(t)\bu R(t)),\cM_N^{in}(P(t)\bu R(t)))
\\
=\tfrac1{N^2}\!\sum_{1\le k\not=l\le N}\!(J_lP(t))(J_kP(t))\Pi_N(t)^{-\frac12}\Pi_N(t)^{\frac12}V_{kl}(J_kR(t))(J_lR(t))&\,,
\ea
$$
and we study the quantity
$$
\ba
\Tr_{\fH_N}(S_3F_N^{in})=\Tr_{\fH_N}((F_N^{in})^\frac12S_2(F_N^{in})^\frac12)
\\
=\Tr_{\fH_N}(F_N(t)^\frac12\cT(V,\cM_N^{in}(P(t)\bu R(t)),\cM_N^{in}(P(t)\bu R(t)))F_N(t)^\frac12)&\,.
\ea
$$
where $F_N(t)=\cU_N(t)F_N^{in}\cU_N(t)^*$, for each $F_N^{in}\in\cL(\fH_N)$ satisfying \eqref{SymDens}. Observe that
$$
\ba
|\!\Tr_{\fH_N}\!(F_N(t)^\frac12\!(J_lP(t))(J_kP(t))\Pi_N(t)^{-\frac12}\!\Pi_N(t)^{\frac12}\!V_{kl}(J_kR(t))(J_lR(t))F_N(t)^\frac12\!)|
\\
\le\|\Pi_N(t)^{-\frac12}(J_kP(t))(J_lP(t))F_N(t)^\frac12\|_2\|\Pi_N(t)^{\frac12}V_{kl}(J_kR(t))(J_lR(t))F_N(t)^\frac12\|_2&,
\ea
$$
so that, by the Cauchy-Schwarz inequality,
$$
\ba
|\Tr_{\fH_N}(S_3F_N^{in})|\le
&\tfrac1{N^2}\left(\sum_{1\le k\not=l\le N}\|\Pi_N(t)^{-\frac12}(J_kP(t))(J_lP(t))F_N(t)^\frac12\|_2^2\right)^{1/2}
\\
&\times\left(\sum_{1\le k\not=l\le N}\|\Pi_N(t)^{\frac12}V_{kl}(J_kR(t))(J_lR(t))F_N(t)^\frac12\|_2^2\right)^{1/2}\,.
\ea
$$

First, one has
$$
\ba
\|\Pi_N(t)^{-\frac12}(J_kP(t))(J_lP(t))F_N(t)^\frac12\|_2^2
\\
=\Tr_{\fH_N}(F_N(t)^\frac12(J_lP(t))(J_kP(t))\Pi_N(t)^{-1}(J_kP(t))(J_lP(t))F_N(t)^\frac12)
\\
=\Tr_{\fH_N}(F_N(t)^\frac12\Pi_N(t)^{-1}(J_kP(t))(J_lP(t))F_N(t)^\frac12)
\\
=\Tr_{\fH_N}(\Pi_N(t)^{-1}(J_kP(t))(J_lP(t))F_N(t))&\,,
\ea
$$
(the second equality follows from the fact that $J_k(P(t))$ commutes with $\Pi_N(t)$ and $\Pi_N(t)^{-1}$), so that
$$
\ba
\sum_{1\le k\not=l\le N}\|\Pi_N(t)^{-\frac12}(J_kP(t))(J_lP(t))F_N(t)^\frac12\|_2^2
\\
\le\Tr_{\fH_N}\left(\Pi_N(t)^{-1}\sum_{1\le k,l\le N}(J_kP(t))(J_lP(t))F_N(t)\right)
\\
=N^2\Tr_{\fH_N}(\Pi_N(t)^{-1}\Pi_N(t)^2F_N(t))=N^2\Tr_{\fH_N}(\Pi_N(t)F_N(t))&\,.
\ea
$$
The inequality above follows from the fact that
$$
\ba
\Tr_{\fH_N}(\Pi_N(t)^{-1}(J_kP(t))^2F_N(t))&
\\
=\Tr_{\fH_N}(F_N(t)^\frac12(J_kP(t))\Pi_N(t)^{-1}(J_kP(t))F_N(t)^\frac12)&\ge 0\,.
\ea
$$

On the other hand
$$
\ba
\sum_{1\le k\not=l\le N}\|\Pi_N(t)^{\frac12}V_{kl}(J_kR(t))(J_lR(t))F_N(t)^\frac12\|_2^2
\\
=\sum_{1\le k\not=l\le N}\Tr(F_N(t)^\frac12(J_lR(t))(J_kR(t))V_{kl}\Pi_N(t)V_{kl}(J_kR(t))(J_lR(t))F_N(t)^\frac12)
\\
=\tfrac1N\sum_{1\le k\not=l\le N}\|J_k(P(t))V_{kl}(J_kR(t))(J_lR(t))F_N(t)^\frac12\|_2^2
\\
+\tfrac1N\sum_{1\le k\not=l\le N}\|J_l(P(t))V_{kl}(J_kR(t))(J_lR(t))F_N(t)^\frac12\|_2^2
\\
+\tfrac1N\sum_{1\le m\not=k\not=l\le N}\|(J_mP(t))V_{kl}(J_kR(t))(J_lR(t))F_N(t)^\frac12\|_2^2
\\
\le\tfrac2N\sum_{1\le k\not=l\le N}\|V_{kl}(J_kR(t))(J_lR(t))F_N(t)^\frac12\|_2^2
\\
+\tfrac1N\sum_{1\le m\not=k\not=l\le N}\|(J_mP(t))V_{kl}(J_kR(t))(J_lR(t))F_N(t)^\frac12\|_2^2&\,.
\ea
$$
Now, $m\notin\{k,l\}$ implies that $J_mP(t)$ commutes with $V_{kl}$, $J_kR(t)$ and $J_lR(t)$, so that
$$
\ba
\|(J_mP(t))V_{kl}(J_kR(t))(J_lR(t))F_N(t)^\frac12\|_2^2
\\
=\|V_{kl}(J_kR(t))(J_lR(t))(J_mP(t))F_N(t)^\frac12\|_2^2
\\
\le\|V_{kl}(J_kR(t))(J_lR(t))\|^2\|(J_mP(t))F_N(t)^\frac12\|_2^2
\\
=\|V_{12}R(t)\otimes R(t)\|^2\Tr_{\fH_N}((J_mP(t))F_N(t)(J_mP(t)))
\\
=\|V_{12}R(t)\otimes R(t)\|^2\Tr_{\fH_N}(\Pi_N(t)F_N(t))&\,.
\ea
$$
Therefore
\be\lb{BdS3}
\ba
|\Tr_{\fH_N}(S_3F_N^{in})|
\\
\le\tfrac1{N^2}\cdot N\Tr_{\fH_N}(\Pi_N(t)F_N(t))^\frac12\left(\tfrac{2N(N-1)}N\|V_{12}R(t)\!\otimes\!R(t)\|^2\|F_N(t)\|_1\right.
\\
\left.\!+\!\tfrac{N(N-1)(N-2)}N\|V_{12}R(t)\!\otimes\!R(t)\|^2\Tr_{\fH_N}(\Pi_N(t)F_N(t))\right)^\frac12
\\
\le\!\tfrac{1}{\sqrt{N}}\|V_{12}R(t)\!\otimes\!R(t)\|\Tr_{\fH_N}(\cM_N(t)(P(t))F_N^{in})^\frac12
\\
\times\left(2\|F_N(t)\|_1+\!(N\!-\!2)\!\Tr_{\fH_N}(\cM_N(t)(P(t))F_N^{in})\right)^\frac12&.
\ea
\ee

Now
$$
\ba
T_3=&\cT(V,\cM_N(t)(P(t)\bu R(t)),\cM_N(t)(P(t)\bu R(t)))
\\
-&\cT(V,\cM_N(t)(R(t)\bu P(t)),\cM_N(t)(R(t)\bu P(t)))=S_3-S_3^*=-T_3^*\,,
\ea
$$
according to Lemma \ref{L-SelfAd}. Thus \eqref{BdS3} implies that
$$
\ba
|\Tr_{\fH_N}(T_3F_N^{in})|\le&2\|V_{12}R(t)\!\otimes\!R(t)\|\Tr_{\fH_N}(\cM_N(t)(P(t))F_N^{in})^\frac12
\\
&\times\left(\tfrac2N\|F_N(t)\|_1+\tfrac{N-2}N\Tr_{\fH_N}(\cM_N(t)(P(t))F_N^{in})\right)^\frac12\,.
\ea
$$
According to \eqref{V12J2R?}
$$
\|V_{12}R(t)\otimes R(t)\|=\|V_{12}(J_1R(t))(J_2R(t))\|\le\|V_{12}J_2R(t)\|\le\ell(t)\,,
$$
so that
\be\lb{BdT3}
|\Tr_{\fH_N}(T_3F_N^{in})|\le 2\ell(t)\left((1-\tfrac1N)\Tr_{\fH_N}(\cM_N(t)(P(t))F_N^{in})+\tfrac2N\|F_N^{in}\|_1\right)\,.
\ee
In particular
$$
\Tr_{\fH_N}(\pm iT_3F_N^{in})|\le 2\ell(t)\Tr_{\fH_N}\left(F_N^{in}\left((1-\tfrac1N)\cM_N(t)(P(t))+\tfrac1NI_{\fH_N}\right)\right)\,.
$$
Since this last inequality holds for each $F_N^{in}\in\cL(\fH_N)$ satisfying \eqref{SymDens}, we deduce from Lemma \ref{L->0} that
\be\lb{T3?}
\pm iT_3\le  2\ell(t)\left((1-\tfrac1N)\cM_N(t)(P(t))+\tfrac1NI_{\fH_N}\right)\,.
\ee


\section{Proofs of part (2) in Theorem \ref{T-OpIneq} and Corollary \ref{C-MF}}


\subsection{Proof of part (2) in Theorem \ref{T-OpIneq}}

Applying Lemma \ref{L-Decomp-cC} shows that 
$$
\pm i\cC(V,\cM_N(t)-\cR(t),\cM_N(t))(R(t))=\pm i(T_1+T_2+T_3+T_4)\,.
$$
With Lemma \ref{L-IneqT1234}, this shows that
$$
(\pm i\cC(V,\cM_N(t)-\cR(t),\cM_N(t))(R(t)))^*=\pm i\cC(V,\cM_N(t)-\cR(t),\cM_N(t))(R(t))
$$
and that
$$
\pm i\cC(V,\cM_N(t)-\cR(t),\cM_N(t))(R(t))\le 6\ell(t)\left(\cM_N(t)(P(t))+\tfrac2NI_{\fH_N}\right)\,.
$$

It remains to bound the function 
$$
\ell(t):=\|V^2\star|\psi(t,\cdot)|^2\|_{L^\infty(\bR^3)}^{1/2}\,.
$$
Since
$$
V=V_1+V_2\text{ with }V_1\in\cF L^1(\bR^3)\subset L^\infty(\bR^3)\text{ and }V_2\in L^2(\bR^3)
$$
one has
$$
\ba
0\le&V^2\star|\psi(t,\cdot)|^2\le 2V_1^2\star|\psi(t,\cdot)|^2+2V_2^2\star|\psi(t,\cdot)|^2
\\
\le&2\|V_1\|_{L^\infty(\bR^3)}^2\|\psi(t,\cdot)\|_{L^2(\bR^3)}^2+2\|V_2\|_{L^2(\bR^3)}^2\|\psi(t,\cdot)\|_{L^\infty(\bR^3)}^2\,.
\ea
$$
Minimizing $\|V_1\|_{L^\infty(\bR^3)}+\|V_2\|_{L^2(\bR^3)}$ over all possible decompositions of $V=V_1+V_2$ as above, one has
$$
\ba
0\le V^2\star|\psi(t,\cdot)|^2\le&4\|V\|^2_{L^2(\bR^3)+L^\infty(\bR^3)}\max(\|\psi(t,\cdot)\|_{L^2(\bR^3)}^2,\|\psi(t,\cdot)\|_{L^\infty(\bR^3)}^2)
\\
\le&4\|V\|^2_{L^2(\bR^3)+L^\infty(\bR^3)}\max(\|\psi(t,\cdot)\|_{L^2(\bR^3)}^2,C_S^2\|\psi(t,\cdot)\|_{H^2(\bR^3)}^2)
\\
\le&4\max(1,C_S)^2\|V\|^2_{L^2(\bR^3)+L^\infty(\bR^3)}\|\psi(t,\cdot)\|_{H^2(\bR^3)}^2=:L(t)^2\,.
\ea
$$


\subsection{Proof of Corollary \ref{C-MF}}


Pickl's functional defined in \cite{Pickl1} and recalled in formula \eqref{PicklQKlim} can be recast as
\be\lb{PicklFuncBis}
\a_N(t):=\Tr_{\fH}(F_{N:1}(t)P(t))
\ee
(see Definition 2.2 and formula (6) in \cite{Pickl1}), where $F_{N:1}(t)$ is the single-body reduced density operator deduced from
$$
F_N(t):=\cU_N(t)F_N^{in}\cU_N(t)^*\,,
$$
where $F_N^{in}\in\cL(\fH_N)$ satisfies \eqref{SymDens}. Specifically $F_{N:1}(t)$ is defined by the formula
$$
\Tr_\fH(F_{N:1}(t)A)=\Tr_{\fH_N}(F_N(t)J_1A)\,,\qquad\text{ for all }A\in\cL(\fH)\,.
$$
Since $F_N(t)$ satisfies \eqref{SymDens}, it holds
\be\lb{PteCharMN}
\ba
\Tr_\fH(F_{N:1}(t)A)=&\Tr_{\fH_N}(F_N(t)\cM_N^{in}A)
\\
=&\Tr_{\fH_N}(F_N^{in}\cM_N(t)A)\,,\qquad\text{ for all }A\in\cL(\fH)\,.
\ea
\ee
This is Lemma 2.3 in \cite{FGTPaul}, and the \textit{raison d'\^etre} of $\cM_N(t)$. Thus, formula \eqref{PicklQKlim} and \eqref{PicklFuncBis} are indeed equivalent.

\subsubsection{The Gronwall inequality for Pickl's functional}


One deduces from part (2) in Theorem \ref{T-OpIneq} that
$$
\ba
\cM_N(t)(P(t))=&\cM_N^{in}(P(0))+\tfrac1\hb\int_0^t-i\cC(V,\cM_N(s)-\cR(s),\cM_N(s))(R(s))ds
\\
\le&\cM_N^{in}(P(0))+\tfrac{6}\hb\int_0^tL(s)\left(\cM_N(s)(P(s))+\tfrac{2}{N}I_{\fH_N}\right)ds\,.
\ea
$$
This inequality implies that
$$
\ba
\Tr_{\fH_N}((F_N^{in})^\frac12\cM_N(t)(P(t))(F_N^{in})^\frac12)\le\Tr_{\fH_N}((F_N^{in})^\frac12\cM_N^{in}(P(0))(F_N^{in})^\frac12)
\\
+\tfrac{6}\hb\int_0^tL(s)\left(\Tr_{\fH_N}((F_N^{in})^\frac12\cM_N(s)(P(s))(F_N^{in})^\frac12)+\tfrac{2}{N}\Tr_{\fH_N}(F_N^{in})\right)ds&\,.
\ea
$$
Now, by cyclicity of the trace and \eqref{PteCharMN},
$$
\ba
\Tr_{\fH_N}((F_N^{in})^\frac12\cM_N(t)(P(t))(F_N^{in})^\frac12)=&\Tr_{\fH_N}(F_N^{in}\cM_N(t)(P(t)))
\\
=&\Tr_\fH(F_{N:1}(t)P(t))=\a_N(t)\,,
\ea
$$
so that, by Gronwall's inequality,
$$
\a_N(t)\le\a_N(0)\exp\left(\tfrac{6}{\hb}\int_0^tL(s)ds\right)+\tfrac{2}{N}\left(\exp\left(\tfrac{6}{\hb}\int_0^tL(s)ds\right)-1\right)\,.
$$
For instance, if $F_N^{in}=|\psi^{in}\ra\la\psi^{in}|^{\otimes N}$ with $\psi^{in}\in\fH$ and $\|\psi^{in}\|_\fH=1$, one has
$$
\a_N(0)=\Tr_{\fH_N}(R(0)^{\otimes N}\cM_N^{in}(P(0)))=\Tr_\fH(R(0)P(0))=0\,,
$$
so that
$$
\a_N(t)\le\frac{2}{N}\left(\exp\left(\tfrac{6}{\hb}\int_0^tL(s)ds\right)-1\right)=O\left(\frac1N\right)\,.
$$

\subsubsection{Pickl's functional and the trace norm}


How the inequality above implies the mean-field limit is explained by the following lemma, which recaps the results stated as Lemmas 2.1 and 2.2 in \cite{KPickl}, and whose proof is given below for the sake of keeping the present paper 
self-contained. 

If $F_N^{in}\in\cL(\fH_N)$ satisfies \eqref{SymDens}, for each $m=1,\ldots,N$, we denote by $F_{N:m}(t)$ the $m$-particle reduced density operator deduced from $F_N(t)=\cU_N(t)F_N^{in}\cU_N(t)^*$, i.e.
$$
\Tr_{\fH_m}(F_{N:m}(t)A_1\otimes\ldots\otimes A_m)=\Tr_{\fH_N}(F_N(t)(J_1A_1)\ldots(J_mA_m))
$$
for all $A_1,\ldots,A_m\in\cL(\fH)$.

\begin{Lem}
The Pickl functional satisfies the inequality
$$
\|F_{N:m}(t)-R(t)^{\otimes m}\|_1\le 2\sqrt{2m\Tr_{\fH}(F_{N:1}(t)P(t))}\,,\qquad m=1,\ldots,N\,.
$$
\end{Lem}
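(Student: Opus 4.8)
The plan is to reduce the estimate to a general trace-norm bound against a rank-one projection, and then to dominate the resulting scalar defect by Pickl's functional $\a_N(t):=\Tr_\fH(F_{N:1}(t)P(t))$ via a telescoping argument. \textbf{Step 1 (a rank-one trace-norm lemma).} I would first prove that for any density operator $\rho=\rho^*\ge 0$ on a Hilbert space $\cK$ and any rank-one orthogonal projection $\Pi=|\chi\ra\la\chi|$ on $\cK$ (with $\|\chi\|_\cK=1$), writing $q:=I_\cK-\Pi$, one has
$$\|\rho-\Pi\|_1\le 2\sqrt{2\,\Tr_\cK(\rho q)}\,.$$
The proof splits $\rho=\Pi\rho\Pi+\Pi\rho q+q\rho\Pi+q\rho q$. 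With $a:=\la\chi|\rho|\chi\ra\in[0,1]$ one has $\Pi\rho\Pi=a\Pi$, hence $\|\Pi\rho\Pi-\Pi\|_1=1-a$; also $\|q\rho q\|_1=\Tr_\cK(\rho q)=1-a$; and since $\Pi\rho q$ has rank $\le 1$, $\|\Pi\rho q\|_1=\|q\rho|\chi\ra\|_\cK\le\|q\rho^{1/2}\|\,\|\rho^{1/2}\chi\|_\cK\le\sqrt{1-a}\,\sqrt a$ (using $\|q\rho^{1/2}\|^2=\|q\rho q\|\le\Tr_\cK(q\rho q)=1-a$), with the same bound for $\|q\rho\Pi\|_1$. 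Adding, $\|\rho-\Pi\|_1\le 2(1-a)+2\sqrt{a(1-a)}=2\sqrt{1-a}\,(\sqrt{1-a}+\sqrt a)\le 2\sqrt{2}\,\sqrt{1-a}$, where the final inequality is Cauchy--Schwarz, $\sqrt a+\sqrt{1-a}\le\sqrt2$. Since $1-a=\Tr_\cK(\rho q)$ this is the claim.

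\textbf{Step 2 (apply on $\fH_m$).} Because $\|\psi(t,\cdot)\|_\fH=1$, $R(t)^{\otimes m}=|\psi(t,\cdot)^{\otimes m}\ra\la\psi(t,\cdot)^{\otimes m}|$ is a rank-one orthogonal projection on $\fH_m$, and $F_{N:m}(t)$ is a density operator there, so Step 1 gives
$$\|F_{N:m}(t)-R(t)^{\otimes m}\|_1\le 2\sqrt{2\,\Tr_{\fH_m}\!\bigl(F_{N:m}(t)(I_{\fH_m}-R(t)^{\otimes m})\bigr)}\,.$$
By the defining identity of $F_{N:m}(t)$ (with $A_1=\cdots=A_m=R(t)$) the quantity under the square root equals $\Tr_{\fH_N}\!\bigl(F_N(t)(I_{\fH_N}-(J_1R(t))\cdots(J_mR(t)))\bigr)$, and I would expand this by the telescoping identity
$$I_{\fH_N}-\prod_{k=1}^m J_kR(t)=\sum_{k=1}^m\Bigl(\prod_{j=1}^{k-1}J_jR(t)\Bigr)J_kP(t)\,,$$
which follows by iterating $I-AB=(I-A)+A(I-B)$ and using $I-J_kR(t)=J_kP(t)$.

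\textbf{Step 3 (term-by-term bound by $\a_N(t)$).} Each $Q_k:=\prod_{j<k}J_jR(t)$ is a product of pairwise commuting orthogonal projections acting on distinct tensor slots, hence an orthogonal projection with $0\le Q_k\le I_{\fH_N}$ that commutes with $J_kP(t)$; therefore $0\le Q_kJ_kP(t)=(J_kP(t))Q_k(J_kP(t))\le J_kP(t)$. Since $F_N(t)=\cU_N(t)F_N^{in}\cU_N(t)^*\ge 0$, and since $F_N(t)$ is permutation-symmetric (because $\cH_N$ commutes with $U_\si$ for every $\si\in\fS_N$ while $F_N^{in}$ satisfies \eqref{SymDens}), one gets for each $k=1,\dots,m$
$$0\le\Tr_{\fH_N}\!\bigl(F_N(t)Q_kJ_kP(t)\bigr)\le\Tr_{\fH_N}\!\bigl(F_N(t)J_kP(t)\bigr)=\Tr_{\fH_N}\!\bigl(F_N(t)J_1P(t)\bigr)=\Tr_\fH\!\bigl(F_{N:1}(t)P(t)\bigr)\,,$$
the last equalities following from the permutation symmetry of $F_N(t)$ and the definition of $F_{N:1}(t)$. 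Summing over $k$ bounds the defect by $m\,\Tr_\fH(F_{N:1}(t)P(t))$, and inserting this into the inequality of Step 2 yields the stated bound.

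The argument is essentially routine; the only places that need a little care are Step 1, where the sharp constant $2\sqrt2$ (rather than a cruder $4$) relies on the elementary estimate $\sqrt a+\sqrt{1-a}\le\sqrt2$, and Step 3, where one must keep track of the positivity and the mutual commutation of the projections $Q_k$ and $J_kP(t)$, as well as of the permutation invariance of $F_N(t)$ needed to replace $J_kP(t)$ by $J_1P(t)$ under the trace.
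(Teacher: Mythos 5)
Your proposal is correct and reaches the same bound, but the route to the trace-norm estimate (your Step~1) is genuinely different from the paper's. The paper relies on the Seiringer observation: since $R(t)^{\otimes m}$ is rank one, the negative spectral subspace of $F_{N:m}(t)-R(t)^{\otimes m}$ is at most one-dimensional, so there is a single negative eigenvalue $\lambda_0$; combined with $\Tr(F_{N:m}-R^{\otimes m})=0$ this gives $\|F_{N:m}-R^{\otimes m}\|_1=2|\lambda_0|\le 2\|F_{N:m}-R^{\otimes m}\|_2$, and the Hilbert--Schmidt norm is then bounded by $2\Tr(F_{N:m}(I-R^{\otimes m}))$ using $\Tr(F_{N:m}^2)\le 1$. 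You instead decompose $\rho=\Pi\rho\Pi+\Pi\rho q+q\rho\Pi+q\rho q$, bound each block's trace norm directly (the off-diagonal blocks being rank one), and recover the same constant $2\sqrt 2$ via $\sqrt a+\sqrt{1-a}\le\sqrt 2$. Your argument is marginally more elementary (no spectral decomposition of the negative part, no detour through the $\cL^2$ norm), at the cost of a slightly longer computation. Your Step~3 is the same telescoping estimate as the paper's, phrased as an explicit operator identity $I-\prod_k J_kR=\sum_k(\prod_{j<k}J_jR)J_kP$ and the positivity/commutation of the factors, whereas the paper runs an induction on $m$ built on the operator inequality $R\otimes I^{\otimes(m-1)}-R^{\otimes m}\le I^{\otimes m}-I\otimes R^{\otimes(m-1)}$; the two formulations are equivalent, and your use of permutation invariance of $F_N(t)$ to replace $J_kP(t)$ by $J_1P(t)$ under the trace is exactly what the paper's invocation of~\eqref{SymDens} accomplishes.
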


\begin{proof}
Call $\cP_-$ the spectral projection on the direct sums of eigenspaces of the trace-class operator $F_{N:m}(t)-R(t)^{\otimes m}$ corresponding to negative eigenvalues. Then the self-adjoint operator
$$
\cP_-F_{N:m}(t)\cP_--\cP_-R(t)^{\otimes m}\cP_-=\cP_-F_{N:m}(t)\cP_--|\cP_-\psi(t,\cdot)^{\otimes m}\ra\la\cP_-\psi(t,\cdot)^{\otimes m}|
$$
must have only negative eigenvalues by definition of $\cP_-$, and is obviously nonnegative on the orthogonal complement of $\cP_-\psi(t,\cdot)^{\otimes m}$ in the range of $\cP_-$. By definition of $\cP_-$, this orthogonal complement 
must be $\{0\}$. Hence $\cP_-$ is a rank-one projection, so that $F_{N:m}(t)-R(t)^{\otimes m}$ has only one negative eigenvalue $\lambda_0$, with all its other eigenvalues $\l_1,\l_2,\ldots$ being nonnegative. Since
$$
\Tr_{\fH_m}(F_{N:m}(t)-R(t)^{\otimes m})=\sum_{j\ge 1}\lambda_j+\lambda_0=0\,,
$$
one has\footnote{This observation is attributed to Seiringer on p. 35 in \cite{RodSchlein}.}
$$
\ba
\|F_{N:m}(t)-R(t)^{\otimes m}\|_1=\sum_{j\ge 1}\lambda_j+|\lambda_0|=2|\lambda_0|=&2\|F_{N:m}(t)-R(t)^{\otimes m}\|
\\
\le& 2\|F_{N:m}(t)-R(t)^{\otimes m}\|_2\,.
\ea
$$
Now $F_{N:m}(t)$ is self-adjoint, and therefore
$$
\ba
\|F_{N:m}(t)-R(t)^{\otimes m}\|_2^2=&\Tr_{\fH_m}((F_{N:m}(t)-R(t)^{\otimes m})^2)
\\
=&\Tr_{\fH_m}(F_{N:m}(t)^2+R(t)^{\otimes m})
\\
&-\Tr_{\fH_m}(F_{N:m}(t)R(t)^{\otimes m}+R(t)^{\otimes m}F_{N:m}(t))
\\
\le& 2-2\Tr_{\fH_m}(R(t)^{\otimes m}F_{N:m}(t)R(t)^{\otimes m})
\\
=&2\Tr_{\fH_m}(F_{N:m}(t)(I_\fH^{\otimes m}-R(t)^{\otimes m}))\,.
\ea
$$
Hence
$$
\|F_{N:m}(t)-R(t)^{\otimes m}\|_1\le 2\sqrt{2\Tr_{\fH_m}(F_{N:m}(t)(I_\fH^{\otimes m}-R(t)^{\otimes m}))}\,.
$$

Since $R(t)=|\psi(t,\cdot)\ra\la\psi(t,\cdot)|$ is a self-adjoint projection
$$
\ba
R(t)\otimes I_\fH^{\otimes (m-1)}-R(t)^{\otimes m}
\\
=
(I_\fH^{\otimes m}-I_\fH\otimes R(t)^{\otimes(m-1)})R(t)\otimes I_\fH^{\otimes(m-1)}(I_\fH^{\otimes m}-I_\fH\otimes R(t)^{\otimes(m-1)})
\\
\le(I_\fH^{\otimes m}-I_\fH\otimes R(t)^{\otimes(m-1)})^2=(I_\fH^{\otimes m}-I_\fH\otimes R(t)^{\otimes(m-1)})
\ea
$$
so that
$$
\ba
\Tr_{\fH}(F_{N:1}(t)R(t))-\Tr_{\fH_m}(F_{N:m}(t)R(t)^{\otimes m})
\\
=\Tr_{\fH_m}(F_{N:m}(t)(R(t)\otimes I_\fH^{\otimes (m-1)}-R(t)^{\otimes m}))
\\
\le\Tr_{\fH_m}(F_{N:m}(t)(I_\fH^{\otimes m}-I_\fH\otimes R(t)^{\otimes(m-1)}))
\\
=1-\Tr_{\fH_{m-1}}(F_{N:m-1}(t)R(t)^{\otimes(m-1)})&\,.
\ea
$$

Since $F_{N}^{in}$ satisfies \eqref{SymDens}, the reduced $m$-particle operator $F_{N:m}(t)\in\cL(\fH_m)$ also satisfies \eqref{SymDens} (with $N$ replaced by $m$), and hence
$$
\ba
\Tr_{\fH_m}(F_{N:m}(t)(I_\fH^{\otimes m}-R(t)^{\otimes m}))\le& 1-\Tr_{\fH}(F_{N:1}(t)R(t))
\\
&+1-\Tr_{\fH_{m-1}}(F_{N:m-1}(t)R(t)^{\otimes(m-1)})
\\
\le&m(1-\Tr_{\fH}(F_{N:1}(t)R(t)))
\\
=&m\Tr_\fH(F_{N:1}(t)P(t))\,,
\ea
$$
by induction, which implies the inequality in the lemma.
\end{proof}

\smallskip
With this lemma, the consequence of the Gronwall inequality above implies that, under the assumptions of Corollary \ref{C-MF},
$$
\|F_{N:m}(t)-R(t)^{\otimes m}\|_1\le\sqrt{8m\a_N(t)}\le 4\sqrt{\frac{m}N}\exp\left(\tfrac{3}\hb\int_0^tL(s)ds\right)\,.
$$
This completes the proof of Corollary \ref{C-MF}.





\end{document}